\PassOptionsToPackage{unicode}{hyperref}
\PassOptionsToPackage{hyphens}{url}
\PassOptionsToPackage{dvipsnames,svgnames,x11names}{xcolor}
\documentclass[
  12pt]{article}

\usepackage{amsmath,amssymb}
\usepackage{iftex}
\ifPDFTeX
  \usepackage[T1]{fontenc}
  \usepackage[utf8]{inputenc}
  \usepackage{textcomp} 
\else 
  \usepackage{unicode-math}
  \defaultfontfeatures{Scale=MatchLowercase}
  \defaultfontfeatures[\rmfamily]{Ligatures=TeX,Scale=1}
\fi
\usepackage{lmodern}
\ifPDFTeX
  \usepackage{bm}
\else
  \newcommand{\bm}[1]{\symbfit{#1}}
\fi
\ifPDFTeX\else  
\fi
\IfFileExists{upquote.sty}{\usepackage{upquote}}{}
\IfFileExists{microtype.sty}{
  \usepackage[]{microtype}
  \UseMicrotypeSet[protrusion]{basicmath} 
}{}
\makeatletter
\@ifundefined{KOMAClassName}{
  \IfFileExists{parskip.sty}{%
    \usepackage{parskip}
  }{
    \setlength{\parindent}{0pt}
    \setlength{\parskip}{6pt plus 2pt minus 1pt}}
}{
  \KOMAoptions{parskip=half}}
\makeatother
\usepackage{xcolor}
\makeatletter
\ifx\paragraph\undefined\else
  \let\oldparagraph\paragraph
  \renewcommand{\paragraph}{
    \@ifstar
      \xxxParagraphStar
      \xxxParagraphNoStar
  }
  \newcommand{\xxxParagraphStar}[1]{\oldparagraph*{#1}\mbox{}}
  \newcommand{\xxxParagraphNoStar}[1]{\oldparagraph{#1}\mbox{}}
\fi
\ifx\subparagraph\undefined\else
  \let\oldsubparagraph\subparagraph
  \renewcommand{\subparagraph}{
    \@ifstar
      \xxxSubParagraphStar
      \xxxSubParagraphNoStar
  }
  \newcommand{\xxxSubParagraphStar}[1]{\oldsubparagraph*{#1}\mbox{}}
  \newcommand{\xxxSubParagraphNoStar}[1]{\oldsubparagraph{#1}\mbox{}}
\fi
\makeatother

\usepackage{longtable,booktabs,array}
\usepackage{calc} 
\usepackage{etoolbox}
\makeatletter
\patchcmd\longtable{\par}{\if@noskipsec\mbox{}\fi\par}{}{}
\makeatother
\IfFileExists{footnotehyper.sty}{\usepackage{footnotehyper}}{\usepackage{footnote}}
\makesavenoteenv{longtable}
\usepackage{graphicx}
\makeatletter
\def\maxwidth{\ifdim\Gin@nat@width>\linewidth\linewidth\else\Gin@nat@width\fi}
\def\maxheight{\ifdim\Gin@nat@height>\textheight\textheight\else\Gin@nat@height\fi}
\makeatother
\setkeys{Gin}{width=\maxwidth,height=\maxheight,keepaspectratio}
\makeatletter
\def\fps@figure{htbp}
\makeatother

\makeatletter
\@ifpackageloaded{caption}{}{\usepackage{caption}}
\AtBeginDocument{%
\ifdefined\contentsname
  \renewcommand*\contentsname{Table of contents}
\else
  \newcommand\contentsname{Table of contents}
\fi
\ifdefined\listfigurename
  \renewcommand*\listfigurename{List of Figures}
\else
  \newcommand\listfigurename{List of Figures}
\fi
\ifdefined\listtablename
  \renewcommand*\listtablename{List of Tables}
\else
  \newcommand\listtablename{List of Tables}
\fi
\ifdefined\figurename
  \renewcommand*\figurename{Figure}
\else
  \newcommand\figurename{Figure}
\fi
\ifdefined\tablename
  \renewcommand*\tablename{Table}
\else
  \newcommand\tablename{Table}
\fi
}
\@ifpackageloaded{float}{}{\usepackage{float}}
\floatstyle{ruled}
\@ifundefined{c@chapter}{\newfloat{codelisting}{h}{lop}}{\newfloat{codelisting}{h}{lop}[chapter]}
\floatname{codelisting}{Listing}

\makeatother
\makeatletter
\@ifpackageloaded{caption}{}{\usepackage{caption}}
\@ifpackageloaded{subcaption}{}{\usepackage{subcaption}}
\makeatother

\ifLuaTeX
  \usepackage{selnolig}  
\fi
\usepackage[]{natbib}
\usepackage{bookmark}

\IfFileExists{xurl.sty}{\usepackage{xurl}}{} 
\hypersetup{
  pdftitle={AgentCue: Statistical Summaries for Guiding Decentralized Multi-Agent Self-Organization},
  pdfauthor={},
  pdfkeywords={},
  colorlinks=true,
  linkcolor={blue},
  filecolor={Maroon},
  citecolor={Blue},
  urlcolor={Blue},
  pdfcreator={LaTeX via pandoc}}

\usepackage{tabularx,placeins}
\usepackage{tcolorbox}
\tcbuselibrary{breakable,skins}
\AtBeginEnvironment{figure}{\setkeys{Gin}{height=.58\textheight,keepaspectratio}}

\date{}
\providecommand{\harvardurl}[1]{}
\renewcommand{\harvardurl}[1]{\url{#1}}
\newcommand{\manuscripttitle}{AgentCue: Statistical Summaries for Guiding Decentralized Multi-Agent Self-Organization}
\newcommand{\keywords}{}

\newcommand{\anon}{1}

\usepackage{amsthm}
\newtheorem{theorem}{Theorem}
\newtheorem{remark}{Remark}
\newtheorem{proposition}{Proposition}
\newtheorem{corollary}{Corollary}
\newtheorem{lemma}{Lemma}

\theoremstyle{definition}
\newtheorem{assumption}{Assumption}
\newcommand{\E}{\mathbb E}
\newcommand{\PP}{\mathbb P}
\newcommand{\RR}{\mathbb R}
\newcommand{\FF}{\mathcal F}
\newcommand{\one}{\bm{1}}
\newcommand{\rank}{\operatorname{rank}}
\newcommand{\op}{\mathrm{op}}
\newcommand{\fro}{\mathrm F}
\newcommand{\norm}[1]{\lVert #1\rVert}
\newcommand{\dist}{\operatorname{dist}_{n,m}}

\begin{document}
\raggedbottom

\def\spacingset#1{\renewcommand{\baselinestretch}%
{#1}\small\normalsize} \spacingset{1}


\if1\anon
{
  \hypersetup{pdfauthor={Guojun Zhu; Xinyou Chen; Doudou Zhou}}
  \title{\bf \manuscripttitle}
\author{Guojun Zhu\textsuperscript{1,2}, Xinyou Chen\textsuperscript{3}, Kejian Zhang\textsuperscript{2},\\
Sanguo Zhang\textsuperscript{1}, Doudou Zhou\textsuperscript{2,*}\\
\small \textsuperscript{1}School of Mathematical Sciences, University of Chinese Academy of Sciences\\
\small \textsuperscript{2}Department of Statistics and Data Science, National University of Singapore\\
\small \textsuperscript{3}Department of Statistics and Data Science, Tsinghua University\\[0.3em]
\small \textsuperscript{*}Corresponding author: ddzhou@nus.edu.sg}
  \maketitle
} \fi

\if0\anon
{
  \hypersetup{pdfauthor={}}
  \bigskip
  \bigskip
  \bigskip
  \begin{center}
    {\LARGE\bf \manuscripttitle}
\end{center}
  \medskip
} \fi

\bigskip
\begin{abstract}
Multi-agent systems (MAS) combine multiple agents to solve complex tasks, and recent MAS scale to more than a thousand agents. At this scale, decentralized MAS remove the central orchestrator that assigns sub-tasks, and the agents, which we call workers, self-organize through their own choices of sub-tasks and collaborators on a shared board. As workers keep adding participation and collaboration records, however, the information explodes: workers spend more and more effort on reading the records of other workers, and self-organization loses efficiency. In the statistical view, we represent the execution history of an MAS by two continually updated networks, a collaboration network among workers and a participation network between workers and task types. We then propose AgentCue, a method built on a new latent space model that places workers and task types in a shared latent space. Each time the networks are updated, the estimated parameters of the latent space model form a compact and interpretable statistical summary, and workers use the statistical summary to choose sub-tasks and collaborators without repeated searches of the shared board. We establish the identifiability of the latent space model and non-asymptotic error bounds for the estimates under network increments, and we show that a fixed number of gradient updates after each network update suffices for the same error rate. 
Experiments on LongBench v2 Multi-Doc QA show that AgentCue achieves competitive effectiveness, efficiency, and interpretability.
\end{abstract}

\noindent%
{\it Keywords: Multi-agent systems; Latent space model; Organization learning; Large language model; Self-organization.} \keywords
\vfill

\newpage
\spacingset{1.8} 

\section{Introduction}

Over the past few years, foundation models, that is, large pretrained models such as large language models, have advanced rapidly, and so has the software around these models, which agent developers call a harness. A harness runs a foundation model in a loop: the harness maintains the context of the foundation model, executes the tools that the foundation model calls, and feeds the results back to the foundation model. A foundation model together with a harness forms an agent, and agents are now in wide use \citep{yao2023react}. 
On complex tasks, however, a single agent can only explore step by step: the agent must first decompose the task and then spend a long time on sub-tasks such as coding, inference, and review, one after another \citep{zhan2026agensh}. Moreover, on a complex task, a single agent may even lose focus on local details and become highly prone to errors, and the decomposition of the task may itself be flawed \citep{yang2026programbench,huang2026deepswe}. 
Therefore, agent developers equip multiple agents with an auxiliary system for collaboration, and the result is a multi-agent system (MAS): several agents work on different sub-tasks at the same time, and one or more agents then integrate their results. Such collaboration greatly improves performance and saves time \citep{zhan2026agensh}, and MAS have become a mainstream research paradigm for complex tasks \citep{wu2024autogen,li2026beyondmoreagents}.

To meet increasingly complex real-world tasks, MAS continue to grow in scale. Well-known MAS such as CAMEL, ChatDev, and LAMBDA use only two to seven agents for role-playing task solving, software development, and data analysis, respectively \citep{li2023camel,qian2024chatdev,sun2026lambda}, whereas the reconstruction of a complex codebase within a limited time can now call for as many as 1,024 agents \citep{zhan2026agensh}. 
However, most current MAS rely on a central orchestrator, that is, a main agent that decomposes the task and assigns sub-tasks to the other agents, which we call workers. The capacity of the central orchestrator to allocate sub-tasks and coordinate workers often limits the scalability of the whole MAS \citep{zhan2026agensh}. 
Therefore, decentralized MAS have emerged as a core research paradigm for large-scale MAS \citep{yang2025agentnet,li2026beyondmoreagents}. Decentralized MAS replace the central orchestrator with environment components such as a shared workspace for sub-tasks in progress and finished sub-tasks, shared context for reusable findings, a message interface for communication, and a task queue for open sub-tasks \citep{mao2026delm,zhan2026agensh}, and we refer to these components collectively as the shared board. 
Decentralized MAS raise many research questions, and the most important question asks how a decentralized MAS achieves self-organization through the shared board. Here, self-organization means that the division of labor and collaboration arise from the local choices of workers instead of central assignment, where a local choice is a choice of a sub-task or a collaborator that a worker makes on its own. Studies such as DeLM \citep{mao2026delm}, Proxifield \citep{tambwekar2026proxifield}, and recent analyses of agent systems \citep{kim2025scaling,li2026beyondmoreagents} agree that self-organization without any external design suffers from low efficiency, redundant work, and error propagation, where an external design refers to a mechanism that agent developers add to shape the interactions among workers.
Figure~\ref{fig:motivation} illustrates how the difficulties of self-organization grow with the number of workers. There is therefore an urgent need for a suitable external design that helps workers choose their next sub-tasks and collaborators efficiently.

\begin{figure}[!ht]
    \centering
    \includegraphics[width=\textwidth]{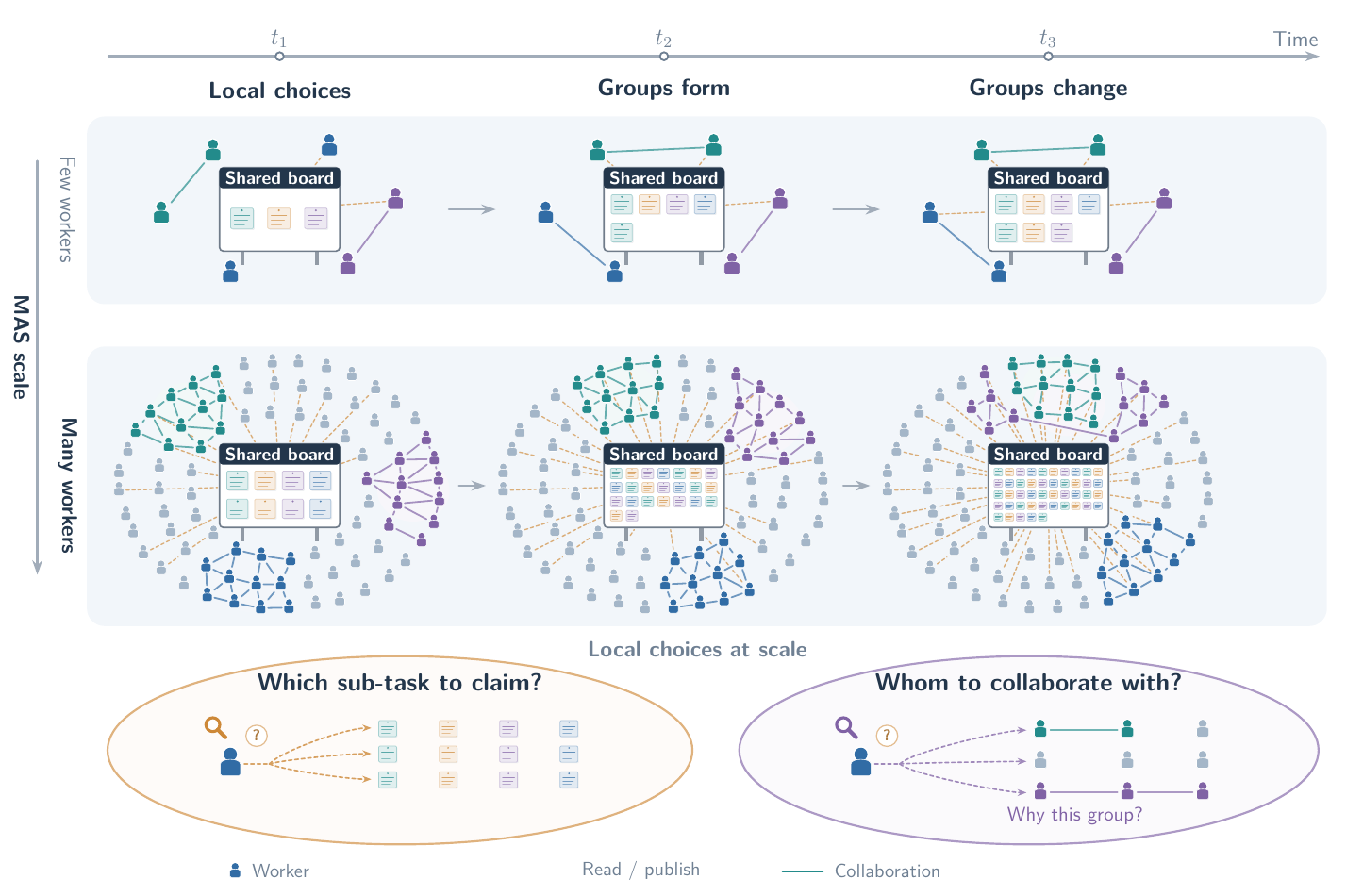}
    \caption{Self-organization in a decentralized MAS. Workers read from and publish to the shared board (dashed lines) and collaborate with one another (solid lines), and groups of collaborators form and change over time through local choices. With few workers (upper panel), the shared board holds few records. With many workers (middle panel), records accumulate quickly, and each local choice (lower panel), which sub-task to claim and whom to collaborate with, requires a search through many records.}
    \label{fig:motivation}
\end{figure}

Several recent studies introduce such external designs. AgentNet \citep{yang2025agentnet} equips each agent with a router, which decides whether the agent executes a task, splits the task, or forwards the task to the agent whose capability best matches the task. Decentralized Expertise-Aware Load Serving (DEALS) \citep{huang2026deals} lets each agent forward a task to a neighbor with fewer unfinished tasks or a higher success rate on tasks of the same category. Economy of Minds (EoM) \citep{qi2026economy} lets agents bid in auctions for the right to act and pay one another, and the wealth of each agent then decides whether the agent persists and mutates or goes bankrupt. Proxifield \citep{tambwekar2026proxifield} builds a sparse communication graph in every round through deterministic routing rules. Self-Organizing Agent Teams (SAT) \citep{pappu2026sat} learns a bank of collaboration strategies offline, which specify roles, conversation phases, participation, and information flow, and applies the bank unchanged to new problems. Each design presets a mechanism of interaction, within which the agents still make their own local choices.

However, AgentNet, DEALS, EoM, Proxifield, and SAT were developed and evaluated in small-scale MAS, and a new problem takes the stage in large-scale MAS. The new problem lies in the participation and collaboration records: participation records state the task types, that is, pre-specified categories of sub-tasks, in which each worker works, and collaboration records state whose published results, that is, results that workers publish on the shared board, each worker uses or reviews. In a large-scale MAS, many workers keep adding participation and collaboration records, and the information explodes: workers spend more and more effort on reading the records of other workers and choose their sub-tasks more and more slowly, so the efficiency of self-organization declines. 
Agensh \citep{zhan2026agensh} shows the burden at scale: with 1,024 workers, every new entry in the shared context goes to all other workers, while a direct read of the shared context returns only the 2,000 latest entries. The more entries the workers write, the more time a worker must spend on keyword searches to trace entries beyond the 2,000 latest entries. The gain from more workers also shrinks: in the codebase reconstruction of Agensh, the final test-pass rate rises by 17.05\% from 1 to 128 workers but by only 4.12\% from 128 to 1,024 workers.
The resulting dilemma, an information explosion without a fast and interpretable summary, is exactly where statistics excels. Picture the workers as a large crowd of ``cowboys'': each cowboy rides a foundation model as a horse and steers the horse through a harness, while a vast list of sub-tasks and potential collaborators awaits selection. What the cowboys need is a statistical estimation procedure\footnote{Coincidentally, one of the most influential estimation procedures in statistics shares its name with the cowboy's rope: the {\it Lasso} \citep{tibshirani1996lasso}, which shrinks and selects among many candidate variables.} that summarizes the information. With such a procedure, the cowboys no longer depend on repeated searches, that is, retrievals of the same unchanged records of the shared board by many workers: the cowboys quickly screen the sub-tasks and potential collaborators, select the sub-tasks and collaborators that fit them best, and reach an efficient self-organization.

In the statistical view, we first characterize the data. A preprocessing step condenses the massive execution history of an MAS into two networks. The first network is a collaboration network, whose nodes are workers and whose edges count the collaborations between two workers, that is, the times one worker uses or reviews a published result of the other worker. The second network is a participation network, a worker--task-type bipartite network whose nodes are the workers and the task types and whose edges count the times each worker performs each task type. 
Latent space models offer a powerful tool for such network data \citep{hoff2002latent,ma2020universal}. A latent space model represents each node by a latent position, a low-dimensional vector to be estimated, and nodes with closer latent positions tend to have closer relations. A rich literature has extended latent space models to various network data. 
For example, \citet{he2025semiparametric} analyze longitudinal networks of interaction counts with timestamps, such as bike trips between stations, and \citet{fan2024generalized} jointly analyze a one-mode network, such as friendships among users, and a two-mode network, such as reviews from users to businesses. 
For a new form of data, which we call continually updated networks because the edge counts accumulate as the MAS runs, we propose a new latent space model, to our knowledge the first latent space model for such networks, and build our method {\bf AgentCue} on the latent space model. In AgentCue, at each checkpoint, that is, each time the networks are updated, the MAS estimates the parameters of the latent space model and publishes the estimates as a statistical summary. Workers then locate suitable collaborators and task types in the latent space, and the next stage of self-organization emerges in an orderly and efficient manner.

Our work makes a four-fold contribution. 
(i). For the problem, we identify a new problem for agents: in large-scale decentralized MAS, the information explosion of participation and collaboration records slows down self-organization, and the problem also provides a stage on which statistical methods can be built. 
(ii). For methodology, we propose AgentCue, which combines a new latent space model for continually updated networks with the MAS: the MAS generates the networks, and the statistical summaries of the latent space model guide the MAS. 
(iii). For theory, we establish the identifiability of the latent space model and non-asymptotic error bounds for its estimates when the counts added at each checkpoint depend on the earlier history of the MAS, and we show that a fixed number of gradient updates per checkpoint attains the statistical error rate.
(iv). For experiments, we evaluate AgentCue on the Multi-Doc QA subset of LongBench v2 \citep{bai2024longbenchv2}, and the results show that AgentCue achieves competitive effectiveness, efficiency, and interpretability.

\section{Background and Related Work}

Large language models support language understanding, reasoning, and code generation. With tools and environmental feedback, agents can plan and act \citep{yao2023react}. MAS let agents divide work and exchange information \citep{wu2024autogen}. Recent decentralized MAS extend collaboration to many workers, who choose sub-tasks and coordinate through a shared board \citep{zhan2026agensh}.

\noindent{\bf Multi-Agent Collaboration.}
Several systems organize collaboration through role assignment and structured communication. {\it CAMEL} pairs a user agent with an assistant agent to solve tasks through role-playing \citep{li2023camel}, while {\it AutoGen} supports conversations among agents with configurable roles and tools \citep{wu2024autogen}. {\it ChatDev} and {\it MetaGPT} organize software development around specialized roles and workflows \citep{qian2024chatdev,hong2024metagpt}. {\it AgentCoder} separates programming, test design, and test execution across three agents \citep{huang2023agentcoder}. For data analysis, {\it LAMBDA} pairs a programmer with an inspector for code correction \citep{sun2026lambda}. AgentCoder and LAMBDA illustrate collaboration in single-digit teams. Our concern is how large-scale MAS can use accumulated participation and collaboration records to guide subsequent local choices.

\noindent{\bf Decentralized Multi-Agent Systems.}
Decentralized MAS allow workers to choose sub-tasks and collaborators through shared context and communication \citep{li2026beyondmoreagents}. {\it DeLM} combines an asynchronous task queue with verified shared context, so that workers claim sub-tasks and build on earlier findings \citep{mao2026delm}. {\it AutoScientists} lets workers form teams around hypotheses and share successes and failures \citep{gao2026autoscientists}. {\it Agensh}, the closest setting to ours, scales to 1,024 workers in the reconstruction of the pandoc codebase \citep{zhan2026agensh}. The workers of Agensh claim, execute, verify, and merge work through a shared workspace; reported execution logs illustrate reviewer selection from prior experience. As the number of workers grows and participation and collaboration records accumulate, workers face more potential collaborators and more records to search. The statistical challenge is to jointly summarize participation across task types and collaboration through the use or review of published results, while accounting for overall activity.

\noindent{\bf Self-Organization Learning.}
Several approaches adapt collaboration through experience. {\it MorphAgent} updates textual profiles \citep{lu2024morphagent}, while {\it SAT} learns strategies for roles, participation, and information flow \citep{pappu2026sat}. Auctions and accumulated payments govern the allocation of the right to act, agent mutation, and replacement in {\it EoM} \citep{qi2026economy}; {\it AgentLance} adapts bidding strategies while an allocator selects workers using bids and reputation \citep{liu2026markets}. {\it TheBotCompany} uses managers to hire, assign, and retire workers \citep{lyu2026thebotcompany}; {\it LATTE} lets workers propose task-graph changes coordinated by a lead agent \citep{mieczkowski2026latte}. In robotics, {\it SwarmCF} learns robot and task vectors from outcomes for decentralized allocation \citep{apartsin2026swarmcf}. We study the cumulative participation and collaboration records that the local choices of workers produce.

\noindent{\bf Graph-Guided Multi-Agent Collaboration.}
Graph-based methods select participants or communication links. {\it DyLAN} uses contribution scores to select participants \citep{liu2024dylan}, {\it G-Designer} learns task-dependent communication graphs \citep{zhang2025gdesigner}, and {\it SelfOrg} rebuilds graphs from evolving answers \citep{tastan2025selforg}. {\it AgentNet} combines local routing and retrieval-based memory with adaptive connections \citep{yang2025agentnet}, and {\it DEALS} routes tasks between neighbors by weighing unfinished tasks against success rates \citep{huang2026deals}; {\it MeshHeal} uses peer verification and routing adjustments to detect degradation and reintegrate recovered agents \citep{chen2026meshheal}. {\it DyTopo} and {\it Proxifield} construct communication links through semantic matching \citep{luo2026dytopo,tambwekar2026proxifield}. In our setting, the collaboration network records the use and review of published results.

\noindent{\bf Statistical Studies of Human Collaboration and Latent Space Models.}
Statistical models of human collaboration analyze collaboration records. {\it Relational event models} connect interaction rates to past activity, and {\it relational hyperevent models} describe group interactions such as coauthorship \citep{butts2008relational,perry2013point,lerner2023coauthor}. Latent space models describe network relationships through low-dimensional latent positions, with methods for large networks \citep{ma2020universal}, temporal variation, multiple layers, and hierarchical node categories \citep{durante2014dynamic,loyal2023eigenmodel,pu2026tree}. \citet{fan2024generalized} jointly model one-mode and two-mode networks through shared latent positions and network-specific node effects, that is, node-level parameters for overall activity. We apply the construction of \citet{fan2024generalized} to the collaboration network and the participation network and account for overall activity in each network. We return the statistical summaries to workers to inform their subsequent local choices.

\section{Methodology}
\subsection{Network Data Collection}
\label{sec:Network-Data-Collection}

We collect network data from the execution history of an Agensh-based MAS \citep{zhan2026agensh}. Workers pursue a common goal by choosing their own sub-tasks and collaborators. The shared board makes sub-tasks in progress, published results, and accumulated findings available to all workers. Workers use the shared board to identify useful work, publish their contributions, and continue from their own results or the results of other workers. Self-organization therefore leaves records of who performs each piece of work and how published results are used or reviewed.

A piece of published work, which we call a work, is the basic observation unit. The record of a work, which we call a work record, identifies the work, the worker who produced the work, the publication order, and the exact versions of the published results that the work uses or reviews. The work records thus contain the participation and collaboration records of Section~1. Preprocessing removes duplicate work records and assigns task types to each work according to the published results of the work. For network analysis, we pre-specify task types as categories of sub-tasks for the common goal. For example, in cross-document reasoning, we define four task types: \emph{extract}, \emph{compare}, \emph{inference}, and \emph{review}. A work can belong to multiple task types when the work delivers several independently reusable results.

We index the resulting work records by $r$, the $n$ workers by $i,\ell\in[n]$, and the $m$ task types by $k\in[m]$, where $[n]=\{1,\ldots,n\}$. Each work record $r$ produces network increments $\Delta\bm{A}_r\in\{0,1\}^{n\times n}$ and $\Delta\bm{B}_r\in\{0,1\}^{n\times m}$. The entry $(\Delta\bm{A}_r)_{i\ell}$ is one when the work of work record $r$ is produced by worker $i$ and uses or reviews a published result of worker $\ell$, or vice versa. We aggregate collaboration over worker--worker pairs, so $\Delta\bm{A}_r$ is symmetric with zero diagonal. Each work contributes at most once to a worker--worker pair, even if the work uses several results of the same worker. The entry $(\Delta\bm{B}_r)_{ik}$ is one when worker $i$ produced the work of work record $r$ and the work belongs to task type $k$. Multiple collaborators or task types can therefore update several entries, while each work contributes at most once to each worker--task-type pair. The network increments record participation and collaboration in published work, whether or not the work is correct.

A worker acts in sessions, that is, separate activations of the worker. We place checkpoints by the number of completed sessions of the MAS instead of elapsed time, because the elapsed time of a session differs greatly across the host machines that run the MAS. At checkpoint $t\in[T]$, where $T$ is the number of checkpoints, let $S_t$ be the number of work records published up to checkpoint $t$. The cumulative network observations are:
\begin{equation*}
    \bm{A}^{(t)}=\sum_{r=1}^{S_t}\Delta\bm{A}_r,\qquad\bm{B}^{(t)}=\sum_{r=1}^{S_t}\Delta\bm{B}_r.
\end{equation*}
Their entries $A_{i\ell}^{(t)}$ and $B_{ik}^{(t)}$ count collaboration and participation through checkpoint $t$. We retain all worker--worker and worker--task-type pairs, including those with zero counts. Successive checkpoints share the accumulated work records, and later local choices can depend on earlier published results. As illustrated in Figure~\ref{fig:agentcue-framework}, the two networks $\bm{A}^{(t)}$ and $\bm{B}^{(t)}$ provide complementary observations on the same workers: $\bm{A}^{(t)}$ describes whose results support the work of other workers, and $\bm{B}^{(t)}$ describes participation across task types. We jointly model these networks below.

\subsection{Latent Space Model and Estimation}
\label{sec:Joint-Latent-Space-Model-and-Estimation}

We connect the two networks through shared worker latent positions, as in the one-mode and two-mode construction of \citet{fan2024generalized}. Let $\bm{z}_i^{(t)},\bm{w}_k^{(t)}\in\mathbb{R}^d$ denote worker and task-type latent positions, where $d$ is the latent dimension. The node effects $\alpha_i^{(t)}$ and $\beta_i^{(t)}$ describe the overall collaboration and participation activity of worker $i$, and the node effect $\gamma_k^{(t)}$ describes the overall participation that task type $k$ attracts. The latent space model treats the network counts as Poisson variables with cumulative means $\theta_{i\ell}^{A,(t)}>0$ and $\theta_{ik}^{B,(t)}>0$:
\begin{equation}
    \begin{aligned}
         A_{i\ell}^{(t)}&\sim\operatorname{Poisson}(\theta_{i\ell}^{A,(t)}),
         &\log\theta_{i\ell}^{A,(t)}&=\alpha_i^{(t)}+\alpha_\ell^{(t)}+(\bm{z}_i^{(t)})^{\top}\bm{z}_\ell^{(t)},\\
         B_{ik}^{(t)}&\sim\operatorname{Poisson}(\theta_{ik}^{B,(t)}),
         &\log\theta_{ik}^{B,(t)}&=\beta_i^{(t)}+\gamma_k^{(t)}+(\bm{z}_i^{(t)})^{\top}\bm{w}_k^{(t)}.
        \end{aligned}
    \label{eq:joint}
\end{equation}
Node effects $\alpha_i^{(t)},\beta_i^{(t)},\gamma_k^{(t)}$ capture overall activity, while the inner products $(\bm{z}_i^{(t)})^{\top}\bm{z}_\ell^{(t)}$ and $(\bm{z}_i^{(t)})^{\top}\bm{w}_k^{(t)}$ describe pair-specific variation beyond the node effects. Sharing $\bm{z}_i^{(t)}$ lets both networks inform the latent position of each worker. All parameters vary with $t$ and describe cumulative experience at checkpoint $t$.

Define the parameter tuple $\Theta^{(t)}=(\bm{\alpha}^{(t)},\bm{\beta}^{(t)},\bm{\gamma}^{(t)},\bm{Z}^{(t)},\bm{W}^{(t)})$, where $\bm{\alpha}^{(t)}=(\alpha_i^{(t)})_{i=1}^n$, $\bm{\beta}^{(t)}=(\beta_i^{(t)})_{i=1}^n$, and $\bm{\gamma}^{(t)}=(\gamma_k^{(t)})_{k=1}^m$ are column vectors, and $\bm{Z}^{(t)}=[\bm{z}_1^{(t)},\ldots,\bm{z}_n^{(t)}]^\top\in\mathbb{R}^{n\times d}$ and $\bm{W}^{(t)}=[\bm{w}_1^{(t)},\ldots,\bm{w}_m^{(t)}]^\top\in\mathbb{R}^{m\times d}$ contain the latent positions as rows. The two Poisson negative log-likelihoods $\mathcal{L}_A^{(t)}$ and $\mathcal{L}_B^{(t)}$ omit terms independent of the parameters:
\begin{equation}
    \begin{aligned}
    \mathcal{L}_A^{(t)}(\Theta^{(t)})&=\sum_{1\leq i<\ell\leq n}\left\{\theta_{i\ell}^{A,(t)}-A_{i\ell}^{(t)}\log\theta_{i\ell}^{A,(t)}\right\},\\
    \mathcal{L}_B^{(t)}(\Theta^{(t)})&=\sum_{i=1}^{n}\sum_{k=1}^{m}\left\{\theta_{ik}^{B,(t)}-B_{ik}^{(t)}\log\theta_{ik}^{B,(t)}\right\}.
    \end{aligned}
    \label{eq:network-likelihoods}
\end{equation}
The joint objective $\mathcal{L}^{(t)}$ is their sum:
\begin{equation}
    \mathcal{L}^{(t)}(\Theta^{(t)})=\mathcal{L}_A^{(t)}(\Theta^{(t)})+\mathcal{L}_B^{(t)}(\Theta^{(t)}). \label{eq:main-likelihood}
\end{equation}
Let $\bm{1}_n$ and $\bm{1}_m$ denote vectors of ones. We minimize \eqref{eq:main-likelihood} subject to $\bm{1}_{n}^{\top}\bm{Z}^{(t)}=\bm{0}^{\top}$, $\bm{1}_{m}^{\top}\bm{W}^{(t)}=\bm{0}^{\top}$, and $\bm{1}_{n}^{\top}\bm{\beta}^{(t)}=0$. These centering constraints are imposed for parameter identifiability. When $\bm{Z}^{(t)}$ has full column rank, the node effects $\bm{\alpha}^{(t)},\bm{\beta}^{(t)},\bm{\gamma}^{(t)}$ are uniquely determined, and the latent positions $\bm{Z}^{(t)},\bm{W}^{(t)}$ are identifiable up to a common orthogonal transformation. Proposition~\ref{prop:identifiability} in Supplementary~\ref{sec:identifiability} gives the result and proof.

For numerical optimization, we follow the projected-gradient implementation of \citet{he2025semiparametric} and add a small ridge penalty on $\bm{Z}^{(t)}$ and $\bm{W}^{(t)}$. Supplementary~\ref{app:fit} gives the penalized objective, the numerical scaling, and the stopping rule. Substituting the numerical estimate $\widehat{\Theta}^{(t)}$, that is, the output of the optimization at checkpoint $t$, into \eqref{eq:joint} gives fitted means $\widehat{\theta}_{i\ell}^{A,(t)}$ and $\widehat{\theta}_{ik}^{B,(t)}$. At the first checkpoint, we initialize the latent positions $\bm{Z}^{(t)}$ and $\bm{W}^{(t)}$ with independent standard normal entries and the node effects $\bm{\alpha}^{(t)},\bm{\beta}^{(t)},\bm{\gamma}^{(t)}$ from the observed network counts, then apply the centering constraints. At each subsequent checkpoint $t$, we initialize the optimization with the preceding estimate $\widehat{\Theta}^{(t-1)}$.

\subsection{Summary Guiding in Self-Organization}
\label{sec:Summary-Guiding-in-Self-Organization}

In Agensh, workers search the work records on the shared board before they choose sub-tasks and collaborators \citep{zhan2026agensh}. Each session of a worker has a fixed budget of actions on the shared board, and every read of the work records of another worker costs one action. Many workers also retrieve the same unchanged work records to compare the same sub-tasks and collaborators, a pattern that Section~1 calls repeated searches. In AgentCue, the MAS instead estimates, once per checkpoint, the parameters of the latent space model \eqref{eq:joint} from the work records on behalf of all workers and publishes the estimates $\widehat{\bm\alpha}^{(t)},\widehat{\bm\beta}^{(t)},\widehat{\bm\gamma}^{(t)},\widehat{\bm Z}^{(t)},\widehat{\bm W}^{(t)}$, which we call the \emph{statistical summary} at checkpoint $t$, together with a summary version that identifies the network observations behind the estimates. When a session first requests the statistical summary, the worker reads the latest statistical summary with one action and keeps the same summary version for the rest of the session. Every later \emph{local comparison}, that is, a calculation of fitted means in \eqref{eq:joint} from the statistical summary that the worker holds, costs no action on the shared board. Figure~\ref{fig:agentcue-framework} gives an overview of AgentCue.

\begin{figure}[!ht]
    \centering
    \includegraphics[width=\textwidth]{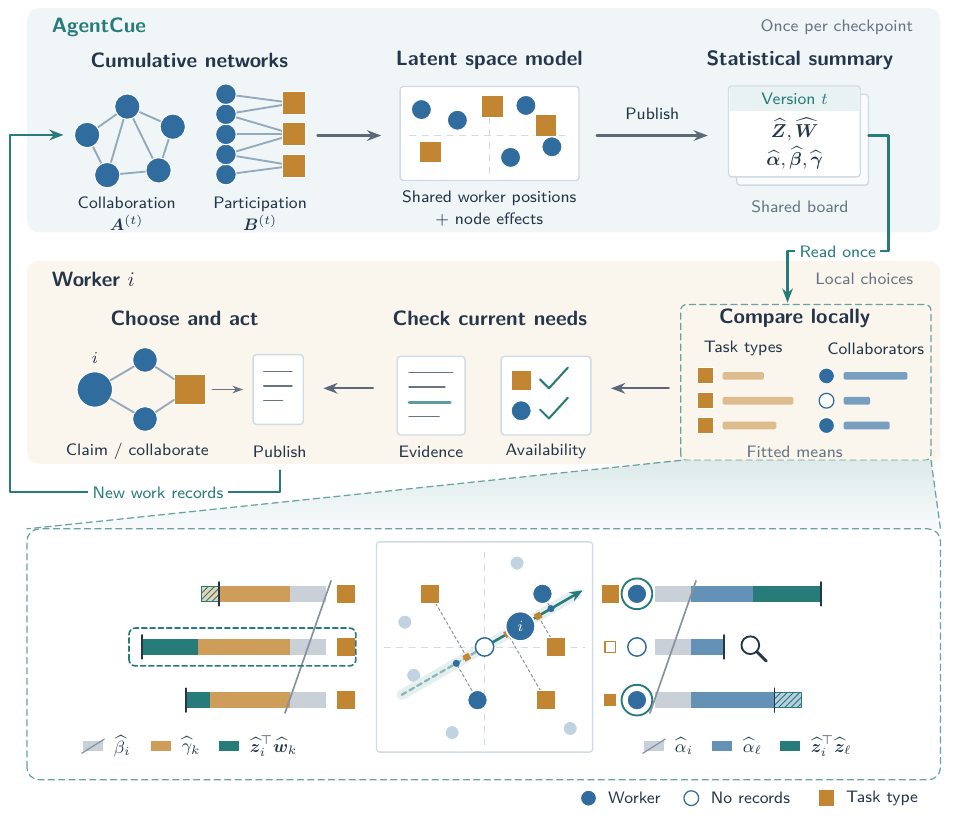}
    \caption{{\bf Overview of AgentCue.} Upper and middle bands: once per checkpoint, the MAS publishes the statistical summary, and worker $i$ reads the statistical summary once. Lower band: the local comparison, enlarged. Bars are log fitted means split as in \eqref{eq:fitted-means}; the struck-through term cancels, and hatched segments are negative affinity terms. The latent-space map shows the projections onto the line of sight of worker $i$. Small squares scale with the fitted participation mean of each candidate in the task type of the inputs that the sub-task needs; the dashed box and the rings mark the shortlist; the magnifier marks a work-record search. Superscripts $(t)$ are omitted.}
    \label{fig:agentcue-framework}
\end{figure}

Each fitted mean comes with an explanation, because the logarithm of each fitted mean in \eqref{eq:joint} splits into two kinds of terms:
\begin{equation}
    \log\widehat{\theta}_{ik}^{B,(t)}=\underbrace{\widehat{\beta}_i^{(t)}+\widehat{\gamma}_k^{(t)}}_{\text{activity}}+\underbrace{(\widehat{\bm z}_i^{(t)})^{\top}\widehat{\bm w}_k^{(t)}}_{\text{affinity}},\qquad\log\widehat{\theta}_{i\ell}^{A,(t)}=\underbrace{\widehat{\alpha}_i^{(t)}+\widehat{\alpha}_\ell^{(t)}}_{\text{activity}}+\underbrace{(\widehat{\bm z}_i^{(t)})^{\top}\widehat{\bm z}_\ell^{(t)}}_{\text{affinity}}. \label{eq:fitted-means}
\end{equation}
We call $\widehat{\theta}_{ik}^{B,(t)}$ the \emph{fitted participation mean} of worker $i$ in task type $k$ and $\widehat{\theta}_{i\ell}^{A,(t)}$ the \emph{fitted collaboration mean} between workers $i$ and $\ell$. We call the node effects \emph{activity terms} and the inner products of latent positions \emph{affinity terms}. The affinity terms measure whether worker $i$ leans toward task type $k$ and whether workers $i$ and $\ell$ collaborate more often than their activity terms imply. A large network count alone cannot tell an active worker from a close collaborator, and the split in \eqref{eq:fitted-means} separates the two. If all affinity terms $(\widehat{\bm z}_i^{(t)})^{\top}\widehat{\bm w}_k^{(t)}$ vanish, every worker spreads participation over the task types in the same proportions, so the affinity terms describe the division of labor.

A local comparison of worker $i$ involves two fitted means: two fitted participation means of worker $i$, or two fitted collaboration means between worker $i$ and two candidates, that is, two other workers whom worker $i$ considers as collaborators. In both ratios, the activity term of worker $i$ cancels:
\begin{equation}
    \begin{aligned}
    \frac{\widehat{\theta}_{ik}^{B,(t)}}{\widehat{\theta}_{ik'}^{B,(t)}}&=\underbrace{\exp\bigl(\widehat{\gamma}_k^{(t)}-\widehat{\gamma}_{k'}^{(t)}\bigr)}_{\text{task types}}\,\underbrace{\exp\bigl\{(\widehat{\bm z}_i^{(t)})^{\top}(\widehat{\bm w}_k^{(t)}-\widehat{\bm w}_{k'}^{(t)})\bigr\}}_{\text{worker }i},\\
    \frac{\widehat{\theta}_{i\ell}^{A,(t)}}{\widehat{\theta}_{i\ell'}^{A,(t)}}&=\underbrace{\exp\bigl(\widehat{\alpha}_\ell^{(t)}-\widehat{\alpha}_{\ell'}^{(t)}\bigr)}_{\text{candidates}}\,\underbrace{\exp\bigl\{(\widehat{\bm z}_i^{(t)})^{\top}(\widehat{\bm z}_\ell^{(t)}-\widehat{\bm z}_{\ell'}^{(t)})\bigr\}}_{\text{pair}}.
    \end{aligned}
    \label{eq:comparisons}
\end{equation}
The first factors in the two rows of \eqref{eq:comparisons}, which we call the task-type factor and the candidate factor, are common to all workers and compare how much participation task types $k$ and $k'$ attract and how active candidates $\ell$ and $\ell'$ are. The second factors, which we call the worker factor and the pair factor, belong to worker $i$ alone. Because the centering constraint $\bm{1}_n^{\top}\widehat{\bm Z}^{(t)}=\bm{0}^{\top}$ makes the affinity terms of all workers with a task type sum to zero, a worker factor close to one means that worker $i$ resembles the average worker on task types $k$ and $k'$, and a pair factor far from one separates even two equally active candidates.

The affinity terms have a geometric reading. For any $\bm x\in\mathbb{R}^d$, the inner product $(\widehat{\bm z}_i^{(t)})^{\top}\bm x$ equals the length of $\widehat{\bm z}_i^{(t)}$ times the signed length of the projection of $\bm x$ onto the line through the origin along $\widehat{\bm z}_i^{(t)}$, which we call the \emph{line of sight} of worker $i$. The projections of the task-type latent positions and of the latent positions of the other workers onto the line of sight order the task types and the other workers by affinity with worker $i$, so one latent position serves both the sub-task choice and the collaborator choice of worker $i$. Under the ridge penalty of Section~\ref{sec:Joint-Latent-Space-Model-and-Estimation}, a worker without work records stays near the origin, so the small fitted means of the worker come from the activity terms and reflect missing work records, not missing ability.

The lower band of Figure~\ref{fig:agentcue-framework} enlarges the local comparisons of worker $i$. To choose a sub-task, worker $i$ compares the fitted participation means of worker $i$ across the task types of the open sub-tasks. To choose a collaborator for a sub-task, worker $i$ compares candidates by two fitted means, namely the fitted participation mean of each candidate in the task type of the inputs that the sub-task needs, for example \emph{extract} for a sub-task of task type \emph{inference}, and the fitted collaboration mean between worker $i$ and each candidate. The local comparisons leave a shortlist of sub-tasks and candidates. Worker $i$ then spends actions only to check the \emph{current needs}, which the statistical summary cannot settle: whether a sub-task remains open, whether a candidate is occupied, and whether a published result supplies the required evidence. The statistical summary answers questions about the past, and the shared board answers questions about the present.

The statistical summary proposes, and the worker chooses. When the statistical summary gives weak support, for example for a candidate without work records, when candidates tie, when a sub-task is new, or when the current needs contradict the statistical summary, worker $i$ searches only the work records that the choice requires and states a concrete reason for the choice. Each choice leaves a Choice Record, which stores the summary version, whether the statistical summary informed the choice, and the reason for the choice. The statistical summary guides the choices, the choices create new work records, and the new work records revise the statistical summary at the next checkpoint; along the loop, the Choice Records trace every choice to a summary version and a reason.

\subsection{Feedback of the Statistical Summary}
\label{sec:Feedback-of-the-Statistical-Summary}

In the usual analysis with latent space models, the network does not respond to the estimates. In AgentCue, the statistical summary guides the choices that produce the next work records, so later estimates partly describe the effect of earlier estimates. This section analyzes the feedback and asks what the statistical summary describes in the long run. We follow one worker $i$ through four steps of the loop in the participation network.

\noindent{\bf Step 1: the statistical summary reports past shares.} The statistical summary gives worker $i$ a \emph{fitted participation share} for each task type $k$:
\begin{equation}
    \widehat{\pi}_{ik}^{(t)}=\frac{\widehat{\theta}_{ik}^{B,(t)}}{\sum_{k'=1}^{m}\widehat{\theta}_{ik'}^{B,(t)}}=\frac{\exp\{\widehat{\gamma}_k^{(t)}+(\widehat{\bm z}_i^{(t)})^{\top}\widehat{\bm w}_k^{(t)}\}}{\sum_{k'=1}^{m}\exp\{\widehat{\gamma}_{k'}^{(t)}+(\widehat{\bm z}_i^{(t)})^{\top}\widehat{\bm w}_{k'}^{(t)}\}}. \label{eq:share}
\end{equation}
The activity term $\widehat{\beta}_i^{(t)}$ of worker $i$ cancels in \eqref{eq:share}, as in \eqref{eq:comparisons}. The \emph{observed participation share} $p_{ik}^{(t)}=B_{ik}^{(t)}/\sum_{k'=1}^{m}B_{ik'}^{(t)}$ is the fraction of the participation counts of worker $i$ that falls on task type $k$ up to checkpoint $t$; let $\widehat{\bm\pi}_i^{(t)}$ and $\bm p_i^{(t)}$ denote the vectors of the two shares over task types. At a minimizer of the joint objective \eqref{eq:main-likelihood}, the fitted participation means of each worker add up to the participation count of the worker, so $\widehat{\bm\pi}_i^{(t)}$ differs from $\bm p_i^{(t)}$ only through the low-rank latent space and the latent positions shared with the collaboration network. Step 3 therefore takes $\widehat{\bm\pi}_i^{(t)}=\bm p_i^{(t)}$.

\noindent{\bf Step 2: workers act on the shares.} How a worker uses the statistical summary depends on the foundation model and the prompt, so we adopt a stylized choice rule: worker $i$ chooses task type $k$ with probability
\begin{equation}
    \phi_{ik}(\widehat{\bm\pi}_i^{(t)})=\frac{(\widehat{\pi}_{ik}^{(t)})^{\omega}\exp(u_{ik})}{\sum_{k'=1}^{m}(\widehat{\pi}_{ik'}^{(t)})^{\omega}\exp(u_{ik'})}. \label{eq:choice-rule}
\end{equation}
The \emph{baseline preference} $u_{ik}$ is the log-propensity of worker $i$ for task type $k$ without the statistical summary and combines the suitability of worker $i$ for task type $k$ with the current needs. The \emph{reliance} $\omega\ge0$, common to all workers, measures how strongly workers follow the statistical summary: in the odds of two task types, the reliance raises the task-type and worker factors of \eqref{eq:comparisons} to the power $\omega$. With $\omega=0$, a worker ignores the statistical summary; with $\omega=1$, a worker treats the fitted participation share as a prior, as in Bayes' rule.

\begin{figure}[!t]
    \centering
    \includegraphics[width=\textwidth]{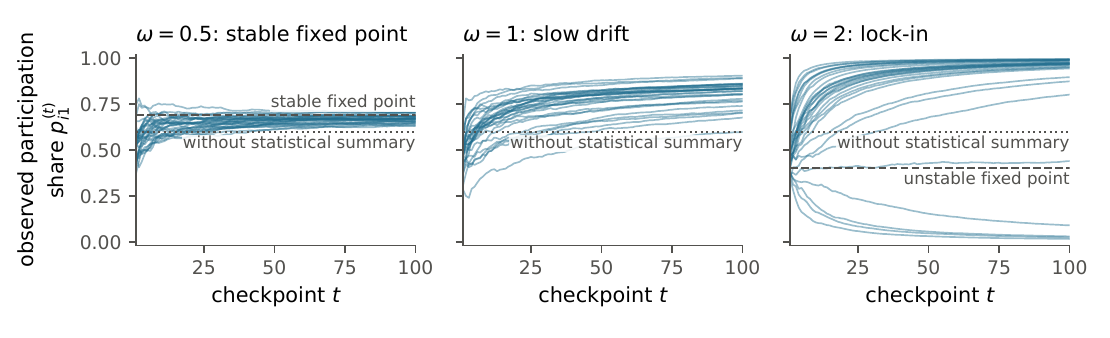}
    \caption{Simulation of \eqref{eq:recursion} in the setting of Section~\ref{sec:Feedback-of-the-Statistical-Summary}; each line is one run. Dotted lines: fixed point without the statistical summary ($\omega=0$). Dashed lines: interior fixed point, stable for $\omega=0.5$ and unstable for $\omega=2$; for $\omega=1$, no interior fixed point exists.}
    \label{fig:toy-loop}
\end{figure}

\noindent{\bf Step 3: cumulative counts feed the choices back.} We hold $u_{ik}$ fixed, take $\widehat{\bm\pi}_i^{(t)}=\bm p_i^{(t)}$, and let worker $i$ add the same number of participation counts before the first checkpoint and between consecutive checkpoints. Each checkpoint then moves the observed participation share a fraction $1/(t+1)$ of the way toward the choice probabilities $\bm\phi_i(\bm p_i^{(t)})$:
\begin{equation}
    \bm p_i^{(t+1)}=\bm p_i^{(t)}+\frac{1}{t+1}\left\{\bm\phi_i(\bm p_i^{(t)})-\bm p_i^{(t)}+\bm\xi_i^{(t+1)}\right\}, \label{eq:recursion}
\end{equation}
where $\bm\phi_i=(\phi_{i1},\ldots,\phi_{im})^{\top}$, and the noise $\bm\xi_i^{(t+1)}$, the share of the added participation counts minus $\bm\phi_i(\bm p_i^{(t)})$, has mean zero given the history. The recursion \eqref{eq:recursion} is a stochastic approximation, so the observed participation share can settle only at a stable fixed point, that is, a share $\bm p_i^{\circ}=\bm\phi_i(\bm p_i^{\circ})$ that attracts nearby shares \citep{borkar2023stochastic}. At any fixed point, stable or not, the statistical summary describes the very choices that the statistical summary induces, a property known as performative stability \citep{perdomo2020performative,mendlerdunner2020stochastic}. Figure~\ref{fig:toy-loop} simulates \eqref{eq:recursion} in an illustrative setting: one worker, two task types with $u_{i1}-u_{i2}=0.4$, and $\omega\in\{0.5,1,2\}$. Each of 30 independent runs of \eqref{eq:recursion} starts from 32 participation counts split with probability $1/2$ and adds 32 participation counts per checkpoint for 100 checkpoints.

\noindent{\bf Step 4: the fixed point in the latent space.} A fixed point is interior when all shares are positive; for $\omega>0$, each vertex, where worker $i$ performs one task type only, is also a fixed point, and a run locks in when the run converges to a vertex. For $\omega\ne1$, the interior fixed point satisfies $\log p_{ik}^{\circ}=u_{ik}/(1-\omega)+c_i$, where $c_i$ does not depend on $k$. Let $\bar u_k$ denote the average of $u_{ik}$ over workers, and let $\tilde u_{ik}$ denote $u_{ik}$ centered over workers and over task types. If the observed participation shares of all workers sit at the interior fixed points and the matrix $(\tilde u_{ik})$ has rank at most $d$, the centering constraints give
\begin{equation}
    \widehat{\gamma}_k^{(t)}=\frac{\bar u_k}{1-\omega}+c^{(t)},\qquad(\widehat{\bm z}_i^{(t)})^{\top}\widehat{\bm w}_k^{(t)}=\frac{\tilde u_{ik}}{1-\omega}, \label{eq:fixed-point}
\end{equation}
where $c^{(t)}$ absorbs the growth of the counts and does not depend on $k$. For $0\le\omega<1$, the interior fixed point is stable and the vertices are unstable, and \eqref{eq:fixed-point} magnifies the baseline preferences by $1/(1-\omega)\ge1$ and keeps the order and the signs of $\tilde u_{ik}$: the statistical summary sharpens the division of labor. The sharper division of labor forms more slowly, since near the fixed point each checkpoint closes a fraction $(1-\omega)/(t+1)$ of the gap instead of $1/(t+1)$; in the left panel of Figure~\ref{fig:toy-loop}, most runs remain below the interior fixed point at $t=100$. For $\omega=1$, a run drifts slowly toward the vertex of the task type with the largest baseline preference. For $\omega>1$, the factor $1/(1-\omega)$ is negative, so the interior fixed point reverses the order of the baseline preferences and is unstable: the early observed participation share decides the vertex at which a run locks in. In the right panel of Figure~\ref{fig:toy-loop}, 4 of 30 runs head to the task type with the smaller baseline preference.

For $\omega=0$, \eqref{eq:fixed-point} recovers the baseline preferences, as in the usual analysis with latent space models. For $0\le\omega<1$, the feedback only rescales $\widehat{\gamma}_k^{(t)}$ and the affinity terms by $1/(1-\omega)$, so the order and the signs of the affinity terms keep the usual interpretation. In the MAS, the current needs counteract lock-in, because the open sub-tasks of a crowded task type run out and the baseline preferences for the crowded task type fall; Section~\ref{sec:experiment} examines lock-in.

\section{Theory}
\label{sec:theory}

We study estimation of cumulative means and the accuracy of warm-started gradient updates, which start each checkpoint from the preceding estimate. The statistical results control error relative to the current estimation target, defined below. The optimization results control error relative to the current exact minimizer, that is, a minimizer of the joint objective \eqref{eq:main-likelihood} over the feasible parameter set in Assumption~\ref{as:geom}. 

Throughout, $n\ge3$, $m\ge1$, and $d\ge1$ is fixed. For the work records indexed by $r$ in Section~\ref{sec:Network-Data-Collection}, let $\FF_r$ denote the information available up to and including work record $r$. With increments $\Delta \bm{A}_r,\Delta \bm{B}_r$ and the cumulative number $S_t$ of work records, we have $\bm{A}^{(t)}=\sum_{r=1}^{S_t}\Delta \bm{A}_r$ and $\bm{B}^{(t)}=\sum_{r=1}^{S_t}\Delta \bm{B}_r$. The increments are $\FF_r$-measurable binary matrices. The finite random counts $S_t$ are nondecreasing in $t$. We define the estimation targets by summing the expected network increments conditional on the preceding history:
\begin{equation*}
  \theta_{\star,i\ell}^{A,(t)}=\sum_{r=1}^{S_t}\E\{(\Delta \bm{A}_r)_{i\ell}\mid\FF_{r-1}\},\qquad \theta_{\star,ik}^{B,(t)}=\sum_{r=1}^{S_t}\E\{(\Delta \bm{B}_r)_{ik}\mid\FF_{r-1}\}.
\end{equation*}
The estimation targets can depend on the realized history and generally differ from the unconditional means $\E \bm{A}^{(t)},\E \bm{B}^{(t)}$. Write the true parameter tuple as $\Theta_{\star}^{(t)}=(\bm{\alpha}_{\star}^{(t)},\bm{\beta}_{\star}^{(t)},\bm{\gamma}_{\star}^{(t)},\bm{Z}_{\star}^{(t)},\bm{W}_{\star}^{(t)})$, where $\log\theta_{\star,i\ell}^{A,(t)}=\alpha_{\star,i}^{(t)}+\alpha_{\star,\ell}^{(t)}+(\bm{z}_{\star,i}^{(t)})^\top \bm{z}_{\star,\ell}^{(t)}$ and $\log\theta_{\star,ik}^{B,(t)}=\beta_{\star,i}^{(t)}+\gamma_{\star,k}^{(t)}+(\bm{z}_{\star,i}^{(t)})^\top \bm{w}_{\star,k}^{(t)}$. The parameters satisfy the centering constraints $\one_n^\top\bm{Z}_{\star}^{(t)}=\bm{0}^\top$, $\one_m^\top\bm{W}_{\star}^{(t)}=\bm{0}^\top$, and $\one_n^\top\bm{\beta}_{\star}^{(t)}=0$. Supplementary~\ref{sec:lemmas} shows that the parameters can be centered without changing the estimation targets $\theta_{\star,i\ell}^{A,(t)}$ and $\theta_{\star,ik}^{B,(t)}$.

\begin{assumption}\label{as:model}
    There is a constant $K>0$, independent of $n,m,T$, such that, for every work record $r$, $\sum_{i<\ell}(\Delta\bm{A}_r)_{i\ell}+\sum_{i,k}(\Delta\bm{B}_r)_{ik}\le K$.
\end{assumption}

Assumption~\ref{as:model} bounds the contribution of each work record to the network increments $\Delta\bm{A}_r$ and $\Delta\bm{B}_r$. A work record may involve multiple collaborators and task types, but each work record updates at most $K$ worker--worker and worker--task-type pairs in total.

\begin{assumption}\label{as:geom}
    There are constants $R_z,R_w,\kappa>0$, independent of $n,m,T$, such that, for every $t\in[T]$, the centered true parameters satisfy $\max_i\norm{\bm{z}_{\star,i}^{(t)}}_2\le R_z$, $\max_k\norm{\bm{w}_{\star,k}^{(t)}}_2\le R_w$, and $\lambda_{\min}\{(\bm{Z}_{\star}^{(t)})^\top \bm{Z}_{\star}^{(t)}/n\}\ge\kappa$. The feasible parameter set consists of tuples $\widetilde{\Theta}$ with finite node effects that satisfy the same centering constraints and $\max_i\norm{\widetilde{\bm{z}}_i}_2\le 2R_z$, $\max_k\norm{\widetilde{\bm{w}}_k}_2\le 2R_w$.
\end{assumption}
Assumption~\ref{as:geom} bounds the norms of the worker and task-type latent positions $\bm{z}_{\star,i}^{(t)}$ and $\bm{w}_{\star,k}^{(t)}$. The lower bound on $\lambda_{\min}\{(\bm{Z}_{\star}^{(t)})^\top \bm{Z}_{\star}^{(t)}/n\}$ ensures that worker latent positions vary sufficiently in every latent dimension. The larger radii $2R_z$ and $2R_w$ define the radius constraints, that is, the norm bounds of the feasible parameter set, and keep the true latent positions strictly inside the radius constraints.

For a feasible parameter tuple $\widetilde{\Theta}$, let $\theta_{i\ell}^{A}(\widetilde{\Theta})$ and $\theta_{ik}^{B}(\widetilde{\Theta})$ denote the cumulative means in \eqref{eq:joint} evaluated at $\widetilde{\Theta}$. We consider feasible estimates $\widehat{\Theta}^{(t)}$ satisfying:
\begin{equation}
  \label{eq:opt} 
  \mathcal{L}^{(t)}(\widehat{\Theta}^{(t)})\le\inf_{\widetilde{\Theta}\ \text{feasible}}\mathcal{L}^{(t)}(\widetilde{\Theta})+\varepsilon_t,\qquad 0\le\varepsilon_t\le\varepsilon_0,
\end{equation}
where $\varepsilon_t$ is the optimization error allowed at checkpoint $t$ and $\varepsilon_0$ is a fixed constant.

For $\Theta=(\bm{\alpha},\bm{\beta},\bm{\gamma},\bm{Z},\bm{W})$ and $\Theta'=(\bm{\alpha}',\ldots,\bm{W}')$, define the normalized distance:
\begin{equation}
  \label{eq:distance} \begin{aligned}
    \dist^2(\Theta,\Theta')&={} \frac{\norm{\bm{\alpha}-\bm{\alpha}'}_2^2+\norm{\bm{\beta}-\bm{\beta}'}_2^2}{n} +\frac{\norm{\bm{\gamma}-\bm{\gamma}'}_2^2}{m}\\
     &+\min_{\bm{Q}^\top\bm{Q}=\bm{I}_d}\left\{ \frac{\norm{\bm{Z}-\bm{Z}'\bm{Q}}_{\fro}^2}{n} +\frac{\norm{\bm{W}-\bm{W}'\bm{Q}}_{\fro}^2}{m}\right\},
  \end{aligned}
\end{equation}
where $\norm{\cdot}_{\fro}$ is the Frobenius norm and $\bm{I}_d$ is the $d\times d$ identity matrix. The square root $\dist$ is a metric on equivalence classes under a common orthogonal transformation. The same orthogonal $d\times d$ matrix $\bm{Q}$ acts on $\bm{Z},\bm{W}$, and may differ across checkpoints. We write $\Theta \bm{Q}=(\bm{\alpha},\bm{\beta},\bm{\gamma},\bm{Z}\bm{Q},\bm{W}\bm{Q})$.

\begin{assumption}\label{as:scale}
There are constants $0<c_0\le C_0<\infty$, independent of $n,m,T$, such that, for every $t\in[T]$ and all workers $i$ and task types $k$:
\begin{equation}
  \label{eq:expectedmargins} c_0t\le\sum_{\ell\ne i}\theta_{\star,i\ell}^{A,(t)}\le C_0t,\quad c_0t\le\sum_k\theta_{\star,ik}^{B,(t)}\le C_0t,\quad c_0nt/m\le\sum_i\theta_{\star,ik}^{B,(t)}\le C_0nt/m.
\end{equation}
\end{assumption}

Assumption~\ref{as:scale} specifies the scale of cumulative network activity. For each worker, the sums $\sum_{\ell\ne i}\theta_{\star,i\ell}^{A,(t)}$ and $\sum_k\theta_{\star,ik}^{B,(t)}$ of estimation targets for collaboration and participation are of order $t$. The sum $\sum_{i,k}\theta_{\star,ik}^{B,(t)}$ is therefore of order $nt$, and each task type is required to have a sum of estimation targets $\sum_i\theta_{\star,ik}^{B,(t)}$ of order $nt/m$.

In the results below and Supplementary~\ref{app:theory}, $C,c,n_0$ denote positive constants that may depend only on $d,K,c_0,C_0,R_z,R_w,\kappa,\varepsilon_0$. Under Assumption~\ref{as:geom}, Lemma~\ref{lem:meansfrommargins} shows that Assumption~\ref{as:scale} implies the following bounds for $n\ge n_0$:
\begin{equation}
  \label{eq:scale} c\frac{t}{n}\le\theta_{\star,i\ell}^{A,(t)}\le C\frac{t}{n}\quad(i<\ell),\qquad c\frac{t}{m}\le\theta_{\star,ik}^{B,(t)}\le C\frac{t}{m}\quad(i,k).
\end{equation}
Assumptions~\ref{as:model} and~\ref{as:scale} imply that the cumulative number of work records satisfies $S_t\asymp nt$, that is, $cnt\le S_t\le Cnt$. The proof of $S_t\asymp nt$ is given at the beginning of Supplementary~\ref{sec:statproof}.

\begin{theorem}\label{thm:stat}
    Suppose Assumptions~\ref{as:model}--\ref{as:scale} hold and $n\ge n_0$. For every $\delta\in(0,1)$, there is an event of probability at least $1-\delta$ on which all checkpoints with $t\ge C(1+m/n)\{\log((2n+m)/\delta)+\log\log(e^e+t)\}$ admit exact minimizers, and every feasible estimate with \eqref{eq:opt} obeys:
    \begin{equation}
      \label{eq:statrate} 
      \dist^2(\widehat{\Theta}^{(t)},\Theta_{\star}^{(t)})\le C\frac{n+m}{nt}\left\{\log\frac{2n+m}{\delta}+\log\log(e^e+t)\right\}.
    \end{equation}
    \begin{equation}
      \label{eq:lograte} 
      \frac{1}{n^2}\sum_{i<\ell}\left\{\log\frac{\theta_{i\ell}^{A}(\widehat{\Theta}^{(t)})}{\theta_{\star,i\ell}^{A,(t)}}\right\}^2+\frac{1}{nm}\sum_{i,k}\left\{\log\frac{\theta_{ik}^{B}(\widehat{\Theta}^{(t)})}{\theta_{\star,ik}^{B,(t)}}\right\}^2\le C\frac{n+m}{nt}\left\{\log\frac{2n+m}{\delta}+\log\log(e^e+t)\right\}.
    \end{equation}
    For a fixed $a\in(0,1)$, let $C_a$ denote a positive constant that may additionally depend on $a$. For any fixed $T$ satisfying $T\ge C_a(1+m/n)\log\{(2n+m)/\delta\}$, there is an event of probability at least $1-\delta$ on which the maxima of $\dist^2(\widehat{\Theta}^{(t)},\Theta_{\star}^{(t)})$ and the left-hand side of \eqref{eq:lograte} over $\lceil aT\rceil\le t\le T$ are both at most $C_a(n+m)\log\{(2n+m)/\delta\}/(nT)$.
\end{theorem}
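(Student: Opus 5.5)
The approach is the standard M-estimation route for Poisson latent space models, adapted to martingale increments. First, write the objective as $\mathcal{L}^{(t)}(\Theta)=\sum_{\text{pairs}}\{\theta(\Theta)-\theta_\star\log\theta(\Theta)\}-\sum_{\text{pairs}}(X-\theta_\star)\log\theta(\Theta)$, with $X\in\{A^{(t)},B^{(t)}\}$. The first (deterministic) part is minimized at $\Theta_\star^{(t)}$. Its excess equals the sum of Poisson Bregman divergences $\theta_\star h(\log(\theta/\theta_\star))$, where $h(x)=e^x-1-x$. Using \eqref{eq:scale}, this excess is bounded below by $c\,(t/n)\sum_{i<\ell}\Delta_{i\ell}^2+c\,(t/m)\sum_{i,k}\Delta_{ik}^2$ on the region where the log-ratios $\Delta$ are bounded. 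The second part is the noise term $\langle \bm{E},\log\theta(\Theta)-\log\theta_\star\rangle$, with $\bm{E}=(\bm{A}^{(t)}-\bm\theta_\star^{A},\bm{B}^{(t)}-\bm\theta_\star^{B})$. Each entry of $\bm{E}$ is a sum of martingale differences with respect to $\FF_r$. Comparing $\mathcal{L}^{(t)}(\widehat\Theta)\le\mathcal{L}^{(t)}(\Theta_\star)+\varepsilon_t$ (note that $\Theta_\star$ is feasible by Assumption~\ref{as:geom}) then gives a basic inequality: curvature $\le$ noise $+\varepsilon_0$.

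Second, bound the noise. The log-ratio matrix $\bm\Delta$ has a structure I would exploit: it is node effects plus a rank-$\le 2d$ part with bounded entries. Hence $\langle\bm{E},\bm\Delta\rangle$ splits into two pieces. The node-effect pieces are controlled by row and column sums of $\bm{E}$ through Cauchy–Schwarz, e.g. $|\sum_i (\delta\alpha_i)\sum_\ell E_{i\ell}|\le\norm{\bm{\delta\alpha}}_2(\sum_i(\sum_\ell E_{i\ell})^2)^{1/2}$. The low-rank piece is controlled by $\norm{\bm{E}}_{\op}\sqrt{2d}\,\norm{\bm\Delta_{\text{low}}}_{\fro}$. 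The needed concentration inputs are:
\begin{itemize}
\item Freedman's inequality for row and column sums, whose predictable variance is at most $\theta_\star$-sums of order $t$ or $nt/m$ by Assumption~\ref{as:model} and \eqref{eq:expectedmargins}, with increments bounded by $K$;
\item matrix Freedman (Tropp) for $\norm{\bm{E}}_{\op}$ over the symmetric dilation, giving $\norm{\bm{E}}_{\op}^2\lesssim (t+nt/m\cdot m/n)\log((2n+m)/\delta)$ up to lower-order terms.
\end{itemize}
Since the conditional variances are random, uniformity over $t$ needs a peeling argument over dyadic blocks of $t$ (or of $S_t\asymp nt$). This is exactly what produces the $\log\log(e^e+t)$ term, and a union bound over the $2n+m$ margins and the dilation dimension produces $\log((2n+m)/\delta)$.

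Third, combine and solve. The curvature is at least $c\,(t/n^2)\cdot n^2\,\|\cdot\|$ in the normalized metric. Balancing the two sides gives the normalized squared log-ratio error in \eqref{eq:lograte}, of order $(n+m)/(nt)$ times the log factors; the $\varepsilon_0$ term contributes only $O(1/(nt))$. To pass from \eqref{eq:lograte} to \eqref{eq:statrate}, I would invoke the quantitative identifiability behind Proposition~\ref{prop:identifiability}. Centering separates the node effects from the projected inner products: double-centering $\bm\Delta^A$ isolates $\bm Z\bm Z^\top-\bm Z_\star\bm Z_\star^\top$, and row and column means recover $\bm\alpha,\bm\beta,\bm\gamma$. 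The eigen-gap $\lambda_{\min}\ge\kappa n$ then yields a Procrustes bound $\min_{\bm{Q}}\norm{\bm Z-\bm Z_\star\bm Q}_{\fro}^2\lesssim\norm{\bm Z\bm Z^\top-\bm Z_\star\bm Z_\star^\top}_{\fro}^2/(\kappa n)$. Given $\bm Z$, the matrix $\bm W$ is recovered from $\bm Z\bm W^\top$ with the same $\bm Q$.

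The main obstacle is localization. The curvature of $h$ is only quadratic for bounded $\Delta$, but the node effects are unbounded in the feasible set. I would first prove a crude bound: convexity of $\mathcal{L}^{(t)}$ in the node effects for fixed $\bm Z,\bm W$, together with margin concentration, shows that the fitted margins match the observed margins within constant factors. This forces each $\widehat\alpha_i,\widehat\beta_i,\widehat\gamma_k$ to lie within $O(1)$ of the truth once $t\ge C(1+m/n)\{\log((2n+m)/\delta)+\log\log(e^e+t)\}$, since then every margin concentrates at relative accuracy $1/2$. The same compactness, with bounded latent radii and coercivity in the node effects, gives existence of exact minimizers. For the fixed-$T$ statement, I would avoid peeling and union-bound only over $t\in[\lceil aT\rceil,T]$. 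All variances there are comparable to $T$ within factors depending on $a$, so a single Freedman application at the terminal martingale time controls the maximal process. This removes the $\log\log$ term and gives the $C_a$ bound.
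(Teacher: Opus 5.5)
Your skeleton matches the paper's: a basic inequality against the feasible $\Theta_\star^{(t)}$, margin matching for localization and existence, matrix Freedman with dyadic peeling, and Procrustes recovery. Your noise split is also a workable alternative: you bound the node-effect pieces through margin sums and the rank-$2d$ pieces through $\norm{\cdot}_{\op}$. The paper instead uses the single bound $|\langle\bm M,\bm F\rangle|\le\sqrt{2d+3}\,\norm{\bm M}_{\op}\norm{\bm F}_{\fro}$. It needs no split because $\bm F=[\bm D_A/\sqrt n,\bm D_B/\sqrt m]$ already has rank at most $2d+3$: $\one_n$ and the node-effect differences lie in its column space.

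The genuine gap is the diagonal of the $\bm A$ block. The likelihood, your curvature term, and the left side of \eqref{eq:lograte} control only $\sum_{i<\ell}(\bm D_A)_{i\ell}^2$. Both your balancing step and your identification step use the full matrix. Write $\bm u=\widehat{\bm\alpha}-\bm\alpha_\star$ and $\bm G=\widehat{\bm Z}\widehat{\bm Z}^\top-\bm Z_\star\bm Z_\star^\top$, so that $\norm{\bm D_A}_{\fro}^2=2n\norm{\bm u}_2^2+2(\one_n^\top\bm u)^2+\norm{\bm G}_{\fro}^2$. Balancing needs the curvature to dominate $n\norm{\bm u}_2^2$ and $\norm{\bm G}_{\fro}^2$ separately. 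Double-centering $\bm\Delta^A$ needs the unobserved entries $(\bm D_A)_{ii}=2u_i+\norm{\widehat{\bm z}_i}_2^2-\norm{\bm z_{\star,i}}_2^2$.

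Off the diagonal, $u_i+u_\ell$ and $G_{i\ell}$ can cancel. The paper's $n=3$ example, with $\bm Z'=\bm Z/2$ and a shifted $\bm\alpha'$, has identical means at every observed entry without orthogonal equivalence. So Proposition~\ref{prop:identifiability}, which assumes equal complete predictor matrices, does not give what you need, and the hypothesis $n\ge n_0$ has to be used exactly here.

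The missing ingredient is Lemma~\ref{lem:recovery}:
\begin{itemize}
\item Choose $\bm Q$ by the Procrustes bound \eqref{eq:proc}, using $\sigma_d^2(\bm Z_\star)\ge n\kappa$.
\item Use the radius constraints to get $|G_{ii}|\le4R_z\norm{\widehat{\bm z}_i-\bm Q^\top\bm z_{\star,i}}_2$.
\item Deduce $\sum_i(\bm D_A)_{ii}^2\le Cn^{-1}\norm{\bm D_A}_{\fro}^2$, and hence $\norm{\bm D_A}_{\fro}^2\le4\sum_{i<\ell}(\bm D_A)_{i\ell}^2$ for $n\ge n_0$.
\end{itemize}
With this, your split closes and gives the stated rate.

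Three smaller corrections:
\begin{itemize}
\item Passing from fitted margins to $\widehat\alpha_i-\alpha_{\star,i}=O(1)$ also needs $n\ge n_0$, to exclude a single dominant $e^{\widehat\alpha_i}$ as in Lemma~\ref{lem:meansfrommargins}.
\item Your variance proxy $t+(nt/m)(m/n)$ equals $2t$. That understates the unweighted $\bm B$ residual, whose column-side variance is of order $nt/m$ when $m<n$. Either weight the $\bm B$ block by $\sqrt{m/n}$, giving proxy $(1+m/n)t$, or treat the blocks separately; the rate $(n+m)/(nt)$ is unchanged.
\item For fixed $T$, the $\log\log$ term disappears because one maximal Freedman bound with variance budget of order $T$ covers all $t\le T$. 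A union bound over the checkpoints would instead cost $\log T$.
\end{itemize}
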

Theorem~\ref{thm:stat} quantifies the accuracy of the statistical summaries estimated from the cumulative networks. Bound~\eqref{eq:statrate} controls the average squared errors of the estimated node effects $\widehat{\bm{\alpha}}^{(t)},\widehat{\bm{\beta}}^{(t)},\widehat{\bm{\gamma}}^{(t)}$ and latent positions $\widehat{\bm{Z}}^{(t)},\widehat{\bm{W}}^{(t)}$, with the latent positions compared up to a common orthogonal transformation. Bound~\eqref{eq:lograte} controls the average squared errors of the corresponding log-means, that is, the logarithms of the cumulative means, in both networks. Together, Bounds~\eqref{eq:statrate} and~\eqref{eq:lograte} provide statistical guarantees for the estimated parameters and for the participation and collaboration patterns that the estimated parameters describe. Apart from logarithmic factors, both error bounds scale as $(n+m)/(nt)=1/t+m/(nt)$. Under Assumption~\ref{as:scale}, each worker accumulates activity of order $t$, while each task type accumulates participation of order $nt/m$. The two terms in the rate are the reciprocals of the two activity scales. When $m=O(n)$, the rate is of order $1/t$. When $m\gg n$, the term $m/(nt)$ dominates, because less participation data is available for each task type. Thus, a larger number $m$ of task types requires more work records to obtain the same error guarantee. For fixed $n$ and $m$, both bounds tend to zero as $t$ increases.
\begin{remark}
    The first conclusion of Theorem~\ref{thm:stat} holds simultaneously at all checkpoints satisfying the stated lower bound on $t$, which we call the information threshold, and the simultaneous control adds the $\log\log(e^e+t)$ term. For a fixed $T$, the second conclusion bounds the largest error over $\lceil aT\rceil\le t\le T$ without the $\log\log(e^e+t)$ term and therefore gives a bound of order $(n+m)/(nT)$, up to logarithmic factors, uniformly over $\lceil aT\rceil\le t\le T$.
\end{remark}

Let $\widehat{\Theta}_\star^{(t)}$ denote an exact minimizer of $\mathcal{L}^{(t)}$ over the feasible parameter set, whose existence at the checkpoints covered by Theorem~\ref{thm:stat} is established by Lemma~\ref{lem:fitmeans}. We analyze the optimization updates from a checkpoint $t_0$ at which $\widehat{\Theta}^{(t_0)}$ satisfies \eqref{eq:opt}. The required lower bound on $t_0$ is specified in Theorem~\ref{thm:optimization}. For $t\ge2$, define the checkpoint increments $\Delta\bm{A}^{(t)}=\bm{A}^{(t)}-\bm{A}^{(t-1)}$ and $\Delta\bm{B}^{(t)}=\bm{B}^{(t)}-\bm{B}^{(t-1)}$. The checkpoint increments aggregate the contributions of the work records added between checkpoints $t-1$ and $t$.

\begin{assumption}\label{as:opt}
    There is a constant $K_1>0$, independent of $n,m,T$, such that, for every checkpoint $t=t_0+1,\ldots,T$:
    \begin{equation}
      \label{eq:batch} \frac1n\sum_i\left\{\sum_{\ell\ne i}\Delta A_{i\ell}^{(t)}+\sum_k\Delta B_{ik}^{(t)}\right\}^{\!2}+\frac m{n^2}\sum_k\left\{\sum_i\Delta B_{ik}^{(t)}\right\}^{\!2}\le K_1^2\left(1+\frac mn\right).
    \end{equation}
\end{assumption}
Assumption~\ref{as:model} bounds each work record's contribution, whereas Assumption~\ref{as:opt} controls the checkpoint increments $\Delta\bm{A}^{(t)}$ and $\Delta\bm{B}^{(t)}$. The two terms in \eqref{eq:batch} average squared increments over workers and task types, with the task-type term scaled by $m/n$ to reflect the count scale $nt/m$ of each task type under Assumption~\ref{as:scale}. The factor $1+m/n$ accommodates sparse integer counts: when $m\ge n$, one new participation per worker, each assigned to a distinct task type, with no new collaborations gives the two terms $1$ and $m/n$. 

Following \citet{he2025semiparametric}, we prescribe parameter-specific step sizes for the theoretical analysis, scaled according to the cumulative count scales in Assumption~\ref{as:scale}. Let $\bm{P}$ denote the orthogonal projection enforcing the centering constraints: it subtracts the column means from $\bm{Z}$ and $\bm{W}$ and the mean from $\bm{\beta}$, while leaving $\bm{\alpha}$ and $\bm{\gamma}$ unchanged. Starting from a feasible estimate $\widehat{\Theta}^{(t_0)}$ satisfying \eqref{eq:opt}, initialize each subsequent checkpoint at $\widehat{\Theta}^{(t,0)}=\widehat{\Theta}^{(t-1)}$. For a step-size constant $\eta>0$ and an iteration count $J$, both independent of $n,m,T$, perform the following updates for $j=0,\ldots,J-1$, with all gradients evaluated at $\widehat{\Theta}^{(t,j)}$:
\begin{equation}
  \label{eq:algorithm} \widehat{\Theta}^{(t,j+1)}=\bm{P}\!\left[\widehat{\Theta}^{(t,j)}-\frac{\eta}{t}\left(\nabla_{\bm{\alpha}}\mathcal{L}^{(t)},\nabla_{\bm{\beta}}\mathcal{L}^{(t)},\frac mn\nabla_{\bm{\gamma}}\mathcal{L}^{(t)},\nabla_{\bm{Z}}\mathcal{L}^{(t)},\frac mn\nabla_{\bm{W}}\mathcal{L}^{(t)}\right)^\top\right].
\end{equation}
\begin{remark}
Following the numerical implementation of \citet{he2025semiparametric}, we use adaptive Barzilai--Borwein step sizes in numerical optimization. Our implementation in Supplementary~\ref{app:fit} uses an adaptive scalar step size, obtained from Barzilai--Borwein step sizes and Armijo backtracking, to minimize the penalized objective \eqref{eq:empirical-objective}. 
\end{remark}

Set $\widehat{\Theta}^{(t)}=\widehat{\Theta}^{(t,J)}$, which we call the computed estimate at checkpoint $t$. Theorem~\ref{thm:optimization} analyzes the step sizes and fixed iteration count in \eqref{eq:algorithm}.

\begin{theorem}\label{thm:optimization}
    Suppose Assumptions~\ref{as:model}--\ref{as:opt} hold and $n\ge n_0$. For fixed $\delta\in(0,1)$, choose a deterministic integer $t_0<T$ satisfying $t_0\ge C(n+m)^2\log\{(2n+m)/\delta\}/n$, and let $\widehat{\Theta}^{(t_0)}$ be a feasible estimate satisfying \eqref{eq:opt}. There exist $\eta>0$, an integer $J\ge1$, and $q\in(0,1)$, independent of $n,m,T$, such that, with probability at least $1-\delta$, all iterates of \eqref{eq:algorithm} remain feasible and the following bounds hold simultaneously for $t_0\le t\le T$:
    \begin{equation}
      \label{eq:optrate} 
      \dist^2(\widehat{\Theta}^{(t)},\widehat{\Theta}_\star^{(t)})\le C\frac{\varepsilon_0}{nt_0}q^{2(t-t_0)}+C\frac{n+m}{nt^2},
    \end{equation}
    \begin{equation}
      \label{eq:gapderived} 
      0\le\mathcal{L}^{(t)}(\widehat{\Theta}^{(t)})-\mathcal{L}^{(t)}(\widehat{\Theta}_\star^{(t)})\le C(1+\varepsilon_0),
    \end{equation}
    where the constants may additionally depend on $K_1$.
\end{theorem}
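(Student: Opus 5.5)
The plan is to prove Theorem~\ref{thm:optimization} by a local linear-convergence argument for the scaled projected gradient \eqref{eq:algorithm}, combined with a drift bound on the exact minimizers across checkpoints. The argument is an inexact-contraction recursion of the form
\[
e_t\le q\,e_{t-1}+\rho_t,\qquad e_t=\dist(\widehat{\Theta}^{(t)},\widehat{\Theta}_\star^{(t)}),
\]
with drift term $\rho_t\lesssim \{(n+m)/n\}^{1/2}/t$. Unrolling this recursion and squaring gives \eqref{eq:optrate}.

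\textbf{Step 1: restricted strong convexity and smoothness on a neighbourhood.}
Work on the event of Theorem~\ref{thm:stat}, valid simultaneously for $t\ge t_0$. Because $t_0\gtrsim (n+m)^2\log\{(2n+m)/\delta\}/n$, the statistical rate \eqref{eq:statrate} is at most $c/(n+m)$. Hence $\widehat{\Theta}_\star^{(t)}$ lies in a small $\dist$-ball around $\Theta_\star^{(t)}$. Assumption~\ref{as:geom} with $\kappa$ then keeps the latent positions of $\widehat{\Theta}_\star^{(t)}$ strictly inside the radius constraints, so the radius constraints are inactive.

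Use the metric in which $\bm\gamma$ and $\bm W$ are weighted by $1/m$ and the rest by $1/n$. The step sizes $(\eta/t)(1,1,m/n,1,m/n)$ are exactly the inverse preconditioning for this metric. The Hessian of $\mathcal L^{(t)}$ has the form $\sum\theta\,\nabla\log\theta\,\nabla\log\theta^\top$ plus a residual term. By \eqref{eq:scale}, $\theta^A\asymp t/n$ and $\theta^B\asymp t/m$, so in the preconditioned coordinates the Hessian is comparable to $t$ times a Gram operator. That operator is bounded above by a constant and, on the quotient by orthogonal rotations, bounded below via $\kappa$ and the centering constraints; this is the identifiability argument of Proposition~\ref{prop:identifiability} made quantitative.

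The residual term $\sum(\theta-A)\nabla^2\log\theta$ is controlled by matrix concentration, Freedman-type for the martingale increments, using Assumption~\ref{as:model}. Its operator norm is of order $\sqrt{t(n+m)/n}$ up to logarithmic factors, which is $o(t)$ for $t\ge t_0$.

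The standard Procrustes-aligned gradient-descent lemma for factorized objectives then gives: for $\eta$ small, one step contracts $\dist$ to $\widehat{\Theta}_\star^{(t)}$ by a factor $q_0<1$ within a ball of constant radius $r_0/\sqrt{n+m}$. The projection $\bm P$ is nonexpansive and commutes with the constraint set, so it does not break the contraction. After $J$ steps the factor is $q=q_0^J$.

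\textbf{Step 2: drift of minimizers.}
Bound $\dist(\widehat{\Theta}_\star^{(t)},\widehat{\Theta}_\star^{(t-1)})$. Compare first-order conditions: the gradient of $\mathcal L^{(t)}$ at $\widehat{\Theta}_\star^{(t-1)}$ equals the gradient of $\mathcal L^{(t)}-\mathcal L^{(t-1)}$ there. That difference is a linear term in $\Delta A^{(t)},\Delta B^{(t)}$, plus the change in means after rescaling $\alpha,\beta,\gamma$ by $\log(t/(t-1))$.

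Its preconditioned norm is bounded through Assumption~\ref{as:opt}: \eqref{eq:batch} is precisely the squared norm of the node-effect gradient components in this metric, and the latent components are dominated by the same quantity because the positions are bounded. The result is
\[
\text{gradient}\lesssim \sqrt{(1+m/n)\,n}\cdot\frac{1}{\sqrt{n}}\quad\text{in normalized units}.
\]
Dividing by the strong-convexity constant $\asymp t$ gives $\rho_t\le C\sqrt{(n+m)/n}/t$.

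\textbf{Step 3: induction and the objective gap.}
The initial error satisfies $e_{t_0}^2\le C\varepsilon_0/(nt_0)$, by strong convexity $\asymp nt_0$ in unnormalized $\dist^2$ units applied to \eqref{eq:opt}. Induct on $t$: if $e_{t-1}$ is within the basin, then so is $\widehat{\Theta}^{(t,0)}$, since
\[
\dist(\widehat{\Theta}^{(t-1)},\widehat{\Theta}_\star^{(t)})\le e_{t-1}+\rho_t,
\]
and $t_0$ is large enough to make this small. Step 1 then gives $e_t\le q(e_{t-1}+\rho_t)$. Unrolling, and using $\sum_s q^{t-s}/s\lesssim 1/t$, yields \eqref{eq:optrate}. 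Feasibility of all iterates follows because the basin lies inside the radius constraints.

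For \eqref{eq:gapderived}, apply smoothness, which is of order $nt$ in $\dist^2$ units:
\[
\mathcal L^{(t)}(\widehat{\Theta}^{(t)})-\mathcal L^{(t)}(\widehat{\Theta}_\star^{(t)})\le Cnt\,e_t^2\le C\varepsilon_0+C(n+m)/t .
\]
The last term is bounded by a constant since $t\ge t_0\ge n+m$.

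\textbf{Main obstacle.}
The main obstacle is Step 1, namely the restricted strong convexity uniformly over $t\ge t_0$ with a non-convex, rotation-invariant factor part and history-dependent counts. I would first establish it for the population surrogate, where $A$ is replaced by the means $\theta_\star^{(t)}$, using $\kappa$ and the centering constraints. I would then transfer it to the empirical objective via a uniform Freedman bound over a peeling in $t$. The requirement $t_0\gtrsim (n+m)^2/n$ is where the concentration error and the basin radius must be balanced.
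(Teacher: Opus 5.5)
Your plan follows the paper's proof closely:
\begin{itemize}
\item localization of every exact minimizer from the statistical rate under the threshold on $t_0$;
\item curvature from a mean-weighted Gram term plus a residual term controlled in operator norm;
\item a drift bound from the gradient of the objective increment, which is exactly the quantity \eqref{eq:batch} controls (cf.\ \eqref{eq:incrementgradient});
\item a $J$-step contraction, the recursion $e_t\le qe_{t-1}+Cq\sqrt{(n+m)/n}/t$, and a quadratic bound on the gap.
\end{itemize}

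The genuine gap is in your Step~2. You pass from ``the gradient of $\mathcal{L}^{(t)}$ at $\widehat{\Theta}_\star^{(t-1)}$ is small'' to ``$\widehat{\Theta}_\star^{(t-1)}$ is close to $\widehat{\Theta}_\star^{(t)}$'' by dividing by the curvature constant. In this nonconvex, rotation-invariant problem that step is valid only if the optimally aligned segment between the two minimizers already lies where restricted strong convexity of $\mathcal{L}^{(t)}$ holds, i.e., within distance $r_0/\sqrt{n+m}$. Your Step~1 localizes $\widehat{\Theta}_\star^{(t)}$ only around its own target $\Theta_\star^{(t)}$, and $\widehat{\Theta}_\star^{(t-1)}$ only around $\Theta_\star^{(t-1)}$. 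The targets are history-dependent and no assumption relates $\Theta_\star^{(t)}$ to $\Theta_\star^{(t-1)}$, so you cannot chain through the truths. Your induction in Step~3 uses $\rho_t$ to put each warm start in the basin, so the whole tracking argument rests on this missing a priori localization.

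The paper closes the gap in Step~3 of its proof. It applies Lemma~\ref{lem:deterministic} to the \emph{new} data $(\bm{A}^{(t)},\bm{B}^{(t)})$ with $u=t-1$ and the \emph{old} reference $\Theta_\star^{(t-1)}$. The noise matrix is then $\bm{M}_{S_{t-1}}+[\tfrac{\sqrt n}{2}\Delta\bm{A}^{(t)},\sqrt m\,\Delta\bm{B}^{(t)}]$. Assumption~\ref{as:opt} bounds the squared operator norm of the increment part by its squared Frobenius norm, at most $Cn(n+m)$, as in \eqref{eq:batchmatrix}. Set $\ell_s=\log\{(2n+m)/\delta\}+\log\log(e^e+s)$. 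This gives $\dist^2(\widehat{\Theta}_\star^{(t)},\Theta_\star^{(t-1)})\le C(n+m)\ell_{t-1}/\{n(t-1)\}$ once $t-1\ge C\max\{n,\ell_{t-1}\}$, which the threshold on $t_0$ guarantees. The triangle inequality through $\Theta_\star^{(t-1)}$ then yields $\dist(\widehat{\Theta}_\star^{(t)},\widehat{\Theta}_\star^{(t-1)})\le r_0/(4\sqrt{n+m})$, and only after that is the drift bound \eqref{eq:driftmain} derived.

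Two smaller corrections:
\begin{itemize}
\item At a fixed $\Theta$, the difference $\mathcal{L}^{(t)}-\mathcal{L}^{(t-1)}$ is exactly linear in the increments, because the exponential terms cancel as in \eqref{eq:lossincrement}. No term from rescaling node effects by $\log\{t/(t-1)\}$ appears.
\item The curvature lower bound is not simply Proposition~\ref{prop:identifiability} made quantitative, since that argument uses the unobserved diagonal of $\bm{A}$. You need the off-diagonal comparison \eqref{eq:offdiag} of Lemma~\ref{lem:recovery}, which is where $n\ge n_0$ is required; the paper's $n=3$ example shows identifiability can fail otherwise.
\end{itemize}
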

Theorem~\ref{thm:optimization} shows that a fixed number $J$ of updates per checkpoint suffices to maintain a bounded optimization error. In \eqref{eq:optrate}, the contribution of the initial error decays geometrically, while the term $(n+m)/(nt^2)$ accounts for changes in the exact minimizer as new work records are added. Bound~\eqref{eq:gapderived} ensures that the computed estimates satisfy a fixed optimization-error bound throughout $t_0\le t\le T$. Combining Bound~\eqref{eq:gapderived} with Theorem~\ref{thm:stat} gives the following corollary, which bounds the statistical error of the computed estimates relative to the true parameters $\Theta_\star^{(t)}$.

\begin{corollary}\label{cor:computed}
    Fix $\delta\in(0,1)$ and $a\in(0,1)$. Suppose the conditions of Theorem~\ref{thm:optimization} hold and $\lceil aT\rceil\ge t_0$. Then the estimates $\widehat{\Theta}^{(t)}$ produced by \eqref{eq:algorithm}, with $\eta$ and $J$ chosen as in Theorem~\ref{thm:optimization}, satisfy, with probability at least $1-\delta$:
    \begin{equation}
      \label{eq:computedtail} 
      \max_{\lceil aT\rceil\le t\le T}\dist^2(\widehat{\Theta}^{(t)},\Theta_\star^{(t)})\le C_a\frac{n+m}{nT}\log\frac{2n+m}{\delta}.
    \end{equation}
    The same bound holds for the maximum of the normalized squared log-mean error on the left-hand side of \eqref{eq:lograte} over these checkpoints.
\end{corollary}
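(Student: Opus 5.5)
The corollary follows by combining Theorem~\ref{thm:optimization} with the second (fixed-$T$) conclusion of Theorem~\ref{thm:stat}, using a union bound over two events of probability at least $1-\delta/2$ each. First apply Theorem~\ref{thm:optimization} with $\delta/2$. On an event $E_1$ of probability at least $1-\delta/2$, all iterates of \eqref{eq:algorithm} are feasible. In addition, by \eqref{eq:gapderived}, each computed estimate $\widehat{\Theta}^{(t)}$ with $t_0\le t\le T$ satisfies
\[
\mathcal{L}^{(t)}(\widehat{\Theta}^{(t)})\le \mathcal{L}^{(t)}(\widehat{\Theta}_\star^{(t)})+C(1+\varepsilon_0).
\]
Since $\widehat{\Theta}_\star^{(t)}$ is an exact minimizer over the feasible set, this is exactly \eqref{eq:opt} with the modified constant $\varepsilon_0'=C(1+\varepsilon_0)$. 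This constant depends only on the allowed quantities and on $K_1$. The dependence on $K_1$ is acceptable, because the corollary's constants inherit those of Theorem~\ref{thm:optimization}.

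Second, apply Theorem~\ref{thm:stat} with $\delta/2$ and with $\varepsilon_0$ replaced by $\varepsilon_0'$. Its fixed-$T$ conclusion needs $T\ge C_a(1+m/n)\log\{(4n+2m)/\delta\}$, and we must check this is implied. Since $T>t_0\ge C(n+m)^2\log\{(2n+m)/\delta\}/n$ and $(n+m)^2/n\ge n(1+m/n)\ge 1+m/n$, enlarging the constant in the lower bound on $t_0$ covers it. The change from $\delta$ to $\delta/2$ inside the logarithm costs only a constant factor, because $\log\{2(2n+m)/\delta\}\le 2\log\{(2n+m)/\delta\}$ when $2n+m\ge 6$. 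This gives an event $E_2$ on which every feasible estimate obeying \eqref{eq:opt} at checkpoints $\lceil aT\rceil\le t\le T$ has both $\dist^2$ error and normalized log-mean error at most $C_a(n+m)\log\{(2n+m)/\delta\}/(nT)$.

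The one point needing care is that Theorem~\ref{thm:stat}'s statement on its event holds for every feasible estimate satisfying \eqref{eq:opt}. The event $E_2$ is therefore defined by the data alone, not by the estimator. Once on $E_1\cap E_2$, we may plug in the data-dependent computed iterates without worrying about dependence between the algorithm and the concentration event. The condition $\lceil aT\rceil\ge t_0$ ensures that all checkpoints in the range are covered by \eqref{eq:gapderived}.

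On $E_1\cap E_2$, which has probability at least $1-\delta$, both bounds hold, and this proves \eqref{eq:computedtail} and the log-mean analogue. I expect no real obstacle. The only delicate points are the uniformity of Theorem~\ref{thm:stat} over all near-minimizers and the check of the $T$ threshold via $t_0$, both of which are handled as described above.
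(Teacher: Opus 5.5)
Your skeleton matches the paper's: split $\delta$ in half, use \eqref{eq:gapderived} to treat the computed iterates as feasible near-minimizers with error $C(1+\varepsilon_0)$, and rely on the statistical event being defined by the data alone. The step that fails under the stated hypotheses is the threshold check for the fixed-$T$ conclusion of Theorem~\ref{thm:stat}.

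That conclusion needs $T\ge C_a(1+m/n)\log\{2(2n+m)/\delta\}$, and $C_a$ really does grow as $a\to0$. The observed margins must already concentrate at $t=\lceil aT\rceil$, while the variance budget is of order $T$. The proof of Lemma~\ref{lem:concentration} compares deviations of order $\sqrt{Tx}$ with margins of order $aT$, which yields $C_a$ of order $a^{-2}$.

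The corollary's hypotheses are those of Theorem~\ref{thm:optimization}, and the constant in $t_0\ge C(n+m)^2\log\{(2n+m)/\delta\}/n$ does not depend on $a$. Even using $\lceil aT\rceil\ge t_0$, which gives $T\gtrsim t_0/a$, your slack over $(1+m/n)\log\{(2n+m)/\delta\}$ is only of order $(n+m)/a$. So the implication holds only when $n+m\gtrsim a^{-1}$. Your remedy, ``enlarging the constant in the lower bound on $t_0$,'' makes that constant (or $n_0$) depend on $a$. That proves a weaker statement than the corollary.

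The paper avoids this by working one level below Theorem~\ref{thm:stat}. Theorem~\ref{thm:optimization} is proved on the time-uniform event of Lemma~\ref{lem:concentration}, and every $t\ge t_0$ clears the ($a$-free) information threshold there. So that event already supplies the observed-margin bounds \eqref{eq:obsmargins} for all $\lceil aT\rceil\le t\le T$. From the fixed-$T$ event the paper uses only the operator-norm bound \eqref{eq:finiteM}, which holds with probability $1-\delta/2$ with no condition on $T$. It then applies Lemma~\ref{lem:deterministic} with $u=t\ge aT$, $\Theta'=\Theta_\star^{(t)}$, and error bound $C(1+\varepsilon_0)$. The factor $a$ then enters only the constant in the conclusion. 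The bound $T\ge C\log\{(2n+m)/\delta\}$ from the initial threshold absorbs the squared-logarithm and optimization-error terms.

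Replace your black-box use of the fixed-$T$ conclusion with this combination and the argument is repaired. The rest of your proposal, including the handling of $\delta/2$, stands.
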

Corollary~\ref{cor:computed} shows that the estimates obtained with a fixed number of updates per checkpoint attain the statistical error rate in Theorem~\ref{thm:stat}. Under Assumptions~\ref{as:model}--\ref{as:opt} and condition~\eqref{eq:opt} at $t_0$, with constants uniform along a sequence of $(n,m,T,t_0)$, a sufficient condition for uniform consistency in probability over $\lceil aT\rceil\le t\le T$ is $\{(n+m)^2\log(2n+m)\}/(nt_0)\to0$ with $t_0\le\lceil aT\rceil$. Supplementary~\ref{sec:computedproof} gives the derivation.

Each gradient evaluation in \eqref{eq:algorithm} costs $O\{n(n+m)d\}$ operations when all cumulative means $\theta_{i\ell}^{A}$ and $\theta_{ik}^{B}$ are computed directly. The $J$ updates at each of the $T-t_0$ subsequent checkpoints therefore cost $O\{J(T-t_0)n(n+m)d\}$ operations in total, excluding initialization.

\section{Experiments}
\label{sec:experiment}

\subsection{LongBench v2 Multi-Doc QA}

LongBench v2 evaluates understanding and reasoning over realistic long contexts \citep{bai2024longbenchv2}. We use all 125 questions in its Multi-Doc QA subset, covering academic, financial, governmental, legal, and multi-news documents. DeLM also evaluates decentralized evidence aggregation on this subset \citep{mao2026delm}. Each question requires the combination of evidence from multiple documents, so both the division of labor and the reuse of published results become observable. We use the four task types in Section~\ref{sec:Network-Data-Collection}: \emph{extract} for recording source-grounded evidence, \emph{compare} for relating objects along specified dimensions, \emph{inference} for calculations, deductions, and answer synthesis, and \emph{review} for checking the exact version of another worker's published result. A work may deliver several independently reusable results and contribute to several task types. All workers may perform any task type and choose their own sub-tasks and collaborators.

We compare two methods: \emph{AgentCue}, in which workers receive the statistical summary, and \emph{History only}, in which workers receive no statistical summary. A replicate is one realization of the following paired design. The replicate begins with 32 warm-up sessions, which follow the History-only rules, and then copies the resulting shared board and the private memories, that is, the notes that each worker keeps across sessions, into two branches, the AgentCue branch and the History-only branch. Each branch adds 224 sessions. A trajectory consists of the shared warm-up and one branch, and we call the 256 sessions of a trajectory the effective sessions of the trajectory; a replicate runs 480 actual sessions because the two trajectories share the warm-up. The MAS uses 32 persistent workers with GPT-5.6 Sol at high reasoning effort and at most 32 concurrent sessions. Every completed session counts, including sessions without published work or a correct answer. Both methods expose the same sources, published results, work records, review outcomes, and current workload. A worker designates a published answer as the final answer to a question. The first designated answer to a question must use a relevant published result of another worker, and designation does not require a PASS verdict, that is, an approval in a review.

Within these common rules, workers first prioritize uncovered questions, answer-critical gaps, and the continuation of useful existing works. Workers in the AgentCue branch additionally use the statistical summary to narrow local choices, as in Section~\ref{sec:Summary-Guiding-in-Self-Organization}, and search the work records when the statistical summary gives weak support or the current needs contradict the statistical summary. Each new session reads the latest statistical summary once and keeps the same summary version throughout the session. In the AgentCue branch, the MAS estimates the parameters of the latent space model with $d=2$ at eight checkpoints, at $32,64,\ldots,256$ completed effective sessions; estimation at intermediate checkpoints runs in the background while sessions continue. The three replicates, R1--R3, each evaluate the AgentCue branch followed by the History-only branch from their shared warm-up. Supplementary~\ref{app:design}--\ref{app:fit} give the design, prompts, and estimation procedure.

\begin{table}[!htbp]
\centering\small
\caption{Final results over three paired replicates (mean $\pm$ sample standard deviation), with 256 effective sessions per trajectory and 125 questions.}
\label{tab:outcomes}
\begin{tabular}{@{}lrr@{}}
\toprule
Measure & AgentCue & History only \\
\midrule
Correct answers & $76.0 \pm 1.0$ & $72.0 \pm 1.0$ \\
Accuracy (\%) & $60.8 \pm 0.8$ & $57.6 \pm 0.8$ \\
\bottomrule
\end{tabular}
\end{table}

Accuracy compares each question's final designated answer with the withheld benchmark key, using all 125 questions as the denominator. AgentCue achieves mean accuracy of $60.8\%$, compared with $57.6\%$ for History only (Table~\ref{tab:outcomes}). The paired differences are $4.0$, $4.0$, and $1.6$ percentage points, with mean $3.2$ and sample standard deviation $1.39$. Both methods designate answers for all 125 questions in every replicate. Supplementary~\ref{app:resources} reports token use and elapsed time.

\begin{figure}[!ht]
\centering
\includegraphics[width=\textwidth]{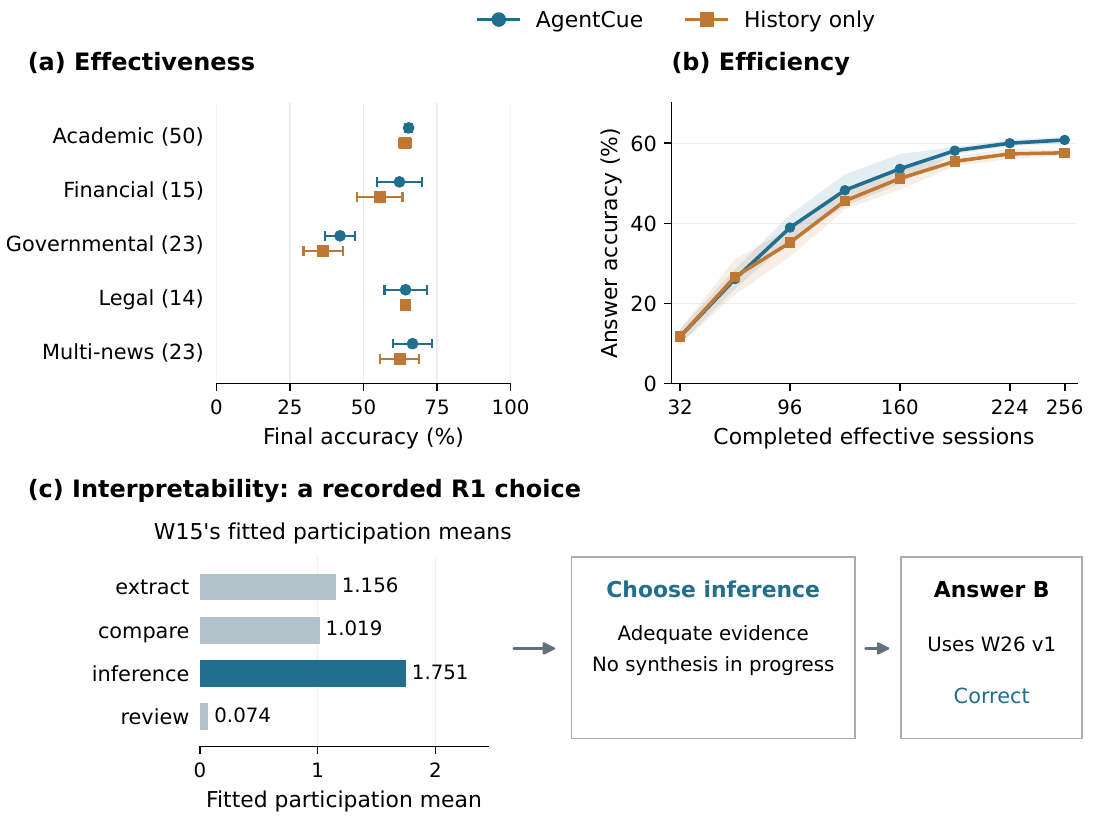}
\caption{Effectiveness, efficiency, and interpretability of AgentCue. (a) Final accuracy by domain; parentheses give question counts. (b) Accuracy over the common effective-session budget, using all 125 questions. Points and lines show means across three paired replicates; error bars and shading show one sample standard deviation. (c) One recorded choice in R1 links the fitted participation means of W15, the choice of \emph{inference}, the use of version 1 of the published result of W26, and the correct designated answer, where W$i$ denotes worker $i$.}
\label{fig:progress}
\end{figure}

Figure~\ref{fig:progress} connects the three evaluation dimensions. Final mean accuracy is higher for AgentCue in four domains and equal in legal documents, with the largest differences in financial and governmental documents. The accuracy curves in panel (b) coincide at 32 effective sessions, the end of the shared warm-up. At 64 effective sessions, mean accuracy is $26.1\%$ for AgentCue and $26.7\%$ for History only; AgentCue has higher mean accuracy at the remaining six checkpoints. Accuracies at intermediate checkpoints include published results of sessions in progress. Panel (c) traces one recorded choice, detailed below.

Figure~\ref{fig:workflow} links all 256 sessions with eight statistical summaries for the AgentCue trajectory in R1. Workers overlap in their task types and move among extract, compare, inference, and review, so the participation of no worker locks in to a single task type, the risk that Section~\ref{sec:Feedback-of-the-Statistical-Summary} analyzes. The upper panels show jointly aligned estimates; the orthogonal alignment leaves every fitted mean unchanged. Supplementary~\ref{app:evolution} explains the construction.

\begin{figure}[!ht]
\centering
\includegraphics[width=\textwidth]{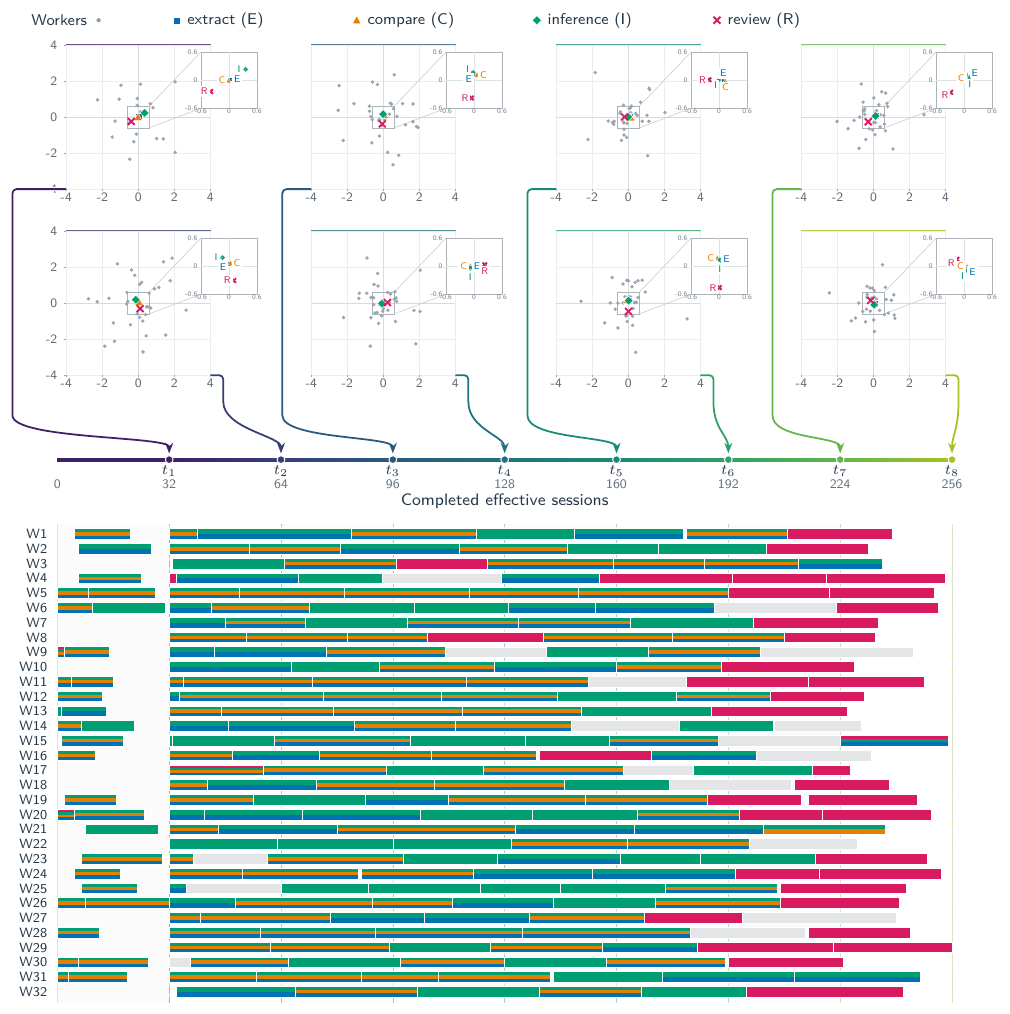}
\caption{Worker sessions and statistical summaries for the AgentCue trajectory in R1. Upper panels show eight statistical summaries jointly aligned to the statistical summary at the final checkpoint by orthogonal Procrustes, with consecutive checkpoints paired by column. Linear axes show worker latent positions; insets enlarge the four task-type latent positions. The color scale links panels and arrows to the checkpoints $t_1,\dots,t_8$. Lower bars show all 256 sessions of 32 workers, mapped from timestamps to the number of completed effective sessions. Equal-height stripes indicate task types of work published in a session; gray denotes no publication.}
\label{fig:workflow}
\end{figure}

The example in Figure~\ref{fig:progress}(c) concerns changing security threats to courts and public officials. W26 publishes source-grounded evidence during warm-up. W15 reads version 1 of the published result of W26 and the statistical summary at the first checkpoint, then identifies inference as the task type with the largest fitted participation mean of W15, $1.751$. The recorded choice of W15 also cites adequate evidence published by other workers and the absence of an answer synthesis in progress. W15 uses W26's result to designate the correct answer B while the first sessions of both branches still run, at 32 completed effective sessions. The Choice Record of W15 and the work records link the read of the statistical summary, the local comparison, the recorded choice, the result version, and the answer, and show how a worker combines fitted participation means with current needs and available evidence.

\section{Discussion}

As an application of statistics to agent systems, our method replaces repeated searches of the shared board in large-scale decentralized MAS with a statistical summary of participation and collaboration records. In the experiment on LongBench v2 Multi-Doc QA, the statistical summary raised mean accuracy under a common effective-session budget and made the recorded choices of workers interpretable. More broadly, our method builds a stage on which statistics acts inside agent systems: the MAS generates network data, the statistical summary guides the MAS, and other statistical methods can join the loop.

We plan two extensions of our method. First, owing to the cost of foundation-model calls, our experiments use moderate-scale MAS, and a direct next step is to scale the experiments up to 1,024 workers, as in Agensh \citep{zhan2026agensh}. Second, as workers accumulate private memories, early collaboration records may no longer reflect current collaboration. Drawing on continual learning \citep{wang2024continual} and agent memory \citep{park2023generative}, the latent space model can down-weight the early collaboration records through discounted edge counts or time-varying latent positions \citep{durante2014dynamic}.



\noindent{\bf Disclosure Statement.}
The authors report there are no competing interests to declare.

\noindent{\bf Declaration of AI Use.}
OpenAI's GPT-6 Astra was used to assist with literature searches and language editing, and GPT-5.6 Sol was used as the foundation model in the experiments.

\noindent{\bf Data Availability Statement.}
LongBench v2: \url{https://github.com/THUDM/LongBench}.

\noindent{\bf Supplementary Material.}
The supplementary file contains the theory and details of the LongBench v2 Multi-Doc QA experiment, including worker prompts, supplementary results, numerical estimation, computational resources, and figure construction. (PDF)

\bibliography{reference}

@article{hoff2002latent,
  author = {Hoff, Peter D. and Raftery, Adrian E. and Handcock, Mark S.},
  title = {Latent Space Approaches to Social Network Analysis},
  journal = {Journal of the American Statistical Association},
  year = {2002},
  volume = {97},
  number = {460},
  pages = {1090--1098},
  doi = {10.1198/016214502388618906}
}

@article{tibshirani1996lasso,
  author = {Tibshirani, Robert},
  title = {Regression Shrinkage and Selection via the Lasso},
  journal = {Journal of the Royal Statistical Society: Series B (Methodological)},
  year = {1996},
  volume = {58},
  number = {1},
  pages = {267--288},
  doi = {10.1111/j.2517-6161.1996.tb02080.x}
}

@misc{huang2026deepswe,
  author = {Huang, Wenqi and Lee, Charley and Tng, Leonard and Ge, Serena},
  title = {{DeepSWE}: Measuring Frontier Coding Agents on Original, Long-Horizon Engineering Tasks},
  year = {2026},
  howpublished = {Datacurve benchmark report},
  url = {https://deepswe.datacurve.ai/blog/deepswe}
}

@inproceedings{yao2023react,
  author = {Yao, Shunyu and Zhao, Jeffrey and Yu, Dian and Du, Nan and Shafran, Izhak and Narasimhan, Karthik and Cao, Yuan},
  title = {{ReAct}: Synergizing Reasoning and Acting in Language Models},
  booktitle = {The Eleventh International Conference on Learning Representations},
  year = {2023}
}

@inproceedings{li2023camel,
  author = {Li, Guohao and Hammoud, Hasan Abed Al Kader and Itani, Hani and Khizbullin, Dmitrii and Ghanem, Bernard},
  title = {{CAMEL}: Communicative Agents for `Mind' Exploration of Large Language Model Society},
  booktitle = {Advances in Neural Information Processing Systems},
  year = {2023},
  volume = {36},
  pages = {51991--52008},
  doi = {10.52202/075280-2264}
}

@inproceedings{qian2024chatdev,
  author = {Qian, Chen and Liu, Wei and Liu, Hongzhang and Chen, Nuo and Dang, Yufan and Li, Jiahao and Yang, Cheng and Chen, Weize and Su, Yusheng and Cong, Xin and Xu, Juyuan and Li, Dahai and Liu, Zhiyuan and Sun, Maosong},
  title = {{ChatDev}: Communicative Agents for Software Development},
  booktitle = {Proceedings of the 62nd Annual Meeting of the Association for Computational Linguistics (Volume 1: Long Papers)},
  year = {2024},
  pages = {15174--15186},
  publisher = {Association for Computational Linguistics},
  doi = {10.18653/v1/2024.acl-long.810}
}

@article{sun2026lambda,
  author = {Sun, Maojun and Han, Ruijian and Jiang, Binyan and Qi, Houduo and Sun, Defeng and Yuan, Yancheng and Huang, Jian},
  title = {{LAMBDA}: A Large Model Based Data Agent},
  journal = {Journal of the American Statistical Association},
  year = {2026},
  volume = {121},
  number = {553},
  pages = {1--13},
  doi = {10.1080/01621459.2025.2510000}
}

@inproceedings{wu2024autogen,
  author = {Wu, Qingyun and Bansal, Gagan and Zhang, Jieyu and Wu, Yiran and Li, Beibin and Zhu, Erkang and Jiang, Li and Zhang, Xiaoyun and Zhang, Shaokun and Liu, Jiale and Awadallah, Ahmed Hassan and White, Ryen W. and Burger, Doug and Wang, Chi},
  title = {{AutoGen}: Enabling Next-Gen {LLM} Applications via Multi-Agent Conversations},
  booktitle = {First Conference on Language Modeling},
  year = {2024}
}

@inproceedings{hong2024metagpt,
  author = {Hong, Sirui and Zhuge, Mingchen and Chen, Jonathan and Zheng, Xiawu and Cheng, Yuheng and Wang, Jinlin and Zhang, Ceyao and Wang, Zili and Yau, Steven Ka Shing and Lin, Zijuan and Zhou, Liyang and Ran, Chenyu and Xiao, Lingfeng and Wu, Chenglin and Schmidhuber, J{\"u}rgen},
  title = {{MetaGPT}: Meta Programming for a Multi-Agent Collaborative Framework},
  booktitle = {The Twelfth International Conference on Learning Representations},
  year = {2024}
}

@misc{huang2023agentcoder,
  author = {Huang, Dong and Zhang, Jie M. and Luck, Michael and Bu, Qingwen and Qing, Yuhao and Cui, Heming},
  title = {{AgentCoder}: Multi-Agent-Based Code Generation With Iterative Testing and Optimisation},
  year = {2023},
  howpublished = {arXiv preprint arXiv:2312.13010},
  doi = {10.48550/arXiv.2312.13010}
}

@misc{apartsin2026swarmcf,
  author = {Apartsin, Alexander and Meshulam, Yigal and Aperstein, Yehudit},
  title = {Acting on the Unseen: Communication-Free Collaborative Filtering for Decentralized Multi-Robot Task Allocation},
  year = {2026},
  howpublished = {arXiv preprint arXiv:2605.25584},
  doi = {10.48550/arXiv.2605.25584}
}

@article{he2025semiparametric,
  author = {He, Yinqiu and Sun, Jiajin and Tian, Yuang and Ying, Zhiliang and Feng, Yang},
  title = {Semiparametric Modeling and Analysis for Longitudinal Network Data},
  journal = {The Annals of Statistics},
  year = {2025},
  volume = {53},
  number = {4},
  pages = {1406--1430},
  doi = {10.1214/25-AOS2506}
}

@inproceedings{bai2024longbenchv2,
  author = {Bai, Yushi and Tu, Shangqing and Zhang, Jiajie and Peng, Hao and Wang, Xiaozhi and Lv, Xin and Cao, Shulin and Xu, Jiazheng and Hou, Lei and Dong, Yuxiao and Tang, Jie and Li, Juanzi},
  title = {{LongBench v2}: Towards Deeper Understanding and Reasoning on Realistic Long-Context Multitasks},
  booktitle = {Proceedings of the 63rd Annual Meeting of the Association for Computational Linguistics (Volume 1: Long Papers)},
  year = {2025},
  pages = {3639--3664},
  publisher = {Association for Computational Linguistics},
  doi = {10.18653/v1/2025.acl-long.183}
}

@article{butts2008relational,
  author = {Butts, Carter T.},
  title = {A Relational Event Framework for Social Action},
  journal = {Sociological Methodology},
  year = {2008},
  volume = {38},
  number = {1},
  pages = {155--200},
  doi = {10.1111/j.1467-9531.2008.00203.x}
}

@misc{chen2026meshheal,
  author = {Chen, Keru and Lin, Sen and Liang, Yingbin and Bastian, Nathaniel D. and Zou, Shaofeng},
  title = {{MeshHeal}: Two-Timescale Self-Healing for Gray Failures in Decentralized {LLM} Agent Networks},
  year = {2026},
  howpublished = {arXiv preprint arXiv:2609.29015v1},
  doi = {10.48550/arXiv.2609.29015}
}

@inproceedings{durante2014dynamic,
  author = {Durante, Daniele and Dunson, David B.},
  title = {{Bayesian} Logistic {Gaussian} Process Models for Dynamic Networks},
  booktitle = {Proceedings of the Seventeenth International Conference on Artificial Intelligence and Statistics},
  series = {Proceedings of Machine Learning Research},
  year = {2014},
  volume = {33},
  pages = {194--201},
  publisher = {PMLR}
}

@article{fan2024generalized,
  author = {Fan, Xinyan and Fang, Kuangnan and Pu, Dan and Qin, Ruixuan},
  title = {Generalized Latent Space Model for One-Mode Networks With Awareness of Two-Mode Networks},
  journal = {Computational Statistics \& Data Analysis},
  year = {2024},
  volume = {193},
  pages = {107915},
  doi = {10.1016/j.csda.2023.107915}
}

@misc{gao2026autoscientists,
  author = {Gao, Shanghua and Fang, Ada and Zitnik, Marinka},
  title = {{AutoScientists}: Self-Organizing Agent Teams for Long-Running Scientific Experimentation},
  year = {2026},
  howpublished = {arXiv preprint arXiv:2605.28655},
  doi = {10.48550/arXiv.2605.28655}
}

@misc{kim2025scaling,
  author = {Kim, Yubin and Gu, Ken and Park, Chanwoo and Park, Chunjong and Schmidgall, Samuel and Heydari, A. Ali and Yan, Yao and Zhang, Zhihan and Zhuang, Yuchen and Liu, Yun and Malhotra, Mark and Liang, Paul Pu and Park, Hae Won and Yang, Yuzhe and Xu, Xuhai and Du, Yilun and Patel, Shwetak and Althoff, Tim and McDuff, Daniel and Liu, Xin},
  title = {Towards a Science of Scaling Agent Systems},
  year = {2025},
  howpublished = {arXiv preprint arXiv:2512.08296v3},
  doi = {10.48550/arXiv.2512.08296}
}

@article{lerner2023coauthor,
  author = {Lerner, J{\"u}rgen and H{\^a}ncean, Marian-Gabriel},
  title = {Micro-Level Network Dynamics of Scientific Collaboration and Impact: Relational Hyperevent Models for the Analysis of Coauthor Networks},
  journal = {Network Science},
  year = {2023},
  volume = {11},
  number = {1},
  pages = {5--35},
  doi = {10.1017/nws.2022.29}
}

@misc{li2026beyondmoreagents,
  author = {Li, Songyuan and Jia, Xiaolong and Tam, Wai Fong and Tabaro, Leon and Li, Qilei and Liu, Gaoyang and Wang, Chen and Abdelmoniem, Ahmed M.},
  title = {Beyond More Agents: A Survey of Collaboration Mechanisms in Multi-Agent {LLM} Systems},
  year = {2026},
  howpublished = {SSRN 7243979},
  doi = {10.2139/ssrn.7243979}
}

@inproceedings{liu2024dylan,
  author = {Liu, Zijun and Zhang, Yanzhe and Li, Peng and Liu, Yang and Yang, Diyi},
  title = {A Dynamic {LLM}-Powered Agent Network for Task-Oriented Agent Collaboration},
  booktitle = {First Conference on Language Modeling},
  year = {2024}
}

@article{loyal2023eigenmodel,
  author = {Loyal, Joshua Daniel and Chen, Yuguo},
  title = {An Eigenmodel for Dynamic Multilayer Networks},
  journal = {Journal of Machine Learning Research},
  year = {2023},
  volume = {24},
  number = {128},
  pages = {1--69}
}

@misc{lu2024morphagent,
  author = {Lu, Siyuan and Shao, Jiaqi and Luo, Bing and Lin, Tao},
  title = {{MorphAgent}: Empowering Agents Through Self-Evolving Profiles and Decentralized Collaboration},
  year = {2024},
  howpublished = {arXiv preprint arXiv:2410.15048},
  doi = {10.48550/arXiv.2410.15048}
}

@misc{luo2026dytopo,
  author = {Lu, Yuxing and Hu, Yucheng and Zhao, Xukai and Cao, Jiuxin},
  title = {{DyTopo}: Dynamic Topology Routing for Multi-Agent Reasoning via Semantic Matching},
  year = {2026},
  howpublished = {arXiv preprint arXiv:2602.06039},
  doi = {10.48550/arXiv.2602.06039}
}

@inproceedings{lyu2026thebotcompany,
  author = {Lyu, Wenhan and Xiao, Yue and Zhang, Yixuan and Sun, Yifan},
  title = {{TheBotCompany}: Self-Organizing Multi-Agent Systems for Continuous Software Development},
  booktitle = {Proceedings of the 41st IEEE/ACM International Conference on Automated Software Engineering},
  year = {2026},
  pages = {349--361},
  publisher = {Association for Computing Machinery},
  doi = {10.1145/3832783.3834358}
}

@article{ma2020universal,
  author = {Ma, Zhuang and Ma, Zongming and Yuan, Hongsong},
  title = {Universal Latent Space Model Fitting for Large Networks With Edge Covariates},
  journal = {Journal of Machine Learning Research},
  year = {2020},
  volume = {21},
  number = {4},
  pages = {1--67}
}

@misc{mao2026delm,
  author = {Mao, Yuzhen and Mirhoseini, Azalia},
  title = {Decentralized Multi-Agent Systems With Shared Context},
  year = {2026},
  howpublished = {arXiv preprint arXiv:2606.10662v1},
  doi = {10.48550/arXiv.2606.10662}
}

@misc{mieczkowski2026latte,
  author = {Mieczkowski, Elizabeth and Ku, Alexander and Eisape, Tiwalayo and Arumugam, Dilip and Matters, John and Collins, Katherine M. and Sucholutsky, Ilia and Griffiths, Thomas L.},
  title = {Improving the Efficiency of Language Agent Teams With Adaptive Task Graphs},
  year = {2026},
  howpublished = {arXiv preprint arXiv:2605.06320},
  doi = {10.48550/arXiv.2605.06320}
}

@misc{pappu2026sat,
  author = {Pappu, Aneesh and Suzgun, Mirac and Kwon, Yongchan and Bianchi, Federico and El, Batu and Kochenderfer, Mykel J. and Cao, Hancheng and Zou, James},
  title = {Self-Organizing Agent Teams Learn to Reason Together},
  year = {2026},
  howpublished = {arXiv preprint arXiv:2609.22682},
  doi = {10.48550/arXiv.2609.22682}
}

@article{perry2013point,
  author = {Perry, Patrick O. and Wolfe, Patrick J.},
  title = {Point Process Modelling for Directed Interaction Networks},
  journal = {Journal of the Royal Statistical Society: Series B (Statistical Methodology)},
  year = {2013},
  volume = {75},
  number = {5},
  pages = {821--849},
  doi = {10.1111/rssb.12013}
}

@article{pu2026tree,
  author = {Pu, Dan and Fan, Xinyan and Fang, Kuangnan},
  title = {Tree-Enhanced Latent Space Models for Two-Mode Networks},
  journal = {Journal of Computational and Graphical Statistics},
  year = {2026},
  volume = {35},
  number = {1},
  pages = {295--303},
  doi = {10.1080/10618600.2025.2527295}
}

@misc{tambwekar2026proxifield,
  author = {Tambwekar, Pradyumna and Feng, Yenchia and Patel, Deep and Maamari, Karime},
  title = {{Proxifield}: Decentralized Multi-Agent Communication Through Semantic Proximity},
  year = {2026},
  howpublished = {arXiv preprint arXiv:2609.20889},
  doi = {10.48550/arXiv.2609.20889}
}

@inproceedings{tastan2025selforg,
  author = {Tastan, Nurbek and Horv{\'a}th, Samuel and Nandakumar, Karthik},
  title = {Stochastic Self-Organization in Multi-Agent Systems},
  booktitle = {The Fourteenth International Conference on Learning Representations},
  year = {2026}
}

@inproceedings{yang2025agentnet,
  author = {Yang, Yingxuan and Chai, Huacan and Shao, Shuai and Song, Yuanyi and Qi, Siyuan and Rui, Renting and Zhang, Weinan},
  title = {{AgentNet}: Decentralized Evolutionary Coordination for {LLM}-Based Multi-Agent Systems},
  booktitle = {Advances in Neural Information Processing Systems},
  year = {2025},
  volume = {38},
  pages = {118758--118785},
  doi = {10.52202/085713-3578}
}

@misc{yang2026programbench,
  author = {Yang, John and Lieret, Kilian and Ma, Jeffrey and Thakkar, Parth and Pedchenko, Dmitrii and Sootla, Sten and McMilin, Emily and Yin, Pengcheng and Hou, Rui and Synnaeve, Gabriel and Yang, Diyi and Press, Ofir},
  title = {{ProgramBench}: Can Language Models Rebuild Programs From Scratch?},
  year = {2026},
  howpublished = {arXiv preprint arXiv:2605.03546},
  doi = {10.48550/arXiv.2605.03546}
}

@misc{zhan2026agensh,
  author = {Zhan, Zhihao and Song, Ting and Dong, Li and Huang, Shaohan and Lian, Jianxun and Xia, Yan and Wei, Furu},
  title = {{Agensh}: Scaling Organizational Intelligence to 1,024 Agents},
  year = {2026},
  howpublished = {arXiv preprint arXiv:2609.26781},
  doi = {10.48550/arXiv.2609.26781}
}

@inproceedings{zhang2025gdesigner,
  author = {Zhang, Guibin and Yue, Yanwei and Sun, Xiangguo and Wan, Guancheng and Yu, Miao and Fang, Junfeng and Wang, Kun and Chen, Tianlong and Cheng, Dawei},
  title = {{G-Designer}: Architecting Multi-Agent Communication Topologies via Graph Neural Networks},
  booktitle = {Proceedings of the 42nd International Conference on Machine Learning},
  series = {Proceedings of Machine Learning Research},
  year = {2025},
  volume = {267},
  pages = {76678--76692},
  publisher = {PMLR}
}

@misc{qi2026economy,
  author = {Qi, Zhenting and Su, Huangyuan and Qu, Ao and Wang, Chenyu and Yao, Yu and Zheng, Han and Chattopadhyay, Kushal and Xu, Guowei and Wang, Zihan and Ye, Weirui and Reddi, Vijay Janapa and Li, Ju and Liang, Paul Pu and Lakkaraju, Himabindu and Kakade, Sham and Du, Yilun},
  title = {Economy of Minds: Emerging Multi-Agent Intelligence With Economic Interactions},
  year = {2026},
  howpublished = {arXiv preprint arXiv:2606.02859},
  doi = {10.48550/arXiv.2606.02859}
}

@misc{huang2026deals,
  author = {Huang, Jingjuan and Wang, Wenbin and Yin, Yanchuan and Velasquez, Alvaro and Liu, Jia},
  title = {{DEALS}: Decentralized Expertise-Aware Load Serving for Multi-Agent {LLM} Systems},
  year = {2026},
  howpublished = {arXiv preprint arXiv:2609.33768},
  doi = {10.48550/arXiv.2609.33768}
}

@misc{liu2026markets,
  author = {Liu, Xiao and Li, Haoyang and Li, Songwei and Fang, Hongbo and Xu, Fengli and Shi, Feng and Evans, James},
  title = {Markets, Not Planners: Decentralized Orchestration of {LLM} Agents With Private Information},
  year = {2026},
  howpublished = {arXiv preprint arXiv:2608.23867},
  doi = {10.48550/arXiv.2608.23867}
}

@article{tropp2011,
  author = {Tropp, Joel A.},
  title = {Freedman's Inequality for Matrix Martingales},
  journal = {Electronic Communications in Probability},
  year = {2011},
  volume = {16},
  pages = {262--270},
  doi = {10.1214/ECP.v16-1624}
}

@inproceedings{tu2016,
  author = {Tu, Stephen and Boczar, Ross and Simchowitz, Max and Soltanolkotabi, Mahdi and Recht, Ben},
  title = {Low-Rank Solutions of Linear Matrix Equations via {Procrustes} Flow},
  booktitle = {Proceedings of the 33rd International Conference on Machine Learning},
  series = {Proceedings of Machine Learning Research},
  year = {2016},
  volume = {48},
  pages = {964--973},
  publisher = {PMLR}
}

@article{wang2024continual,
  author = {Wang, Liyuan and Zhang, Xingxing and Su, Hang and Zhu, Jun},
  title = {A Comprehensive Survey of Continual Learning: Theory, Method and Application},
  journal = {IEEE Transactions on Pattern Analysis and Machine Intelligence},
  year = {2024},
  volume = {46},
  number = {8},
  pages = {5362--5383},
  doi = {10.1109/TPAMI.2024.3367329}
}

@inproceedings{park2023generative,
  author = {Park, Joon Sung and O'Brien, Joseph C. and Cai, Carrie J. and Morris, Meredith Ringel and Liang, Percy and Bernstein, Michael S.},
  title = {Generative Agents: Interactive Simulacra of Human Behavior},
  booktitle = {Proceedings of the 36th Annual ACM Symposium on User Interface Software and Technology},
  year = {2023},
  pages = {1--22},
  articleno = {2},
  publisher = {Association for Computing Machinery},
  doi = {10.1145/3586183.3606763}
}

@inproceedings{perdomo2020performative,
  author = {Perdomo, Juan C. and Zrnic, Tijana and Mendler-D{\"u}nner, Celestine and Hardt, Moritz},
  title = {Performative Prediction},
  booktitle = {Proceedings of the 37th International Conference on Machine Learning},
  series = {Proceedings of Machine Learning Research},
  year = {2020},
  volume = {119},
  pages = {7599--7609},
  publisher = {PMLR}
}

@inproceedings{mendlerdunner2020stochastic,
  author = {Mendler-D{\"u}nner, Celestine and Perdomo, Juan C. and Zrnic, Tijana and Hardt, Moritz},
  title = {Stochastic Optimization for Performative Prediction},
  booktitle = {Advances in Neural Information Processing Systems},
  year = {2020},
  volume = {33},
  pages = {4929--4939}
}

@book{borkar2023stochastic,
  author = {Borkar, Vivek S.},
  title = {Stochastic Approximation: A Dynamical Systems Viewpoint},
  series = {Texts and Readings in Mathematics},
  year = {2023},
  volume = {48},
  edition = {2nd},
  publisher = {Springer},
  address = {Singapore},
  doi = {10.1007/978-981-99-8277-6}
}

\clearpage
\newpage

\appendix

\begin{center}{\renewcommand{\baselinestretch}{1.1}\large\bfseries Supplementary File of ``\manuscripttitle''\par}\end{center}

This supplementary file is organized as follows. Section~\ref{app:theory} presents the theoretical development, including model identifiability, supporting lemmas, and proofs of the main results. Section~\ref{app:details} provides additional details of the LongBench v2 Multi-Doc QA experiment, covering the experimental design, worker prompts, network construction, supplementary results, numerical estimation, computational resources, and figure construction. 

\numberwithin{equation}{section}
\numberwithin{figure}{section}
\numberwithin{table}{section}
\renewcommand{\theHequation}{\theHsection.\arabic{equation}}
\renewcommand{\theHfigure}{\theHsection.\arabic{figure}}
\renewcommand{\theHtable}{\theHsection.\arabic{table}}

\section{Theoretical Development}
\label{app:theory}

This section provides the theoretical results for the latent space model and its estimation. Section~\ref{sec:identifiability} establishes model identifiability up to a common orthogonal transformation of the latent positions. Section~\ref{sec:lemmas} develops supporting results on centering, concentration, parameter recovery, and local curvature. Section~\ref{sec:statproof} proves the statistical error bounds in Theorem~\ref{thm:stat} and discusses their extension to larger cumulative activity. Section~\ref{sec:optproof} proves the optimization bounds in Theorem~\ref{thm:optimization} for a fixed number of updates at each checkpoint. Section~\ref{sec:computedproof} proves Corollary~\ref{cor:computed}, which gives statistical error bounds and a sufficient condition for uniform consistency of computed estimates. Constants follow the convention in Section~\ref{sec:theory}.

\subsection{Identifiability}\label{sec:identifiability}

As in \citet{fan2024generalized}, identifiability follows from equal complete linear predictor matrices.

\begin{proposition}[Identifiability]\label{prop:identifiability}
Fix a checkpoint $t$. Consider two parameter tuples $\Theta^{(t)}=(\bm{\alpha}^{(t)},\bm{\beta}^{(t)},\bm{\gamma}^{(t)},\bm{Z}^{(t)},\bm{W}^{(t)})$ and $\Theta^{\prime(t)}=(\bm{\alpha}^{\prime(t)},\bm{\beta}^{\prime(t)},\bm{\gamma}^{\prime(t)},\bm{Z}^{\prime(t)},\bm{W}^{\prime(t)})$, where $\bm{\alpha}^{(t)},\bm{\beta}^{(t)}\in\mathbb{R}^n$, $\bm{\gamma}^{(t)}\in\mathbb{R}^m$, $\bm{Z}^{(t)}\in\mathbb{R}^{n\times d}$, $\bm{W}^{(t)}\in\mathbb{R}^{m\times d}$, and the primed parameters have the same dimensions. Suppose $\rank(\bm{Z}^{(t)})=d$ and both tuples satisfy:
\begin{equation*}
    \bm{1}_n^\top\bm{Z}^{(t)}=\bm{1}_n^\top\bm{Z}^{\prime(t)}=\bm{0}^\top,\qquad\bm{1}_m^\top\bm{W}^{(t)}=\bm{1}_m^\top\bm{W}^{\prime(t)}=\bm{0}^\top,\qquad\bm{1}_n^\top\bm{\beta}^{(t)}=\bm{1}_n^\top\bm{\beta}^{\prime(t)}=0.
\end{equation*}
If their complete linear predictor matrices are equal, that is:
\begin{equation}
  \begin{aligned}
    \bm{\alpha}^{(t)}\bm{1}_n^\top+\bm{1}_n(\bm{\alpha}^{(t)})^\top+\bm{Z}^{(t)}(\bm{Z}^{(t)})^\top=\bm{\alpha}^{\prime(t)}\bm{1}_n^\top+\bm{1}_n(\bm{\alpha}^{\prime(t)})^\top+\bm{Z}^{\prime(t)}(\bm{Z}^{\prime(t)})^\top,
  \end{aligned}
  \label{eq:identifiability-A}
\end{equation}
\begin{equation}
  \begin{aligned}
    \bm{\beta}^{(t)}\bm{1}_m^\top+\bm{1}_n(\bm{\gamma}^{(t)})^\top+\bm{Z}^{(t)}(\bm{W}^{(t)})^\top=\bm{\beta}^{\prime(t)}\bm{1}_m^\top+\bm{1}_n(\bm{\gamma}^{\prime(t)})^\top+\bm{Z}^{\prime(t)}(\bm{W}^{\prime(t)})^\top,
  \end{aligned}
  \label{eq:identifiability-B}
\end{equation}
then there is an orthogonal matrix $\bm{Q}^{(t)}\in\mathbb{R}^{d\times d}$ such that:
\begin{equation*}
  (\bm{\alpha}^{\prime(t)},\bm{\beta}^{\prime(t)},\bm{\gamma}^{\prime(t)})=(\bm{\alpha}^{(t)},\bm{\beta}^{(t)},\bm{\gamma}^{(t)}), \qquad\bm{Z}^{\prime(t)}=\bm{Z}^{(t)}\bm{Q}^{(t)},\qquad\bm{W}^{\prime(t)}=\bm{W}^{(t)}\bm{Q}^{(t)}.
\end{equation*}
\end{proposition}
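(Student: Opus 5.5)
The plan is to peel off the node effects by double centering, in the spirit of \citet{fan2024generalized}. Let $\bm{J}_n=\bm{I}_n-\bm{1}_n\bm{1}_n^\top/n$ and $\bm{J}_m=\bm{I}_m-\bm{1}_m\bm{1}_m^\top/m$ be the centering projections. These satisfy $\bm{J}_n\bm{1}_n=\bm{0}$, and the centering constraints give $\bm{J}_n\bm{Z}=\bm{Z}$, $\bm{J}_n\bm{Z}'=\bm{Z}'$, $\bm{J}_m\bm{W}=\bm{W}$, $\bm{J}_m\bm{W}'=\bm{W}'$ and $\bm{J}_n\bm{\beta}=\bm{\beta}$ (superscripts $(t)$ dropped throughout).

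\emph{Step 1 (collaboration network).} Multiply \eqref{eq:identifiability-A} by $\bm{J}_n$ on both sides. The node-effect terms $\bm{\alpha}\bm{1}_n^\top$ and $\bm{1}_n\bm{\alpha}^\top$ are annihilated, leaving $\bm{Z}\bm{Z}^\top=\bm{Z}'\bm{Z}'^\top$.

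\emph{Step 2 (recovering $\bm{Z}'$).} This step is the only nontrivial one, and I expect it to be the main obstacle. Since $\rank(\bm{Z})=d$, the matrix $\bm{Z}'$ also has rank $d$, and its column space equals that of $\bm{Z}'\bm{Z}'^\top=\bm{Z}\bm{Z}^\top$, which is the column space of $\bm{Z}$. Hence $\bm{Z}'=\bm{Z}\bm{Q}$ with $\bm{Q}=(\bm{Z}^\top\bm{Z})^{-1}\bm{Z}^\top\bm{Z}'$. Substituting gives $\bm{Z}\bm{Q}\bm{Q}^\top\bm{Z}^\top=\bm{Z}\bm{Z}^\top$. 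Left-multiplying by $(\bm{Z}^\top\bm{Z})^{-1}\bm{Z}^\top$ and right-multiplying by $\bm{Z}(\bm{Z}^\top\bm{Z})^{-1}$ yields $\bm{Q}\bm{Q}^\top=\bm{I}_d$, so $\bm{Q}$ is orthogonal.

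\emph{Step 3 (node effects $\bm{\alpha}$).} By Step 1, \eqref{eq:identifiability-A} reduces to $\bm{D}\bm{1}_n^\top+\bm{1}_n\bm{D}^\top=\bm{0}$ with $\bm{D}=\bm{\alpha}-\bm{\alpha}'$. The diagonal entries give $D_i=0$. Even using only off-diagonal entries, $D_i+D_\ell=0$ for all $i\neq\ell$ with $n\ge3$ forces $\bm{D}=\bm{0}$.

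\emph{Step 4 (participation network).} Left-multiply \eqref{eq:identifiability-B} by $\bm{J}_n$ and right-multiply by $\bm{J}_m$. This removes $\bm{\beta}\bm{1}_m^\top$ and $\bm{1}_n\bm{\gamma}^\top$ and gives $\bm{Z}\bm{W}^\top=\bm{Z}'\bm{W}'^\top=\bm{Z}\bm{Q}\bm{W}'^\top$. Full column rank of $\bm{Z}$ then gives $\bm{W}^\top=\bm{Q}\bm{W}'^\top$, that is, $\bm{W}'=\bm{W}\bm{Q}$.

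\emph{Step 5 (node effects $\bm{\beta},\bm{\gamma}$).} The low-rank terms in \eqref{eq:identifiability-B} now cancel, leaving $(\bm{\beta}-\bm{\beta}')\bm{1}_m^\top+\bm{1}_n(\bm{\gamma}-\bm{\gamma}')^\top=\bm{0}$. Left-multiplying by $\bm{J}_n$ and using $\bm{1}_n^\top\bm{\beta}=\bm{1}_n^\top\bm{\beta}'=0$ gives $(\bm{\beta}-\bm{\beta}')\bm{1}_m^\top=\bm{0}$, so $\bm{\beta}=\bm{\beta}'$. The remaining equation then forces $\bm{\gamma}=\bm{\gamma}'$, which completes the argument.
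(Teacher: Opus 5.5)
Your proof is correct and follows essentially the same route as the paper: the centering constraints separate the node effects from the latent terms, and the same column-space argument gives $\bm{Z}'=\bm{Z}\bm{Q}$ with $\bm{Q}\bm{Q}^\top=\bm{I}_d$ and then $\bm{W}'=\bm{W}\bm{Q}$. The only difference is the order: you first remove the node effects with the double-centering projections $\bm{J}_n,\bm{J}_m$ and recover them afterwards, whereas the paper first solves for $\bm{\alpha}$, $\bm{\gamma}$, and $\bm{\beta}$ by multiplying by $\bm{1}_n$ and $\bm{1}_m$ and then subtracts them.
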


\begin{proof}
Right-multiplying \eqref{eq:identifiability-A} by $\bm{1}_n$ and using centered worker latent positions gives:
\begin{equation*}
  n(\bm{\alpha}^{(t)}-\bm{\alpha}^{\prime(t)})+\{\bm{1}_n^\top(\bm{\alpha}^{(t)}-\bm{\alpha}^{\prime(t)})\}\bm{1}_n=\bm{0}.
\end{equation*}
Summing its entries yields $\bm{1}_n^\top(\bm{\alpha}^{(t)}-\bm{\alpha}^{\prime(t)})=0$. Back-substitution gives $\bm{\alpha}^{(t)}=\bm{\alpha}^{\prime(t)}$.

Left-multiplying \eqref{eq:identifiability-B} by $\bm{1}_n^\top$ and using the centering of $\bm{\beta}^{(t)},\bm{\beta}^{\prime(t)},\bm{Z}^{(t)},\bm{Z}^{\prime(t)}$ gives $n(\bm{\gamma}^{(t)}-\bm{\gamma}^{\prime(t)})^\top=\bm{0}^\top$, so $\bm{\gamma}^{(t)}=\bm{\gamma}^{\prime(t)}$. Right-multiplying the same equation by $\bm{1}_m$, then using this equality and the centering of the task-type latent positions, gives $m(\bm{\beta}^{(t)}-\bm{\beta}^{\prime(t)})=\bm{0}$. Hence $\bm{\beta}^{(t)}=\bm{\beta}^{\prime(t)}$. Removing the equal node effects from \eqref{eq:identifiability-A} and \eqref{eq:identifiability-B} leaves:
\begin{equation*}
  \begin{aligned}
  \bm{Z}^{(t)}(\bm{Z}^{(t)})^\top
    &=\bm{Z}^{\prime(t)}(\bm{Z}^{\prime(t)})^\top,\\
  \bm{Z}^{(t)}(\bm{W}^{(t)})^\top
    &=\bm{Z}^{\prime(t)}(\bm{W}^{\prime(t)})^\top.
  \end{aligned}
\end{equation*}

The Gram-matrix equality implies that $\bm{Z}^{(t)}$ and $\bm{Z}^{\prime(t)}$ have the same column space. Since $\bm{Z}^{(t)}$ has full column rank, there is a matrix $\bm{Q}^{(t)}\in\mathbb{R}^{d\times d}$ such that $\bm{Z}^{\prime(t)}=\bm{Z}^{(t)}\bm{Q}^{(t)}$. Substituting this relation into the Gram-matrix equality and multiplying by a left inverse of $\bm{Z}^{(t)}$ and its transpose gives $\bm{Q}^{(t)}(\bm{Q}^{(t)})^\top=\bm{I}_d$, where $\bm{I}_d$ is the identity matrix. Thus $\bm{Q}^{(t)}$ is orthogonal. Substituting the same relation into the cross-product equality and multiplying by a left inverse of $\bm{Z}^{(t)}$ gives $(\bm{W}^{(t)})^\top=\bm{Q}^{(t)}(\bm{W}^{\prime(t)})^\top$. Therefore $\bm{W}^{\prime(t)}=\bm{W}^{(t)}\bm{Q}^{(t)}$.
\end{proof}

\subsection{Supporting Lemmas and Proofs}\label{sec:lemmas}

We first verify invariance under centering. Let $\bar{\bm{z}}=n^{-1}\sum_i \bm{z}_i$, $\bar{\bm{w}}=m^{-1}\sum_k \bm{w}_k$, and $\bar{b}=n^{-1}\sum_i(\beta_i+\bm{z}_i^\top\bar{\bm{w}})$. Set:
\begin{equation*}
  \begin{aligned}
     &\bm{z}_i^c=\bm{z}_i-\bar{\bm{z}},\qquad \bm{w}_k^c=\bm{w}_k-\bar{\bm{w}},\qquad \alpha_i^c=\alpha_i+\bm{z}_i^\top\bar{\bm{z}}-\tfrac12\norm{\bar{\bm{z}}}_2^2,\\
     &\beta_i^c=\beta_i+\bm{z}_i^\top\bar{\bm{w}}-\bar{b},\qquad \gamma_k^c=\gamma_k+\bar{\bm{z}}^\top \bm{w}_k-\bar{\bm{z}}^\top\bar{\bm{w}}+\bar{b}. 
  \end{aligned}
\end{equation*}
The positions and $\bm{\beta}^c$ have zero sums. Expanding $(\bm{z}_i^c)^\top \bm{z}_\ell^c$ and $(\bm{z}_i^c)^\top \bm{w}_k^c$ gives:
\begin{equation*}
  \alpha_i^c+\alpha_\ell^c+(\bm{z}_i^c)^\top \bm{z}_\ell^c =\alpha_i+\alpha_\ell+\bm{z}_i^\top \bm{z}_\ell,\qquad \beta_i^c+\gamma_k^c+(\bm{z}_i^c)^\top \bm{w}_k^c =\beta_i+\gamma_k+\bm{z}_i^\top \bm{w}_k.
\end{equation*}
Thus all log-means and means are preserved. Subtracting the row average increases a uniform row-norm bound by at most a factor of two. Assumption~\ref{as:geom} is imposed on the centered positions and their spectrum.

\begin{lemma}\label{lem:meansfrommargins}
Fix $u>0$. With latent row norms bounded by $2R_z,2R_w$, if every row sum of the mean matrices $(\theta_{i\ell}^{A})$ and $(\theta_{ik}^{B})$ lies between $c_0u$ and $C_0u$, and every column sum of $(\theta_{ik}^{B})$ lies between $c_0nu/m$ and $C_0nu/m$, then for all $n\ge n_0$, where $n_0$ is a sufficiently large fixed constant, $cu/n\le\theta_{i\ell}^{A}\le Cu/n$ and $cu/m\le\theta_{ik}^{B}\le Cu/m$. Constants depend only on the radii and marginal bounds. $\bm{Z},\bm{W}$ may be rank deficient.
\end{lemma}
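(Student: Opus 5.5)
The plan is to reduce both mean matrices to approximately rank-one product form and then read off the entry sizes from the marginal bounds. Set $\rho_A=4R_z^2$ and $\rho_B=4R_zR_w$. The radius bounds give $|\bm z_i^\top\bm z_\ell|\le\rho_A$ and $|\bm z_i^\top\bm w_k|\le\rho_B$, and neither uses the rank of $\bm Z$ or $\bm W$. Write $a_i=e^{\alpha_i}$, $b_i=e^{\beta_i}$ and $g_k=e^{\gamma_k}$. Then
\begin{equation*}
e^{-\rho_A}a_ia_\ell\le\theta^A_{i\ell}\le e^{\rho_A}a_ia_\ell,\qquad e^{-\rho_B}b_ig_k\le\theta^B_{ik}\le e^{\rho_B}b_ig_k .
\end{equation*}
From here on, constants may absorb $e^{\pm\rho_A}$, $e^{\pm\rho_B}$, $c_0$ and $C_0$.

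\textbf{The participation network $\bm B$.} Let $G=\sum_kg_k$ and $\mathcal B=\sum_ib_i$.
\begin{itemize}
\item The row-sum bounds give $b_iG\asymp u$ for every $i$.
\item The column-sum bounds give $g_k\mathcal B\asymp nu/m$ for every $k$.
\item Summing either family gives $\mathcal BG\asymp nu$.
\end{itemize}
Hence
\begin{equation*}
\theta^B_{ik}\asymp b_ig_k\asymp\frac{u}{G}\cdot\frac{nu}{m\mathcal B}=\frac{nu^2}{m\,\mathcal BG}\asymp\frac um .
\end{equation*}
This part needs no condition on $n$.

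\textbf{The collaboration network $\bm A$.} Let $S=\sum_\ell a_\ell$. Row $i$ satisfies $\sum_{\ell\ne i}\theta^A_{i\ell}\asymp a_i(S-a_i)$. Because of the $-a_i$ correction, this does not yet pin down $a_i$. The key step, and the main obstacle, is to rule out a single dominating worker by showing $\max_ia_i\le S/2$ once $n\ge n_0$.

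Suppose instead that $a_1>S/2$, where index $1$ is the largest $a_i$ without loss of generality.
\begin{itemize}
\item Row $1$ gives $C_0u\ge e^{-\rho_A}a_1\sum_{i\ne1}a_i$.
\item Summing rows $i\ne1$ and keeping only the term $\ell=1$ in each gives $(n-1)c_0u\le e^{\rho_A}\sum_{i\ne1}\sum_{\ell\ne i}a_ia_\ell$. The term $\ell=1$ contributes $a_1\sum_{i\ne1}a_i$. The remaining terms are at most $\bigl(\sum_{i\ne1}a_i\bigr)^2\le a_1\sum_{i\ne1}a_i$, since $\sum_{i\ne1}a_i<S/2<a_1$.
\item So $(n-1)c_0u\le 2e^{\rho_A}a_1\sum_{i\ne1}a_i\le2e^{2\rho_A}C_0u$.
\end{itemize}
This is a contradiction for $n-1>2e^{2\rho_A}C_0/c_0$, which fixes $n_0$ (this is where $n\ge n_0$, in particular $n\ge3$, enters).

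With $S-a_i\in[S/2,S]$, every row gives $a_iS\asymp u$, so $a_i\asymp u/S$ uniformly in $i$. Summing over $i$ gives $S\asymp nu/S$, so $S^2\asymp nu$. Therefore
\begin{equation*}
\theta^A_{i\ell}\asymp a_ia_\ell\asymp\frac{u^2}{S^2}\asymp\frac un .
\end{equation*}
All constants depend only on $R_z$, $R_w$, $c_0$ and $C_0$, as claimed.
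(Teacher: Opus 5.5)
Your proposal is correct and follows essentially the same route as the paper. You sandwich each mean between $e^{\pm 4R_z^2}$ or $e^{\pm 4R_zR_w}$ times a product of exponentiated node effects. You rule out a worker with $e^{\alpha_i}>\frac12\sum_\ell e^{\alpha_\ell}$ by summing the remaining rows, which yields the same threshold $n_0>1+2e^{8R_z^2}C_0/c_0$. You handle the $\bm B$ block through $b_ig_k=(b_iG)(g_k\mathcal B)/(\mathcal BG)$. One small wording fix: in the dominant-worker step you bound the whole sum over $\ell\ne i$ (the $\ell=1$ term plus the rest), not ``only the term $\ell=1$'' as you say.
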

\begin{proof}
In the $\bm{A}$ block, with $x_i=e^{\alpha_i}$, $x_+=\sum_i x_i$, $s_i=\sum_{\ell\ne i}\theta_{i\ell}^{A}$, bounded positions give:
\begin{equation}
  \label{eq:xS} e^{-4R_z^2}x_i(x_+-x_i)\le s_i \le e^{4R_z^2}x_i(x_+-x_i),\qquad c_0u\le s_i\le C_0u.
\end{equation}
If the largest $x_i$ were at least $x_+/2$, then:
\begin{equation*}
  (n-1)c_0u\le\sum_{j\ne i}s_j \le e^{4R_z^2}x_+(x_+-x_i) \le2e^{4R_z^2}x_i(x_+-x_i) \le2e^{8R_z^2}C_0u.
\end{equation*}
Taking $n_0>1+2e^{8R_z^2}C_0/c_0$ rules out this possibility. Hence $x_+/2<x_+-x_i\le x_+$ for every $i$. Summing \eqref{eq:xS} gives $x_+^2\asymp nu$. The individual bounds then give $x_ix_+\asymp u$, so $x_i\asymp\sqrt{u/n}$. Multiplying by the bounded inner-product exponential proves the $\bm{A}$ claim.

For the $\bm{B}$ block, set $x_i=e^{\beta_i}$, $y_k=e^{\gamma_k}$, $x_+=\sum_i x_i$, and $y_+=\sum_k y_k$. Write $s_i,c_k,s_+$ for the row, column, and total sums, and $a=e^{-4R_zR_w}$, $b=e^{4R_zR_w}$. Bounded inner products give:
\begin{equation*}
  ax_iy_+\le s_i\le bx_iy_+,\qquad ay_kx_+\le c_k\le by_kx_+,\qquad ax_+y_+\le s_+\le bx_+y_+.
\end{equation*}
Using $x_iy_k=(x_iy_+)(y_kx_+)/(x_+y_+)$ and $s_i\asymp u$, $c_k\asymp nu/m$, $s_+\asymp nu$ yields:
\begin{equation*}
  \frac{a^2}{b^2}\frac{s_ic_k}{s_+}\le\theta_{ik}^{B}\le\frac{b^2}{a^2}\frac{s_ic_k}{s_+},\qquad\theta_{ik}^{B}\asymp u/m.
\end{equation*}
\end{proof}

The martingales below are indexed by work record $r$. Checkpoint $t$ enters through the cumulative index $S_t$. Assumption~\ref{as:scale} gives:
\begin{equation}
  \label{eq:marginalscale} \sum_{\ell\ne i}\theta_{\star,i\ell}^{A,(t)}\asymp t,\qquad \sum_k\theta_{\star,ik}^{B,(t)}\asymp t,\qquad \sum_i\theta_{\star,ik}^{B,(t)}\asymp nt/m.
\end{equation}
The entrywise bounds in \eqref{eq:scale} follow from Lemma~\ref{lem:meansfrommargins} and are used to bound curvature. Define the weighted matrix increment and its centered cumulative sum:
\begin{equation}
  \label{eq:H} \bm{H}_r=\left[\frac{\sqrt n}{2}\Delta \bm{A}_r,\ \sqrt m\,\Delta \bm{B}_r\right],\qquad \bm{M}_q=\sum_{r=1}^q\{\bm{H}_r-\E(\bm{H}_r\mid\FF_{r-1})\}.
\end{equation}
Then $\bm{M}_q$ is an $n\times(n+m)$ matrix martingale. Write $\norm{\cdot}_{\op}$ for the operator norm. The factor $1/2$ compensates for symmetric duplication in $\bm{A}$, while $\sqrt n,\sqrt m$ scale the two blocks. At checkpoint $t$, the first block of $\bm{M}_{S_t}$ is $\sqrt n/2$ times the observed $\bm{A}^{(t)}$ minus the matrix of estimation targets $(\theta_{\star,i\ell}^{A,(t)})$. The second block is the corresponding $\bm{B}$ residual scaled by $\sqrt m$.

\begin{lemma}\label{lem:concentration}
Under Assumptions~\ref{as:model} and~\ref{as:scale}, for every $\delta\in(0,1)$, the following holds simultaneously at all checkpoints under study with probability at least $1-\delta$:
\begin{equation}
  \label{eq:matrixnoise} \norm{\bm{M}_{S_t}}_{\op}^2 \le C(n+m)\left[ t\left\{\log\frac{2n+m}{\delta}+\log\log(e^e+t)\right\} +\left\{\log\frac{2n+m}{\delta}+\log\log(e^e+t)\right\}^{\!2}\right].
\end{equation}
Whenever the information threshold in Theorem~\ref{thm:stat} holds, the same event also gives:
\begin{equation}
  \label{eq:obsmargins} ct\le\sum_{\ell\ne i}A_{i\ell}^{(t)}\le Ct,\qquad ct\le\sum_kB_{ik}^{(t)}\le Ct,\qquad c\frac{nt}{m}\le\sum_iB_{ik}^{(t)}\le C\frac{nt}{m}.
\end{equation}
For fixed $T\ge1$ and $a\in(0,1)$, the following bound holds simultaneously for $\lceil aT\rceil\le t\le T$, with probability at least $1-\delta$:
\begin{equation}
  \label{eq:finiteM} \norm{\bm{M}_{S_t}}_{\op}^2 \le C(n+m)\left\{T\log\frac{2n+m}{\delta} +\log^2\frac{2n+m}{\delta}\right\}.
\end{equation}
If $T\ge C_a(1+m/n)\log\{(2n+m)/\delta\}$, \eqref{eq:obsmargins} also holds on this event.
\end{lemma}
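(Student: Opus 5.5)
The plan is to apply Freedman's inequality for matrix martingales \citep{tropp2011} to the self-adjoint dilation of $\bm{M}_q$. Then combine it with a peeling (stitching) argument over geometric blocks of checkpoints; this peeling produces the $\log\log(e^e+t)$ term.

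\textbf{Step 1: increments and variance.} Each centered increment $\bm{H}_r-\E(\bm{H}_r\mid\FF_{r-1})$ has operator norm at most a constant times $\sqrt{n+m}$. The reason is that Assumption~\ref{as:model} gives at most $K$ nonzero unit entries in $\Delta\bm{A}_r,\Delta\bm{B}_r$, and the Frobenius norm bounds the operator norm. For the predictable quadratic variation I bound the two Gram sums separately:
\[
\sum_{r\le S_t}\E(\bm{H}_r\bm{H}_r^\top\mid\FF_{r-1}) \quad\text{and}\quad \sum_{r\le S_t}\E(\bm{H}_r^\top\bm{H}_r\mid\FF_{r-1}).
\]
Since entries are binary and each record touches at most $K$ entries, $\bm{H}_r\bm{H}_r^\top$ is dominated by $K$ times a diagonal matrix. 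Its diagonal entry for worker $i$ is $n\sum_\ell(\Delta A_r)_{i\ell}/4+m\sum_k(\Delta B_r)_{ik}$. Summing conditional expectations and using the row-sum bounds \eqref{eq:expectedmargins} gives a bound of order $(n+m)t$.

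The column Gram works the same way. The worker columns give $n\sum_i\theta^{A}_{\star,i\ell}\lesssim nt$, and the task-type columns give $m\sum_i\theta^{B}_{\star,ik}\lesssim nt$. One delicate point is that the off-diagonal terms from a record touching several entries must be controlled. I would use $\bm{x}\bm{x}^\top\preceq\norm{\bm{x}}_0\,\mathrm{diag}(x_j^2)$ for a row/column vector with at most $K$ nonzeros. This gives predictable variance $W_t\le C(n+m)t$ deterministically. That holds because the targets are defined as sums of conditional expectations, which Assumption~\ref{as:scale} bounds directly.

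\textbf{Step 2: Freedman plus peeling.} Matrix Freedman gives
\[
\PP\bigl(\exists q:\norm{\bm{M}_q}_{\op}\ge s,\ W\le\sigma^2\bigr)\le(2n+m)\exp\bigl\{-s^2/(2\sigma^2+2Ls/3)\bigr\}.
\]
I apply this on the blocks $t\in[2^j,2^{j+1})$ with $\sigma^2=C(n+m)2^{j+1}$ and failure probability $\delta/\{2(j+1)^2\}$, then take a union bound. This yields
\[
\norm{\bm{M}_{S_t}}_{\op}\lesssim\sqrt{(n+m)t\,\ell_t}+\sqrt{n+m}\,\ell_t,
\]
where $\ell_t=\log\{(2n+m)/\delta\}+\log\log(e^e+t)$. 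Squaring gives \eqref{eq:matrixnoise}. Because $S_t$ is a random index, I use the version of Freedman that holds uniformly over $q$ on the event $\{W_q\le\sigma^2\}$, so optional stopping issues disappear. For \eqref{eq:finiteM}, a single application with $\sigma^2=C(n+m)T$ over all $q\le S_T$ suffices, and no peeling is needed.

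\textbf{Step 3: margins.} The margins follow from the same event. The row sum of worker $i$ in $\bm{A}^{(t)}$ minus its target equals $(2/\sqrt n)\,\bm{e}_i^\top\bm{M}\,\bm{1}_{\text{A-block}}$. Its absolute value is at most $(2/\sqrt n)\norm{\bm{M}}_{\op}\sqrt n\lesssim\sqrt{(n+m)t\ell_t}+\sqrt{n+m}\,\ell_t$. This is not $o(t)$ uniformly in $n$, though.

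So I would instead derive the margins from scalar Freedman inequalities applied to each of the $2n+m$ margin martingales. Each has bounded increments (by $K$) and variance of order $t$ for the rows, or $nt/m$ for the task-type columns. A union bound then gives deviations of order $\sqrt{t\ell_t}+\ell_t$ and $\sqrt{(nt/m)\ell_t}+\ell_t$. These are at most half the lower bounds $c_0t$ and $c_0nt/m$ exactly when $t\ge C(1+m/n)\ell_t$, which is the information threshold. The peeled scalar bounds are intersected with the matrix event, with $\delta$ split in half. The fixed-$T$ version is identical on $[\lceil aT\rceil,T]$, with $t\ge aT$ absorbed into $C_a$.

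\textbf{Main obstacle.} I expect the main obstacle to be the variance bound with correlated entries within one record, together with the uniform-in-time stitching under the random $S_t$. Both are handled by the $K$-sparsity domination and the block-wise Freedman argument above.
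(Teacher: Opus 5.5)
Your proposal is correct and follows essentially the same route as the paper: the rectangular matrix Freedman inequality of Tropp (Corollary 1.3, which is the dilation argument you describe), the sparsity/Cauchy--Schwarz domination of the predictable variance by the row and column targets, dyadic peeling with failure probabilities of order $\delta/(j+1)^2$, and separate scalar Freedman bounds on the $2n+m$ margin processes to obtain \eqref{eq:obsmargins} under the information threshold. The paper also handles the random index $S_t$ through the maximal form of Freedman, and treats the fixed-$T$ case by a single application with variance budgets $C(n+m)T$, $CT$, and $CnT/m$, exactly as you propose.
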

\begin{proof}
We first bound individual increments and conditional variances. Assumption~\ref{as:model} gives at most $2K$ nonzero entries in $\bm{H}_r$, with:
\begin{equation*}
  \norm{\bm{H}_r}_{\fro}^2 =\frac n2\sum_{i<\ell}(\Delta \bm{A}_r)_{i\ell} +m\sum_{i,k}(\Delta \bm{B}_r)_{ik}\le K(n+m).
\end{equation*}
Conditional expectation preserves the corresponding operator-norm bound, so centered increments have norm at most $2\sqrt{K(n+m)}$. For any vector $\bm{x}$, applying Cauchy--Schwarz to the nonzero entries in each column gives:
\begin{equation*}
  \bm{x}^\top \bm{H}_r\bm{H}_r^\top \bm{x} \le2K\sum_i x_i^2\left\{\frac n4\sum_{\ell\ne i}(\Delta \bm{A}_r)_{i\ell} +m\sum_k(\Delta \bm{B}_r)_{ik}\right\}.
\end{equation*}
Set $\bm{X}_r=\bm{H}_r-\E(\bm{H}_r\mid\FF_{r-1})$. Conditional centering gives:
\begin{equation*}
  \begin{aligned}
    \E(\bm{X}_r\bm{X}_r^\top\mid\FF_{r-1})&=\E(\bm{H}_r\bm{H}_r^\top\mid\FF_{r-1})-\E(\bm{H}_r\mid\FF_{r-1})\E(\bm{H}_r\mid\FF_{r-1})^\top\\
    &\preceq\E(\bm{H}_r\bm{H}_r^\top\mid\FF_{r-1}).
  \end{aligned}
\end{equation*}
The same identity holds with the factors transposed. Applying the preceding column bound and its row analogue, then \eqref{eq:marginalscale}, yields:
\begin{equation*}
  \begin{aligned}
    \left\|\sum_{r\le S_t}\E(\bm{X}_r\bm{X}_r^\top\mid\FF_{r-1})\right\|_{\op}&\le2K\max_i\left\{\frac n4\sum_{\ell\ne i}\theta_{\star,i\ell}^{A,(t)}+m\sum_k\theta_{\star,ik}^{B,(t)}\right\}\le C(n+m)t,\\
    \left\|\sum_{r\le S_t}\E(\bm{X}_r^\top\bm{X}_r\mid\FF_{r-1})\right\|_{\op}&\le2K\max\left\{\frac n4\max_\ell\sum_{i\ne\ell}\theta_{\star,i\ell}^{A,(t)},\ m\max_k\sum_i\theta_{\star,ik}^{B,(t)}\right\}\le Cnt.
  \end{aligned}
\end{equation*}
For a scalar row or column contribution $0\le Y_r\le K$:
\begin{equation*}
  \begin{aligned}
    \sum_{r\le S_t}\operatorname{Var}(Y_r\mid\FF_{r-1})&\le\sum_{r\le S_t}\E(Y_r^2\mid\FF_{r-1})\\
    &\le K\sum_{r\le S_t}\E(Y_r\mid\FF_{r-1})\le\begin{cases}Ct,&\text{worker row},\\ Cnt/m,&\text{task-type column}.\end{cases}
  \end{aligned}
\end{equation*}
Define the matrix variance process:
\begin{equation*}
  V_q=\max\left\{\left\|\sum_{r\le q}\E(\bm{X}_r\bm{X}_r^\top\mid\FF_{r-1})\right\|_{\op},\ \left\|\sum_{r\le q}\E(\bm{X}_r^\top\bm{X}_r\mid\FF_{r-1})\right\|_{\op}\right\}.
\end{equation*}

The rectangular matrix Freedman inequality of \citet[Corollary 1.3]{tropp2011} states that, if $\norm{\bm{X}_r}_{\op}\le R$ almost surely, then for any $u,v>0$:
\begin{equation*}
  \PP\left\{\exists q\ge1:\ \norm{\bm{M}_q}_{\op}\ge u,\ V_q\le v\right\} \le(2n+m)\exp\left\{-\frac{u^2}{2(v+Ru/3)}\right\}.
\end{equation*}
For $x>0$ and variance budget $v$, the threshold $\sqrt{2vx}+(2R/3)x$ therefore has failure probability at most $(2n+m)e^{-x}$. This maximal event covers all work-record indices within the budget and hence applies to random $S_t$. The scalar Freedman inequality for both signs controls the $2n+m$ marginal processes of the $\bm{A},\bm{B}$ row sums and $\bm{B}$ column sums.

For the time-uniform result, partition checkpoints into $2^j\le t<2^{j+1}$, $j\ge0$. On interval $j$, the variance budgets are $C(n+m)2^{j+1}$, $C2^{j+1}$, and $C(n/m)2^{j+1}$. Allocate total failure probability at most $C\delta/(j+1)^2$ to its matrix and marginal events, with the allocation constant chosen so that the sum is at most $\delta$. Within this proof, take:
\begin{equation*}
  x_j=\log\frac{C(2n+m)(j+1)^2}{\delta} \le C\left\{\log\frac{2n+m}{\delta}+\log\log(e^e+t)\right\}.
\end{equation*}
The matrix deviation is bounded by $C\sqrt{n+m}\{\sqrt{tx_j}+x_j\}$. Squaring proves \eqref{eq:matrixnoise}. Worker row deviations are bounded by $C\{\sqrt{tx_j}+x_j\}$ and task-type column deviations by $C\{\sqrt{(nt/m)x_j}+x_j\}$. When $\min\{t,nt/m\}\ge Cx_j$, each deviation is at most a fixed fraction of the corresponding sum of estimation targets, which gives \eqref{eq:obsmargins}. The information threshold in Theorem~\ref{thm:stat} suffices.

For the bound over $\lceil aT\rceil\le t\le T$, apply the same maximal inequality directly over all work-record indices with variance budgets $C(n+m)T$, $CT$, and $CnT/m$. Taking $x=\log\{C(2n+m)/\delta\}$ gives \eqref{eq:finiteM} and the marginal deviations. On $t\ge aT$, the corresponding sums of estimation targets are at least $caT$ for worker rows and $canT/m$ for task-type columns, so the stated lower bound on $T$ suffices for \eqref{eq:obsmargins}.
\end{proof}

\begin{lemma}\label{lem:fitmeans}
Suppose \eqref{eq:obsmargins} holds, $\min\{t,nt/m\}$ exceeds a fixed constant, and $n\ge n_0$. Every estimate satisfying \eqref{eq:opt} then obeys:
\begin{equation}
  \label{eq:fittedmeans} c\frac tn\le\widehat{\theta}_{i\ell}^{A,(t)}\le C\frac tn,\qquad c\frac tm\le\widehat{\theta}_{ik}^{B,(t)}\le C\frac tm.
\end{equation}
Under these conditions, the infimum in \eqref{eq:opt} is attained at a finite feasible parameter.
\end{lemma}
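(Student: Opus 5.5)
The plan is to show that any $\varepsilon_t$-approximate minimizer has fitted row and column sums comparable to the observed ones, and then to feed these into Lemma~\ref{lem:meansfrommargins} with $u=t$. The key tool is a one-dimensional perturbation that rescales a single row or column of fitted means while keeping the parameter feasible.

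\textbf{Rows of $\bm{A}$.} Fix worker $i$ and replace $\widehat\alpha_i$ by $\widehat\alpha_i+s$. Every fitted collaboration mean in row $i$ is multiplied by $e^s$ and nothing else changes. Write $F_i=\sum_{\ell\ne i}\widehat\theta^{A,(t)}_{i\ell}$ and $O_i=\sum_{\ell\ne i}A^{(t)}_{i\ell}$. Up to a constant, $\mathcal{L}^{(t)}$ along this direction is $f(s)=F_ie^s-O_is$. Its minimum is at $s=\log(O_i/F_i)$, so
\begin{equation*}
f(0)-\min_s f(s)=O_i\,h(F_i/O_i),\qquad h(x)=x-1-\log x\ge0.
\end{equation*}
This direction keeps feasibility, since $\bm{\alpha}$ carries no constraint and the positions are untouched. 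Condition \eqref{eq:opt} therefore gives $O_i\,h(F_i/O_i)\le\varepsilon_0$. By \eqref{eq:obsmargins}, $O_i\ge ct$, and $ct$ exceeds a fixed constant by hypothesis. Hence $h(F_i/O_i)\le\varepsilon_0/(ct)$ is bounded by a small constant, which forces $F_i/O_i\in[1/2,2]$, say. Thus $F_i\asymp t$.

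\textbf{Columns of $\bm{B}$.} The same argument applies to each column of $\bm{B}$, using the unconstrained $\gamma_k$. This gives fitted column sums of order $nt/m$.

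\textbf{Rows of $\bm{B}$.} Here $\bm\beta$ is centered, so I will use the combined direction $\beta_i\mapsto\beta_i+s$, $\beta_j\mapsto\beta_j-s/n$ for all $j$, and $\gamma_k\mapsto\gamma_k+s/n$ for all $k$. This preserves the centering constraint. Its net effect on the log-means is exactly $+s$ on row $i$ of the $\bm{B}$ block and $0$ elsewhere, so the same scalar argument yields fitted $\bm{B}$ row sums of order $t$.

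\textbf{Applying Lemma~\ref{lem:meansfrommargins}.} The fitted margins now satisfy the hypotheses of Lemma~\ref{lem:meansfrommargins} with $u=t$ and constants depending only on $c,C,\varepsilon_0$. Feasibility supplies the row-norm bounds $2R_z,2R_w$. For $n\ge n_0$, the lemma gives \eqref{eq:fittedmeans}.

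\textbf{Attainment of the infimum.} First, the infimum is finite. Each summand satisfies $\theta-y\log\theta\ge y-y\log y$, so $\mathcal{L}^{(t)}$ is bounded below. Next, consider the sublevel set $\{\widetilde\Theta\text{ feasible}:\mathcal{L}^{(t)}(\widetilde\Theta)\le\inf+\varepsilon_0\}$. By the argument above, every point in it satisfies \eqref{eq:fittedmeans}, so all log-means lie in $[\log(ct/n),\log(Ct/n)]$ or $[\log(ct/m),\log(Ct/m)]$. The inner products are bounded by $4R_z^2$ and $4R_zR_w$, so $\alpha_i+\alpha_\ell$ is bounded for all $i\ne\ell$. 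Since $n\ge3$, $\alpha_i=\tfrac12\{(\alpha_i+\alpha_\ell)+(\alpha_i+\alpha_{\ell'})-(\alpha_\ell+\alpha_{\ell'})\}$ is then bounded. Similarly $\beta_i+\gamma_k$ is bounded. Averaging over $i$ and using $\one_n^\top\bm\beta=0$ bounds each $\gamma_k$, and hence each $\beta_i$. The positions lie in a compact set and the constraints are closed, so the sublevel set is compact and nonempty. Since $\mathcal{L}^{(t)}$ is continuous, it attains its minimum on this set.

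\textbf{Main obstacle.} The delicate step is handling approximate rather than exact minimizers. With exact minimizers one could simply use the score equations, but here one must check that each perturbation direction is feasible and that the slack $\varepsilon_0$ is negligible against the observed margins. The latter is exactly where $\min\{t,nt/m\}$ exceeding a constant enters: it is needed both for $O_i\ge ct$ in the rows and for the column sums $\ge cnt/m$ in the $\bm{B}$ columns.
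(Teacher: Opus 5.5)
Your proposal is correct and follows the paper's proof essentially step for step. It uses the same three feasible one-parameter perturbations, including the identical centering-preserving direction ($+s(1-1/n)$ on $\beta_i$, $-s/n$ on the other $\beta_j$, $+s/n$ on every $\gamma_k$) for rows of $\bm{B}$. It then uses the same bound $O\,h(F/O)\le\varepsilon_t$, which forces the fitted margins into $[O/2,2O]$, applies Lemma~\ref{lem:meansfrommargins}, and proves existence by recovering bounded node effects from bounded log-means through the centering constraints.

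One detail needs tidying. Take the sublevel set with slack $\max\{1,\varepsilon_0\}$, as the paper does, rather than slack $\varepsilon_0$. If $\varepsilon_0=0$, nonemptiness of $\{\mathcal{L}^{(t)}\le\inf+\varepsilon_0\}$ is exactly the attainment you are trying to prove.
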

\begin{proof}
Let $D^{(t)}$ be one observed row sum of $\bm{A}^{(t)}$ or $\bm{B}^{(t)}$, or one column sum of $\bm{B}^{(t)}$, and let $\widehat{D}^{(t)}$ be the corresponding fitted sum. Whenever $D^{(t)}>0$, we claim:
\begin{equation}
  \label{eq:marginKL} D^{(t)}\left\{\frac{\widehat{D}^{(t)}}{D^{(t)}}-1-\log\frac{\widehat{D}^{(t)}}{D^{(t)}}\right\} \le\varepsilon_t.
\end{equation}
For row $i$ of $\bm{A}^{(t)}$, increasing only $\widehat{\alpha}_i^{(t)}$ by $h$ multiplies its fitted means by $e^h$ and changes the joint objective by $\widehat{D}^{(t)}(e^h-1)-D^{(t)}h$. Taking $h=\log(D^{(t)}/\widehat{D}^{(t)})$ decreases the joint objective by $\widehat{D}^{(t)}-D^{(t)}+D^{(t)}\log(D^{(t)}/\widehat{D}^{(t)})$. The perturbation is feasible, so the decrease is at most $\varepsilon_t$.

For row $i$ of $\bm{B}^{(t)}$, increase $\widehat{\beta}_i^{(t)}$ by $h(1-1/n)$, decrease every other $\widehat{\beta}_j^{(t)}$ by $h/n$, and increase every $\widehat{\gamma}_k^{(t)}$ by $h/n$. This preserves $\one_n^\top\widehat{\bm{\beta}}^{(t)}=0$ and increases only that row's log-means by $h$. For column $k$, change only $\widehat{\gamma}_k^{(t)}$. These three perturbations prove \eqref{eq:marginKL}.

Since $\inf_{u>0,\,u\notin[1/2,2]}(u-1-\log u)>1/10$, \eqref{eq:marginKL} and $D^{(t)}\ge10\varepsilon_0$ give $D^{(t)}/2\le\widehat{D}^{(t)}\le2D^{(t)}$. Lemma~\ref{lem:meansfrommargins} then yields \eqref{eq:fittedmeans} and $e^{\widehat{\alpha}_i^{(t)}}\asymp\sqrt{t/n}$, without requiring full-rank fitted positions or positive individual counts.

For existence, the joint objective is bounded below and finite at the true parameter, so choose a feasible minimizing sequence. Its tail has optimization error at most $\max\{1,\varepsilon_0\}$ and satisfies the preceding bounds. For each tail element $\widetilde{\Theta}^{(t)}$, centering gives:
\begin{equation*}
  \begin{aligned}
    \widetilde{\gamma}_k^{(t)}=\frac1n\sum_i\log\theta_{ik}^{B}(\widetilde{\Theta}^{(t)}),\quad\widetilde{\beta}_i^{(t)}=\frac1m\sum_k\log\theta_{ik}^{B}(\widetilde{\Theta}^{(t)})-\frac1{nm}\sum_{j,k}\log\theta_{jk}^{B}(\widetilde{\Theta}^{(t)}),\\
    \max_i\left|\widetilde{\alpha}_i^{(t)}-\tfrac12\log(t/n)\right|+\max_i|\widetilde{\beta}_i^{(t)}|\le C,\quad\max_k\left|\widetilde{\gamma}_k^{(t)}-\log(t/m)\right|\le C.
  \end{aligned}
\end{equation*}
At fixed $t$, these bounds place the tail in a compact feasible set. Continuity gives a subsequence attaining the infimum.
\end{proof}

For parameter recovery, compare $\Theta=(\bm{\alpha},\bm{\beta},\bm{\gamma},\bm{Z},\bm{W})$ and $\Theta'=(\bm{\alpha}',\bm{\beta}',\bm{\gamma}',\bm{Z}',\bm{W}')$. Both are centered and satisfy the radius constraints, and $\lambda_{\min}\{(\bm{Z}')^\top \bm{Z}'/n\}\ge\kappa$. Define the differences of the complete linear predictor matrices:
\begin{equation}
  \label{eq:D} \begin{aligned}
     \bm{D}_A&=(\bm{\alpha}-\bm{\alpha}')\one_n^\top+\one_n(\bm{\alpha}-\bm{\alpha}')^\top+\bm{Z}\bm{Z}^\top-\bm{Z}'(\bm{Z}')^\top,\\
     \bm{D}_B&=(\bm{\beta}-\bm{\beta}')\one_m^\top+\one_n(\bm{\gamma}-\bm{\gamma}')^\top+\bm{Z}\bm{W}^\top-\bm{Z}'(\bm{W}')^\top. 
  \end{aligned}
\end{equation}
The diagonal of $\bm{D}_A$ is used for algebraic completion. The joint objective includes only the entries of $\bm{A}$ with $i<\ell$ and all entries of $\bm{B}$, which we call observed entries. For any $\bm{U},\bm{X}\in\RR^{n\times d}$ with $\bm{X}$ of full column rank and smallest singular value $\sigma_d(\bm{X})$, \citet[Lemma 5.4]{tu2016} give:
\begin{equation}
  \label{eq:proc} \min_{\bm{Q}^\top\bm{Q}=\bm{I}_d}\norm{\bm{U}-\bm{X}\bm{Q}}_{\fro}^2 \le\frac{\norm{\bm{U}\bm{U}^\top-\bm{X}\bm{X}^\top}_{\fro}^2}{2(\sqrt2-1)\sigma_d^2(\bm{X})}.
\end{equation}

\begin{lemma}\label{lem:recovery}
For $\bm{D}_A$ in \eqref{eq:D}, we have:
\begin{equation}
  \label{eq:diagonal} \sum_i(\bm{D}_A)_{ii}^2 \le\left\{\frac2n+\frac{8R_z^2}{(\sqrt2-1)n\kappa}\right\}\norm{\bm{D}_A}_{\fro}^2.
\end{equation}
Consequently, for all $n\ge n_0$ with a sufficiently large fixed $n_0$:
\begin{equation}
  \label{eq:offdiag} \norm{\bm{D}_A}_{\fro}^2\le4\sum_{i<\ell}(\bm{D}_A)_{i\ell}^2.
\end{equation}
The two parameter tuples also satisfy $\dist^2(\Theta,\Theta')\le C\{\norm{\bm{D}_A}_{\fro}^2/n^2+\norm{\bm{D}_B}_{\fro}^2/(nm)\}$, where $C$ depends only on $R_w,\kappa$. Together with \eqref{eq:offdiag}, this implies that, for $n\ge n_0$, equal means at observed entries identify the parameters up to a common orthogonal transformation.
\end{lemma}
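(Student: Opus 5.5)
The plan has three parts: bound the diagonal of $\bm{D}_A$ by its Frobenius norm, deduce \eqref{eq:offdiag} by symmetry, and then recover each parameter block from $\bm{D}_A$ and $\bm{D}_B$ by projecting onto $\one_n$ and $\one_m$, as in the proof of Proposition~\ref{prop:identifiability}.

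\textbf{Diagonal bound.} Write $\bm{a}=\bm{\alpha}-\bm{\alpha}'$ and $\bm{G}=\bm{Z}\bm{Z}^\top-\bm{Z}'(\bm{Z}')^\top$. Since $\one_n^\top\bm{Z}=\one_n^\top\bm{Z}'=\bm{0}^\top$, we have $\bm{G}\one_n=\bm{0}$. Hence $\bm{D}_A\one_n=n\bm{a}+(\one_n^\top\bm{a})\one_n$ and $\one_n^\top\bm{D}_A\one_n=2n\,\one_n^\top\bm{a}$. This gives
\[
\bm{a}=\frac1n\bm{D}_A\one_n-\frac{1}{2n^2}(\one_n^\top\bm{D}_A\one_n)\one_n .
\]
This vector is the image of $\bm{D}_A\one_n/n$ under a map of operator norm one, so $\norm{\bm{a}}_2^2\le\norm{\bm{D}_A\one_n}_2^2/n^2\le\norm{\bm{D}_A}_{\fro}^2/n$.

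Now $(\bm{D}_A)_{ii}=2a_i+G_{ii}$, so $\sum_i(\bm{D}_A)_{ii}^2\le 8\norm{\bm{a}}_2^2+2\sum_iG_{ii}^2$. The $8\norm{\bm{a}}_2^2$ bound is the crude one; I expect a sharper bookkeeping, such as expanding $(2a_i+G_{ii})^2$ directly with the centering identities, to give the stated constant $2/n$.

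For $G_{ii}=\norm{\bm{z}_i}^2-\norm{\bm{z}_i'}^2$, choose the Procrustes rotation $\bm{Q}$ and write $G_{ii}=(\bm{z}_i-\bm{Q}^\top\bm{z}_i')^\top(\bm{z}_i+\bm{Q}^\top\bm{z}_i')$. Then $|G_{ii}|\le 4R_z\norm{\bm{z}_i-\bm{Q}^\top\bm{z}_i'}$, using row norms at most $2R_z$. Summing and applying \eqref{eq:proc} with $\bm{X}=\bm{Z}'$ and $\sigma_d^2(\bm{Z}')\ge n\kappa$ gives
\[
\sum_iG_{ii}^2\le 16R_z^2\,\frac{\norm{\bm{G}}_{\fro}^2}{2(\sqrt2-1)n\kappa}.
\]
The remaining step is $\norm{\bm{G}}_{\fro}\le\norm{\bm{D}_A}_{\fro}$. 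This holds because $\bm{G}$ is the double-centering $\bm{J}\bm{D}_A\bm{J}$ with $\bm{J}=\bm{I}-\one\one^\top/n$: the terms $\bm{a}\one^\top$ and $\one\bm{a}^\top$ are annihilated, and $\bm{J}\bm{G}\bm{J}=\bm{G}$. Combining the two pieces gives \eqref{eq:diagonal}.

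\textbf{Off-diagonal bound.} By symmetry, $\norm{\bm{D}_A}_{\fro}^2=\sum_i(\bm{D}_A)_{ii}^2+2\sum_{i<\ell}(\bm{D}_A)_{i\ell}^2$. Once $n_0$ is large enough that the factor in \eqref{eq:diagonal} is at most $1/2$, the diagonal absorbs into the left side and gives $\norm{\bm{D}_A}_{\fro}^2\le4\sum_{i<\ell}(\bm{D}_A)_{i\ell}^2$, which is \eqref{eq:offdiag}.

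\textbf{Distance bound.} The $\bm{\alpha}$ part follows from the bound on $\norm{\bm{a}}_2^2$ above. For the $\bm{B}$ block, $\one_n^\top\bm{D}_B=n(\bm{\gamma}-\bm{\gamma}')^\top$ by centering, so $\norm{\bm{\gamma}-\bm{\gamma}'}^2/m\le\norm{\bm{D}_B}_{\fro}^2/(nm)$. Let $\bm{J}_m=\bm{I}_m-\one_m\one_m^\top/m$. Then $\bm{D}_B\one_m=m(\bm{\beta}-\bm{\beta}')+\one_n(\one_m^\top(\bm{\gamma}-\bm{\gamma}'))$, and after applying $\bm{J}$ on the left (using $\one^\top\bm{\beta}=0$) this gives $\norm{\bm{\beta}-\bm{\beta}'}^2/n\le\norm{\bm{D}_B}_{\fro}^2/(nm)$. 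The $\bm{Z}$ term is controlled by \eqref{eq:proc}:
\[
\frac{\norm{\bm{Z}-\bm{Z}'\bm{Q}}_{\fro}^2}{n}\le \frac{C\norm{\bm{G}}_{\fro}^2}{n^2\kappa}.
\]
For $\bm{W}$, this step is the main obstacle, because $\bm{D}_B$ couples the $\bm{W}$ error with the $\bm{Z}$ error. I would use the fact that $\bm{J}\bm{D}_B\bm{J}_m=\bm{Z}\bm{W}^\top-\bm{Z}'(\bm{W}')^\top$ and decompose
\[
\bm{Z}\bm{W}^\top-\bm{Z}'(\bm{W}')^\top=\bm{Z}'\bm{Q}(\bm{W}-\bm{W}'\bm{Q})^\top+(\bm{Z}-\bm{Z}'\bm{Q})\bm{W}^\top ,
\]
which holds since $\bm{Z}'\bm{Q}\bm{Q}^\top(\bm{W}')^\top=\bm{Z}'(\bm{W}')^\top$. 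With $\sigma_d(\bm{Z}')\ge\sqrt{n\kappa}$ and $\norm{\bm{W}}_{\op}\le 2R_w\sqrt m$, this gives
\[
\norm{\bm{W}-\bm{W}'\bm{Q}}_{\fro}^2\le \frac{2}{n\kappa}\Bigl(\norm{\bm{D}_B}_{\fro}^2+4R_w^2m\norm{\bm{Z}-\bm{Z}'\bm{Q}}_{\fro}^2\Bigr).
\]
Dividing by $m$ and inserting the $\bm{Z}$ bound yields the claim with the same $\bm{Q}$ for both blocks. The constant depends on $R_w$ and $\kappa$; $R_z$ does not enter here.

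\textbf{Identification.} If the means agree on observed entries, then $\bm{D}_B=\bm{0}$ and, by \eqref{eq:offdiag}, $\bm{D}_A=\bm{0}$. The distance bound then gives $\dist=0$, that is, identification up to a common orthogonal transformation.
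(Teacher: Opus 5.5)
Your route is the paper's: a Procrustes bound on $G_{ii}$, absorption of the diagonal to get \eqref{eq:offdiag}, and the identity $\bm{Z}\bm{W}^\top-\bm{Z}'(\bm{W}')^\top=\bm{Z}'\bm{Q}(\bm{W}-\bm{W}'\bm{Q})^\top+(\bm{Z}-\bm{Z}'\bm{Q})\bm{W}^\top$ with one common $\bm{Q}$. The off-diagonal step, your projections for $\bm{\beta}$ and $\bm{\gamma}$, the $\bm{W}$ estimate, and the identification conclusion are all correct.

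The gap is in \eqref{eq:diagonal} itself. You bound $\norm{\bm{a}}_2^2\le\norm{\bm{D}_A}_{\fro}^2/n$ and $\norm{\bm{G}}_{\fro}^2\le\norm{\bm{D}_A}_{\fro}^2$ separately, and then use $(x+y)^2\le2x^2+2y^2$. That gives the coefficient $8/n+16R_z^2/\{(\sqrt2-1)n\kappa\}$. Both terms are off, the second by a factor of two, which you did not flag. So ``combining the two pieces gives \eqref{eq:diagonal}'' is false as written. Your suggested repair also fails. Expanding $(2a_i+G_{ii})^2$ produces the cross term $4\sum_i a_iG_{ii}$, and no centering identity controls it, since $\bm{\alpha}$ carries no constraint. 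Bounding that term by Cauchy--Schwarz only reproduces Minkowski's inequality with the same separate bounds, and that cannot reach $2/n$.

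The missing idea is to control $\bm{a}$ and $\bm{G}$ jointly rather than separately. Since $\bm{G}$ is symmetric with $\bm{G}\one_n=\bm{0}$, the matrices $\bm{a}\one_n^\top+\one_n\bm{a}^\top$ and $\bm{G}$ are orthogonal in the Frobenius inner product. Hence $\norm{\bm{D}_A}_{\fro}^2=2n\norm{\bm{a}}_2^2+2(\one_n^\top\bm{a})^2+\norm{\bm{G}}_{\fro}^2$. Your observation $\bm{G}=\bm{J}\bm{D}_A\bm{J}$ already contains this, because $\bm{X}\mapsto\bm{J}\bm{X}\bm{J}$ is an orthogonal projection. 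You extracted only $\norm{\bm{G}}_{\fro}\le\norm{\bm{D}_A}_{\fro}$ from it and then spent $\norm{\bm{D}_A}_{\fro}^2$ a second time on $\bm{a}$.

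Now apply Minkowski, $\{\sum_i(\bm{D}_A)_{ii}^2\}^{1/2}\le2\norm{\bm{a}}_2+4R_z\norm{\bm{G}}_{\fro}/\sqrt{2(\sqrt2-1)n\kappa}$. Then use Cauchy--Schwarz with weights $\sqrt{2/n}$ and $\{8R_z^2/((\sqrt2-1)n\kappa)\}^{1/2}$ against $(\sqrt{2n}\norm{\bm{a}}_2,\norm{\bm{G}}_{\fro})$. The identity bounds the second factor by $\norm{\bm{D}_A}_{\fro}$, which yields exactly the stated coefficient, as in the paper. The same identity also gives the sharper $\norm{\bm{a}}_2^2\le\norm{\bm{D}_A}_{\fro}^2/(2n)$.

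Your weaker coefficient is still $O(1/n)$, so \eqref{eq:offdiag} and everything downstream survive with a larger $n_0$. Only the literal inequality \eqref{eq:diagonal}, with its stated constants, is left unproved.
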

\begin{proof}
Set $\bm{u}=\bm{\alpha}-\bm{\alpha}'$ and $\bm{G}=\bm{Z}\bm{Z}^\top-\bm{Z}'(\bm{Z}')^\top$. Centering gives $\bm{G}\one_n=\bm{0}$, so:
\begin{equation}
  \label{eq:orthA} \norm{\bm{D}_A}_{\fro}^2=2n\norm{\bm{u}}_2^2+2(\one_n^\top \bm{u})^2+\norm{\bm{G}}_{\fro}^2.
\end{equation}
By \eqref{eq:proc}, choose $\bm{Q}$ such that $\norm{\bm{Z}-\bm{Z}'\bm{Q}}_{\fro}\le\norm{\bm{G}}_{\fro}/\sqrt{2(\sqrt2-1)n\kappa}$. Applying $|\norm{\bm{z}_i}_2^2-\norm{\bm{z}_i'}_2^2|\le4R_z\norm{\bm{z}_i-\bm{Q}^\top \bm{z}_i'}_2$ rowwise gives:
\begin{equation*}
  \Big\{\sum_i(\bm{D}_A)_{ii}^2\Big\}^{1/2} \le2\norm{\bm{u}}_2+\frac{4R_z\norm{\bm{G}}_{\fro}}{\sqrt{2(\sqrt2-1)n\kappa}} \le\left\{\frac2n+\frac{8R_z^2}{(\sqrt2-1)n\kappa}\right\}^{1/2} (2n\norm{\bm{u}}_2^2+\norm{\bm{G}}_{\fro}^2)^{1/2}.
\end{equation*}
Equation~\eqref{eq:orthA} proves \eqref{eq:diagonal}. Choose a fixed $n_0$ such that the braced coefficient is at most $1/2$. Subtracting the diagonal contribution from the complete norm proves \eqref{eq:offdiag}. Equations~\eqref{eq:orthA} and~\eqref{eq:proc} allow the same $\bm{Q}$ to satisfy:
\begin{equation*}
  \frac{\norm{\bm{\alpha}-\bm{\alpha}'}_2^2}{n}\le\frac{\norm{\bm{D}_A}_{\fro}^2}{2n^2},\qquad \frac{\norm{\bm{Z}-\bm{Z}'\bm{Q}}_{\fro}^2}{n} \le\frac{\norm{\bm{D}_A}_{\fro}^2}{2(\sqrt2-1)n^2\kappa}.
\end{equation*}
The three $\bm{B}$-block components are mutually orthogonal, yielding:
\begin{equation*}
  \norm{\bm{D}_B}_{\fro}^2 =m\norm{\bm{\beta}-\bm{\beta}'}_2^2+n\norm{\bm{\gamma}-\bm{\gamma}'}_2^2 +\norm{\bm{Z}\bm{W}^\top-\bm{Z}'(\bm{W}')^\top}_{\fro}^2.
\end{equation*}
The identity $\bm{Z}'\bm{Q}(\bm{W}-\bm{W}'\bm{Q})^\top=\bm{Z}\bm{W}^\top-\bm{Z}'(\bm{W}')^\top-(\bm{Z}-\bm{Z}'\bm{Q})\bm{W}^\top$, together with $\sigma_d^2(\bm{Z}'\bm{Q})\ge n\kappa$ and $\norm{\bm{W}}_{\op}^2\le4mR_w^2$, gives:
\begin{equation*}
  \frac{\norm{\bm{W}-\bm{W}'\bm{Q}}_{\fro}^2}{m} \le\frac{2\norm{\bm{Z}\bm{W}^\top-\bm{Z}'(\bm{W}')^\top}_{\fro}^2}{nm\kappa} +\frac{8R_w^2}{n\kappa}\norm{\bm{Z}-\bm{Z}'\bm{Q}}_{\fro}^2.
\end{equation*}
Summing gives dimension-independent coefficients $1/2+1/\{2(\sqrt2-1)\kappa\}+4R_w^2/\{(\sqrt2-1)\kappa^2\}$ for the normalized $\bm{D}_A$ term and $1+2/\kappa$ for the normalized $\bm{D}_B$ term.
\end{proof}

For small $n$, identifiability can fail. For example, take $n=3,d=1,\bm{W}=\bm{0}$, $\bm{Z}=(-1,0,1)^\top$, $\bm{\alpha}=\bm{0}$, and $\bm{Z}'=\bm{Z}/2$, $\bm{\alpha}'=(-3/8,3/8,-3/8)^\top$. The two representations give identical off-diagonal $\bm{A}$ means. With the same $\bm{\beta},\bm{\gamma}$, they also give identical $\bm{B}$ means, but $\bm{Z}$ and $\bm{Z}'$ are not related by an orthogonal transformation.

The next deterministic bound serves estimation and comparison of successive objectives.
\begin{lemma}\label{lem:deterministic}
Fix nonnegative observations $\bm{A},\bm{B}$, with $\bm{A}$ symmetric and zero-diagonal. For $u>0$, suppose their row and column sums satisfy \eqref{eq:obsmargins} with $t$ replaced by $u$, and $\min\{u,nu/m\}$ exceeds a fixed constant. Let the feasible reference $\Theta'$ be centered, satisfy the radius constraints and $\lambda_{\min}\{(\bm{Z}')^\top \bm{Z}'/n\}\ge\kappa$, and have means between $cu/n,Cu/n$ for $\bm{A}$ and $cu/m,Cu/m$ for $\bm{B}$. Let $\widehat{\Theta}$ be feasible with global optimization error at most $\varepsilon\le\varepsilon_0$, where $\varepsilon_0$ is fixed. Define the mean matrices by $(\bm{\Theta}_A(\Theta'))_{i\ell}=\theta_{i\ell}^{A}(\Theta')$ for $i\ne\ell$, $(\bm{\Theta}_A(\Theta'))_{ii}=0$, and $(\bm{\Theta}_B(\Theta'))_{ik}=\theta_{ik}^{B}(\Theta')$, and set:
\begin{equation}
  \label{eq:arbitrary-noise} \bm{M}=\left[\frac{\sqrt n}{2}\{\bm{A}-\bm{\Theta}_A(\Theta')\},\ \sqrt m\{\bm{B}-\bm{\Theta}_B(\Theta')\}\right].
\end{equation}
For $n\ge n_0$:
\begin{equation}
  \label{eq:deterministic-bound} \dist^2(\widehat{\Theta},\Theta') \le C\frac{\norm{\bm{M}}_{\op}^2}{nu^2}+C\frac{\varepsilon}{nu}.
\end{equation}
The normalized squared log-mean error obeys the same bound.
\end{lemma}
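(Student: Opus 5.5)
Since $\widehat{\Theta}$ has optimization error at most $\varepsilon$ and $\Theta'$ is feasible, we have $\mathcal{L}(\widehat{\Theta})-\mathcal{L}(\Theta')\le\varepsilon$. The plan is to expand this difference into a linear (noise) part and a nonnegative Bregman (curvature) part. We then bound the noise part by $\norm{\bm{M}}_{\op}$ times a nuclear-type norm of the predictor difference, and lower-bound the curvature by $\norm{\bm{D}_A}_{\fro}^2$ and $\norm{\bm{D}_B}_{\fro}^2$. Lemma~\ref{lem:recovery} then converts these into $\dist^2$.

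\textbf{Decomposition.} Write $\eta=\log\theta(\widehat{\Theta})$ and $\eta'=\log\theta(\Theta')$ entrywise, and set $\Delta=\eta-\eta'$. On observed entries, $\Delta$ equals $\bm{D}_A$ and $\bm{D}_B$ of \eqref{eq:D}. For each entry,
\[
\theta(\widehat{\Theta})-X\eta-\{\theta(\Theta')-X\eta'\}=(\theta'-X)\Delta+\theta'(e^{\Delta}-1-\Delta).
\]
Summing over entries, the linear part equals $-\langle \bm{A}-\bm{\Theta}_A,\bm{D}_A\rangle_{i<\ell}-\langle \bm{B}-\bm{\Theta}_B,\bm{D}_B\rangle$. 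With the scaling of \eqref{eq:arbitrary-noise}, this is $-\langle\bm{M},[\tfrac{1}{\sqrt n}\bm{D}_A^{\rm off},\tfrac1{\sqrt m}\bm{D}_B]\rangle$ (the diagonal is zero in $\bm{M}$'s first block). Its absolute value is therefore at most $\norm{\bm{M}}_{\op}$ times the nuclear norm of $[\bm{D}_A/\sqrt n,\bm{D}_B/\sqrt m]$. That matrix has rank at most $C d$, being a sum of rank-one node-effect terms and rank-$2d$ latent terms. Its nuclear norm is thus at most $C(\norm{\bm{D}_A}_{\fro}/\sqrt n+\norm{\bm{D}_B}_{\fro}/\sqrt m)$.

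\textbf{Curvature.} This is the main obstacle, because $e^\Delta-1-\Delta\ge c\Delta^2$ only when $|\Delta|$ is bounded. We get a uniform bound from a priori control. Lemma~\ref{lem:fitmeans} (its argument applies with $t$ replaced by $u$) shows that the fitted means of $\widehat{\Theta}$ lie within constant factors of $u/n$ and $u/m$. The reference $\Theta'$ has the same scale by hypothesis, so $|\Delta|\le C$ on all observed entries. Hence the Bregman term is at least $c\{(u/n)\sum_{i<\ell}(\bm{D}_A)_{i\ell}^2+(u/m)\norm{\bm{D}_B}_{\fro}^2\}$. By \eqref{eq:offdiag}, the off-diagonal sum controls $\norm{\bm{D}_A}_{\fro}^2/4$.

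\textbf{Combine.} Let $x^2=\norm{\bm{D}_A}_{\fro}^2/n^2+\norm{\bm{D}_B}_{\fro}^2/(nm)$. The curvature is at least $c\,nu\,x^2$, and the noise term is at most $C\norm{\bm{M}}_{\op}\sqrt n\,x$. Together these give
\[
c\,nu\,x^2\le C\sqrt n\norm{\bm{M}}_{\op}x+\varepsilon .
\]
Solving the quadratic yields $x^2\le C\norm{\bm{M}}_{\op}^2/(nu^2)+C\varepsilon/(nu)$. Lemma~\ref{lem:recovery} bounds $\dist^2(\widehat{\Theta},\Theta')\le Cx^2$, which is \eqref{eq:deterministic-bound}, using $\lambda_{\min}\{(\bm{Z}')^\top\bm{Z}'/n\}\ge\kappa$ for the reference. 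The normalized squared log-mean error is at most $x^2$ up to constants, since it is exactly the observed-entry part of $\norm{\bm{D}_A}_{\fro}^2/n^2$ and $\norm{\bm{D}_B}_{\fro}^2/(nm)$, so the same bound applies to it.
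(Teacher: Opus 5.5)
Your proposal is correct and follows essentially the same route as the paper's proof. Both use the per-entry Bregman identity to get the basic inequality, the marginal perturbations of Lemma~\ref{lem:fitmeans} (with $t$ replaced by $u$) to put the fitted means on the scale of the reference so the curvature becomes quadratic, and \eqref{eq:offdiag} to pass from the off-diagonal sum to $\norm{\bm{D}_A}_{\fro}^2$. Both then apply the rank-$(2d+3)$ bound $|\langle\bm{M},[\bm{D}_A/\sqrt n,\bm{D}_B/\sqrt m]\rangle|\le\sqrt{2d+3}\,\norm{\bm{M}}_{\op}\norm{[\bm{D}_A/\sqrt n,\bm{D}_B/\sqrt m]}_{\fro}$, Young's inequality, and Lemma~\ref{lem:recovery}. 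The one cosmetic difference is that you lower-bound $e^{\Delta}-1-\Delta$ through $|\Delta|\le C$, whereas the paper uses the integral form $\theta'(e^D-1-D)\ge\tfrac12\min\{\theta',\widehat{\theta}\}D^2$; both rest on the same fitted-mean bounds.
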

\begin{proof}
The feasible marginal perturbations in Lemma~\ref{lem:fitmeans} apply to these observations. Together with Lemma~\ref{lem:meansfrommargins}, they give $\widehat{\theta}_{i\ell}^{A}\asymp u/n$ and $\widehat{\theta}_{ik}^{B}\asymp u/m$.

\textit{Objective difference.} In \eqref{eq:D}, take $\Theta=\widehat{\Theta}$ and use the current reference $\Theta'$. At observed entries, $(\bm{D}_A)_{i\ell}=\log\{\widehat{\theta}_{i\ell}^{A}/\theta_{i\ell}^{A}(\Theta')\}$, with the analogous definition for $\bm{D}_B$. At a given entry, let $Y$ be the observed count, $\widehat{\theta},\theta'$ the fitted and reference means, and $D=\log(\widehat{\theta}/\theta')$. Then:
\begin{equation*}
  \widehat{\theta}-Y\log\widehat{\theta}-(\theta'-Y\log\theta') =\theta'(e^D-1-D)-(Y-\theta')D.
\end{equation*}
Summing, using feasibility of the reference and the optimization error bound, yields:
\begin{equation}
  \label{eq:basic} \begin{aligned}
     &\sum_{i<\ell}\theta_{i\ell}^{A}(\Theta') \{e^{(\bm{D}_A)_{i\ell}}-1-(\bm{D}_A)_{i\ell}\}+\sum_{i,k}\theta_{ik}^{B}(\Theta') \{e^{(\bm{D}_B)_{ik}}-1-(\bm{D}_B)_{ik}\}\\
     &\le\sum_{i<\ell}\{A_{i\ell}-\theta_{i\ell}^{A}(\Theta')\}(\bm{D}_A)_{i\ell} +\sum_{i,k}\{B_{ik}-\theta_{ik}^{B}(\Theta')\}(\bm{D}_B)_{ik} +\varepsilon. 
  \end{aligned}
\end{equation}

Set $\bm{F}=[\bm{D}_A/\sqrt n,\ \bm{D}_B/\sqrt m]$. The noise sums in \eqref{eq:basic} equal $\langle \bm{M},\bm{F}\rangle=\operatorname{tr}(\bm{M}^\top\bm{F})$: the zero-diagonal $\bm{A}$ block of $\bm{M}$ has factor $1/2$ against double counting. Taylor's formula gives:
\begin{equation*}
  \theta'(e^D-1-D)=D^2\int_0^1(1-v)(\theta')^{1-v}\widehat{\theta}^{\,v}\,dv\ge\tfrac12\min\{\theta',\widehat{\theta}\}D^2.
\end{equation*}
The reference and fitted mean bounds, followed by \eqref{eq:offdiag}, therefore yield:
\begin{equation*}
  cu\norm{\bm{F}}_{\fro}^2\le cu\left\{\frac1n\sum_{i<\ell}(\bm{D}_A)_{i\ell}^2+\frac1m\norm{\bm{D}_B}_{\fro}^2\right\}\le\langle \bm{M},\bm{F}\rangle+\varepsilon,
\end{equation*}
where the first constant is decreased if needed.

\textit{Low-rank noise bound.} Equation~\eqref{eq:D} shows that the column space of $\bm{F}$ is contained in the span of $\widehat{\bm{Z}},\bm{Z}',\widehat{\bm{\alpha}}-\bm{\alpha}',\widehat{\bm{\beta}}-\bm{\beta}',\one_n$. Thus $\rank(\bm{F})\le2d+3$. For a singular value decomposition $\bm{F}=\sum_j\sigma_j \bm{a}_j\bm{b}_j^\top$:
\begin{equation*}
  |\langle \bm{M},\bm{F}\rangle| \le\sum_j\sigma_j|\bm{a}_j^\top \bm{M}\bm{b}_j| \le\norm{\bm{M}}_{\op}\sum_j\sigma_j \le\sqrt{2d+3}\norm{\bm{M}}_{\op}\norm{\bm{F}}_{\fro}.
\end{equation*}
The last step uses Cauchy--Schwarz and fixed $d$.

Young's inequality now gives:
\begin{equation*}
  cu\norm{\bm{F}}_{\fro}^2\le C\norm{\bm{M}}_{\op}\norm{\bm{F}}_{\fro}+\varepsilon\le\frac{cu}{2}\norm{\bm{F}}_{\fro}^2+\frac C{u}\norm{\bm{M}}_{\op}^2+\varepsilon.
\end{equation*}
Thus $\norm{\bm{F}}_{\fro}^2\le C\norm{\bm{M}}_{\op}^2/u^2+C\varepsilon/u$. Lemma~\ref{lem:recovery} gives \eqref{eq:deterministic-bound}. The normalized log-mean error is at most $\norm{\bm{F}}_{\fro}^2/n$.
\end{proof}

For local optimization, write a centered perturbation as $\bm{h}=(\bm{a},\bm{b},\bm{c},\bm{U},\bm{V})$, with rows $\bm{u}_i^\top,\bm{v}_k^\top$ in $\bm{U},\bm{V}$, and $\one_n^\top \bm{b}=0$, $\one_n^\top \bm{U}=\bm{0}^\top$, $\one_m^\top \bm{V}=\bm{0}^\top$. Use the norm:
\begin{equation}
  \label{eq:tangentnorm} \norm{\bm{h}}_{n,m}^2 =\frac{\norm{\bm{a}}_2^2+\norm{\bm{b}}_2^2+\norm{\bm{U}}_{\fro}^2}{n} +\frac{\norm{\bm{c}}_2^2+\norm{\bm{V}}_{\fro}^2}{m}.
\end{equation}
The inner product $\langle\cdot,\cdot\rangle_{n,m}$ is the polarization of the norm $\norm{\cdot}_{n,m}$ on perturbations, and Equation~\eqref{eq:distance} minimizes the discrepancy over common orthogonal transformations.

\begin{lemma}\label{lem:alignment}
The minimum in \eqref{eq:distance} is attained, and $\dist$ satisfies the triangle inequality. For two tuples $\Theta,\Theta'$, let $\bm{Q}$ optimally align $\Theta'$ to $\Theta$ and set $\Theta_0=\Theta'\bm{Q}$. For $\bm{h}=\Theta-\Theta_0$, we have $\norm{\bm{h}}_{n,m}=\dist(\Theta,\Theta_0)$ and:
\begin{equation}
  \label{eq:orthogonality} (\bm{Z}_0)^\top \bm{U}/n+(\bm{W}_0)^\top \bm{V}/m\quad\text{is symmetric}.
\end{equation}
Equivalently, $\bm{h}$ is orthogonal to every tangent direction $(\bm{0},\bm{0},\bm{0},\bm{Z}_0\bm{\Omega},\bm{W}_0\bm{\Omega})$ of common orthogonal transformations, where $\bm{\Omega}^\top=-\bm{\Omega}$.
\end{lemma}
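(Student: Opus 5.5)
The minimum in \eqref{eq:distance} is attained by compactness. The node-effect terms do not depend on $\bm{Q}$, and the latent term is a continuous function of $\bm{Q}$ on the orthogonal group $O(d)$, which is compact. So a minimizer $\bm{Q}$ exists.

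For the triangle inequality, write $\dist(\Theta,\Theta')=\min_{\bm{Q}}\norm{\Theta-\Theta'\bm{Q}}_{n,m}$, where $\norm{\cdot}_{n,m}$ is the weighted Euclidean norm \eqref{eq:tangentnorm} extended to all tuples. A common right multiplication by an orthogonal matrix preserves Frobenius norms and leaves the node effects unchanged, so
\[
\norm{\Theta\bm{Q}_1-\Theta'\bm{Q}_2}_{n,m}=\norm{\Theta-\Theta'\bm{Q}_2\bm{Q}_1^\top}_{n,m}.
\]
Given three tuples, let $\bm{Q}_1$ align $\Theta'$ to $\Theta$ and $\bm{Q}_2$ align $\Theta''$ to $\Theta'$. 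Then
\[
\norm{\Theta-\Theta''\bm{Q}_2\bm{Q}_1}_{n,m}\le\norm{\Theta-\Theta'\bm{Q}_1}_{n,m}+\norm{(\Theta'-\Theta''\bm{Q}_2)\bm{Q}_1}_{n,m},
\]
and the second term equals $\dist(\Theta',\Theta'')$ by invariance. Symmetry of $\dist$ follows from the same invariance. Invariance also gives $\dist(\Theta,\Theta_0)=\dist(\Theta,\Theta')$ for $\Theta_0=\Theta'\bm{Q}$. The identity matrix is optimal for aligning $\Theta_0$ to $\Theta$, since any alternative $\bm{Q}'$ would give $\bm{Q}\bm{Q}'$ as a better alignment of $\Theta'$. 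Hence $\norm{\bm{h}}_{n,m}=\dist(\Theta,\Theta_0)$. Centering of $\bm{h}$ holds because both tuples are centered and centering commutes with right multiplication.

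For \eqref{eq:orthogonality}, I would use the first-order condition at $\bm{I}_d$. Consider the curve $\bm{Q}(s)=\exp(s\bm{\Omega})$ with $\bm{\Omega}$ skew-symmetric. The function
\[
f(s)=\norm{\bm{Z}-\bm{Z}_0\bm{Q}(s)}_{\fro}^2/n+\norm{\bm{W}-\bm{W}_0\bm{Q}(s)}_{\fro}^2/m
\]
is minimized at $s=0$, so $f'(0)=0$. Differentiating gives
\[
f'(0)=-2\operatorname{tr}\{(\bm{Z}_0^\top\bm{U}/n+\bm{W}_0^\top\bm{V}/m)^\top\bm{\Omega}\}=0 .
\]
The matrix $\bm{S}=\bm{Z}_0^\top\bm{U}/n+\bm{W}_0^\top\bm{V}/m$ therefore has zero trace inner product with every skew-symmetric matrix. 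This forces its skew part to vanish, so $\bm{S}$ is symmetric. The same trace expression equals $\langle\bm{h},(\bm{0},\bm{0},\bm{0},\bm{Z}_0\bm{\Omega},\bm{W}_0\bm{\Omega})\rangle_{n,m}$, which proves the equivalent orthogonality statement. The converse direction also holds, since that computation shows symmetry is equivalent to vanishing inner products with all such directions.

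The only delicate point is checking that optimality of $\bm{Q}$ for $\Theta'$ transfers to optimality of $\bm{I}_d$ for $\Theta_0$; this uses only group closure, so I expect no real obstacle.
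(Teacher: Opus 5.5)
Your proposal is correct and follows essentially the same route as the paper: compactness of the orthogonal group gives attainment, composing the two optimal alignments ($\Theta''\bm{Q}_2\bm{Q}_1$) and using invariance of $\norm{\cdot}_{n,m}$ under a common right multiplication gives the triangle inequality, and the first-order condition along $\exp(s\bm{\Omega})$ gives symmetry of $\bm{Z}_0^\top\bm{U}/n+\bm{W}_0^\top\bm{V}/m$. Your group-closure argument that $\bm{I}_d$ is an optimal alignment of $\Theta_0$ to $\Theta$ makes explicit a step the paper uses only implicitly.
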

\begin{proof}
Compactness of the orthogonal group and continuity give an optimal orthogonal matrix. For three tuples $\Theta_1,\Theta_2,\Theta_3$, let $\bm{Q}_{12}$ align $\Theta_2$ to $\Theta_1$ and $\bm{Q}_{23}$ align $\Theta_3$ to $\Theta_2$. Invariance of \eqref{eq:tangentnorm} gives:
\begin{equation*}
  \begin{aligned}
    \dist(\Theta_1,\Theta_3)&\le\norm{\Theta_1-\Theta_3\bm{Q}_{23}\bm{Q}_{12}}_{n,m}\\
    &\le\norm{\Theta_1-\Theta_2\bm{Q}_{12}}_{n,m}+\norm{(\Theta_2-\Theta_3\bm{Q}_{23})\bm{Q}_{12}}_{n,m}\\
    &=\dist(\Theta_1,\Theta_2)+\dist(\Theta_2,\Theta_3).
  \end{aligned}
\end{equation*}

For $\bm{\Omega}^\top=-\bm{\Omega}$, optimal alignment implies:
\begin{equation*}
  0=\left.\frac{d}{ds}\norm{\Theta-\Theta_0\exp(s\bm{\Omega})}_{n,m}^2\right|_{s=0}=-2\operatorname{tr}\!\left\{\left(\frac{\bm{Z}_0^\top \bm{U}}{n}+\frac{\bm{W}_0^\top \bm{V}}{m}\right)^\top\bm{\Omega}\right\}.
\end{equation*}
Orthogonality to every skew-symmetric matrix is equivalent to symmetry, proving \eqref{eq:orthogonality}.
\end{proof}

\begin{lemma}\label{lem:jacobian}
Fix a centered reference $\Theta_0$ with row norms bounded by $2R_z,2R_w$ and $\lambda_{\min}(\bm{Z}_0^\top \bm{Z}_0/n)\ge\kappa_1>0$, where $\kappa_1$ is fixed. For a centered perturbation $\bm{h}=(\bm{a},\bm{b},\bm{c},\bm{U},\bm{V})$, define the first-order log-mean changes:
\begin{equation}
  \label{eq:jacobian} j_{A,i\ell}(\bm{h})=a_i+a_\ell+\bm{u}_i^\top \bm{z}_{0,\ell}+\bm{z}_{0,i}^\top \bm{u}_\ell,\quad j_{B,ik}(\bm{h})=b_i+c_k+\bm{u}_i^\top \bm{w}_{0,k}+\bm{z}_{0,i}^\top \bm{v}_k.
\end{equation}
Every such $\bm{h}$ satisfies:
\begin{equation}
  \label{eq:Jupper} \frac1{n^2}\sum_{i<\ell}j_{A,i\ell}(\bm{h})^2 +\frac1{nm}\sum_{i,k}j_{B,ik}(\bm{h})^2\le C\norm{\bm{h}}_{n,m}^2.
\end{equation}
If $\bm{h}$ also satisfies \eqref{eq:orthogonality}, then for $n\ge n_0$ with a sufficiently large fixed $n_0$:
\begin{equation}
  \label{eq:Jlower} \frac1{n^2}\sum_{i<\ell}j_{A,i\ell}(\bm{h})^2 +\frac1{nm}\sum_{i,k}j_{B,ik}(\bm{h})^2\ge c\norm{\bm{h}}_{n,m}^2.
\end{equation}
\end{lemma}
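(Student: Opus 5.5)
The upper bound \eqref{eq:Jupper} follows from elementary inequalities. For the $\bm{A}$ block, apply $(x_1+\cdots+x_4)^2\le4\sum x_j^2$ to each term. The node-effect pieces contribute at most $\sum_{i<\ell}(a_i^2+a_\ell^2)\le n\norm{\bm a}_2^2$. For the latent pieces, Cauchy--Schwarz with $\norm{\bm z_{0,\ell}}_2\le2R_z$ gives $\sum_{i,\ell}(\bm u_i^\top\bm z_{0,\ell})^2\le 4R_z^2 n\norm{\bm U}_\fro^2$. After dividing by $n^2$, these are bounded by $C\norm{\bm h}_{n,m}^2$. The $\bm{B}$ block works the same way: $\sum_{i,k}b_i^2=m\norm{\bm b}^2$, $\sum_{i,k}c_k^2=n\norm{\bm c}^2$, $\sum(\bm u_i^\top\bm w_{0,k})^2\le4R_w^2m\norm{\bm U}_\fro^2$ and $\sum(\bm z_{0,i}^\top\bm v_k)^2\le4R_z^2n\norm{\bm V}_\fro^2$. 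Dividing by $nm$ gives the claim.

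For the lower bound \eqref{eq:Jlower}, I would mirror Lemma~\ref{lem:recovery}, linearized. Complete the $\bm A$ block to the full matrix $\bm J_A=\bm a\one_n^\top+\one_n\bm a^\top+\bm U\bm Z_0^\top+\bm Z_0\bm U^\top$, whose diagonal entries are $2a_i+2\bm u_i^\top\bm z_{0,i}$. This gives $\sum_i(\bm J_A)_{ii}^2\le C(\norm{\bm a}^2+\norm{\bm U}_\fro^2)$. Centering of $\bm U$ and $\bm Z_0$ makes the node-effect part orthogonal to $\bm G=\bm U\bm Z_0^\top+\bm Z_0\bm U^\top$, exactly as in \eqref{eq:orthA}. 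Hence $\norm{\bm J_A}_\fro^2\ge2n\norm{\bm a}^2+\norm{\bm G}_\fro^2$. Similarly, $\norm{\bm J_B}_\fro^2=m\norm{\bm b}^2+n\norm{\bm c}^2+\norm{\bm U\bm W_0^\top+\bm Z_0\bm V^\top}_\fro^2$, using centering of $\bm b$, $\bm U$, $\bm V$, $\bm Z_0$ and $\bm W_0$.

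The key claim is a linear Procrustes inequality. Write $\bm X=\bm Z_0^\top\bm U$ and $\bm Y=\bm W_0^\top\bm V$. Under \eqref{eq:orthogonality}, I want
\[
\norm{\bm G}_\fro^2/n^2+\norm{\bm U\bm W_0^\top+\bm Z_0\bm V^\top}_\fro^2/(nm)\ge c\bigl(\norm{\bm U}_\fro^2/n+\norm{\bm V}_\fro^2/m\bigr).
\]
I would expand $\norm{\bm G}_\fro^2=2\operatorname{tr}(\bm U^\top\bm U\bm Z_0^\top\bm Z_0)+2\operatorname{tr}(\bm X\bm X)$. The first term is at least $2n\kappa_1\norm{\bm U}_\fro^2$. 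The second term equals $\norm{\operatorname{sym}\bm X}^2-\norm{\operatorname{skew}\bm X}^2$, and it is the obstacle. Condition \eqref{eq:orthogonality} says $\operatorname{skew}(\bm X/n)=-\operatorname{skew}(\bm Y/m)$.

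To handle the skew part, split into cases. If $\bm V$ is large relative to $\bm U$, the $\bm B$ block dominates, because $\norm{\bm Z_0\bm V^\top}_\fro^2\ge n\kappa_1\norm{\bm V}_\fro^2$ and the cross term is controlled. Otherwise, decompose $\bm U=\bm Z_0\bm S+\bm U_\perp$, with $\bm U_\perp$ orthogonal to the column space of $\bm Z_0$. Then $\bm G$ vanishes on the skew part of $\bm\Sigma\bm S$, where $\bm\Sigma=\bm Z_0^\top\bm Z_0$. This skew part is exactly the direction tied to $\bm V$ through \eqref{eq:orthogonality}, and the $\bm B$ block must recover it. Concretely, write $\bm J_B^{\rm lat}$ for the latent part $\bm U\bm W_0^\top+\bm Z_0\bm V^\top$ of $\bm J_B$, and project it onto the column space of $\bm Z_0$ via $\bm Z_0^\top\bm J_B^{\rm lat}=\bm X\bm W_0^\top+\bm\Sigma\bm V^\top$. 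Pairing this with the skew relation should show that the skew parts cannot cancel. The fallback is a compactness/contradiction argument in the normalized finite-dimensional quantities $\bm\Sigma/n$, $\bm W_0^\top\bm W_0/m$, $\bm X/n$ and $\bm Y/m$; I expect this to be the hardest step, and it may need the bounded-radius constants.

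Finally, $\sum_{i<\ell}j_{A,i\ell}^2=\tfrac12\bigl(\norm{\bm J_A}_\fro^2-\sum_i(\bm J_A)_{ii}^2\bigr)$. For $n\ge n_0$, the diagonal bound $C(\norm{\bm a}^2+\norm{\bm U}_\fro^2)$ is at most a small fraction of $n\norm{\bm a}^2+n\kappa_1\norm{\bm U}_\fro^2$, so it is absorbed into the main terms. Combining with the $\bm B$ identity gives \eqref{eq:Jlower}.
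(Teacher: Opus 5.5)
Your upper bound matches the paper's, and your block identities for $\norm{\bm{J}_A}_{\fro}^2$ and $\norm{\bm{J}_B}_{\fro}^2$ are correct. The lower bound, however, has a genuine gap exactly where you flag it: the ``linear Procrustes inequality'' is never established. The case split ends with ``should show that the skew parts cannot cancel.'' The sketch also misidentifies the structure. $\bm{G}$ vanishes precisely when $\bm{U}_\perp=\bm{0}$ and $\bm{S}$ is skew, that is, $\bm{U}=\bm{Z}_0\bm{\Omega}$ with $\bm{\Omega}^\top=-\bm{\Omega}$. In terms of $\bm{X}$ this means $\bm{X}=\bm{\Sigma}\bm{\Omega}$, not ``the skew part of $\bm{\Sigma}\bm{S}$.'' Condition \eqref{eq:orthogonality} constrains the skew part of $\bm{X}/n+\bm{Y}/m$, and these directions coincide only when $\bm{\Sigma}\propto\bm{I}_d$. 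The compactness fallback does not close the gap by itself. You would still need to show that no nonzero $\bm{h}$ satisfying \eqref{eq:orthogonality} has both $\bm{G}=\bm{0}$ and $\bm{U}\bm{W}_0^\top+\bm{Z}_0\bm{V}^\top=\bm{0}$, which is the missing idea; the reduction must also carry $\bm{U}^\top\bm{U}/n$ and $\bm{V}^\top\bm{V}/m$. Similarly, the diagonal term $C\norm{\bm{U}}_{\fro}^2$ cannot be absorbed into $\norm{\bm{G}}_{\fro}^2$, which does not dominate $n\kappa_1\norm{\bm{U}}_{\fro}^2$ along tangent directions. It can only be absorbed into the final combined bound once that bound is proved.

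The missing step is quantitative injectivity modulo the rotation tangent space, followed by Pythagoras.
\begin{itemize}
\item Take $\bm{\Omega}=(\bm{S}-\bm{S}^\top)/2$. The three pieces of $\bm{G}=\bm{Z}_0(\bm{S}+\bm{S}^\top)\bm{Z}_0^\top+\bm{U}_\perp\bm{Z}_0^\top+\bm{Z}_0\bm{U}_\perp^\top$ are Frobenius-orthogonal, so $\norm{\bm{G}}_{\fro}^2\ge(n\kappa_1)^2\norm{\bm{S}+\bm{S}^\top}_{\fro}^2+2n\kappa_1\norm{\bm{U}_\perp}_{\fro}^2$. With $\norm{\bm{Z}_0}_{\op}^2\le4nR_z^2$ this gives $\norm{\bm{U}-\bm{Z}_0\bm{\Omega}}_{\fro}^2/n\le C\norm{\bm{G}}_{\fro}^2/n^2$.
\item Next, $\bm{U}\bm{W}_0^\top+\bm{Z}_0\bm{V}^\top=(\bm{U}-\bm{Z}_0\bm{\Omega})\bm{W}_0^\top+\bm{Z}_0(\bm{V}-\bm{W}_0\bm{\Omega})^\top$. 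Using $\sigma_d(\bm{Z}_0)\ge\sqrt{n\kappa_1}$ and $\norm{\bm{W}_0}_{\op}\le2R_w\sqrt m$, this bounds $\norm{\bm{V}-\bm{W}_0\bm{\Omega}}_{\fro}^2/m$ by $C$ times the normalized latent terms of both blocks.
\item Finally, \eqref{eq:orthogonality} says $\bm{h}$ is $\langle\cdot,\cdot\rangle_{n,m}$-orthogonal to $(\bm{0},\bm{0},\bm{0},\bm{Z}_0\bm{\Omega},\bm{W}_0\bm{\Omega})$. Hence $\norm{\bm{U}}_{\fro}^2/n+\norm{\bm{V}}_{\fro}^2/m\le\norm{\bm{U}-\bm{Z}_0\bm{\Omega}}_{\fro}^2/n+\norm{\bm{V}-\bm{W}_0\bm{\Omega}}_{\fro}^2/m$, and no case split is needed.
\end{itemize}
For your compactness route, the same facts give the trivial kernel. $\bm{U}=\bm{Z}_0\bm{\Omega}$, $\bm{V}=\bm{W}_0\bm{\Omega}$ and \eqref{eq:orthogonality} force $\bm{M}\bm{\Omega}+\bm{\Omega}\bm{M}=\bm{0}$ with $\bm{M}=\bm{Z}_0^\top\bm{Z}_0/n+\bm{W}_0^\top\bm{W}_0/m\succ\bm{0}$, hence $\bm{\Omega}=\bm{0}$.

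The paper reaches the same Pythagorean fact nonlinearly. It applies Lemma~\ref{lem:recovery}, which rests on \eqref{eq:proc} and handles the diagonal through \eqref{eq:offdiag}, to $\Theta_0+\varepsilon\bm{h}$ versus $\Theta_0$, and divides by $\varepsilon^2$. It then shows that $\dist^2(\Theta_0+\varepsilon\bm{h},\Theta_0)/\varepsilon^2$ tends to the squared distance from $\bm{h}$ to the tangent space, which equals $\norm{\bm{h}}_{n,m}^2$ under \eqref{eq:orthogonality}. Your linearized route, once completed as above, is more direct and gives explicit constants.
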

\begin{proof}
\textit{Upper bound.} Applying $(x_1+\cdots+x_4)^2\le4\sum_jx_j^2$ and counting each worker at most $n-1$ times gives:
\begin{equation*}
  \begin{aligned}
    \sum_{i<\ell}j_{A,i\ell}(\bm{h})^2&\le4(n-1)\{\norm{\bm{a}}_2^2+4R_z^2\norm{\bm{U}}_{\fro}^2\},\\
    \sum_{i,k}j_{B,ik}(\bm{h})^2&\le4\{m\norm{\bm{b}}_2^2+n\norm{\bm{c}}_2^2+4mR_w^2\norm{\bm{U}}_{\fro}^2+4nR_z^2\norm{\bm{V}}_{\fro}^2\}.
  \end{aligned}
\end{equation*}
Dividing by $n^2,nm$ and adding proves \eqref{eq:Jupper}.

\textit{First-order limit of the distance.} Consider $\Theta_0+\varepsilon \bm{h}$ as $\varepsilon\to0$. We first establish:
\begin{equation}
  \label{eq:metricderivative} \lim_{\varepsilon\to0} \frac{\dist^2(\Theta_0+\varepsilon \bm{h},\Theta_0)}{\varepsilon^2} =\min_{\bm{\Omega}^\top=-\bm{\Omega}} \norm{\bm{h}-(\bm{0},\bm{0},\bm{0},\bm{Z}_0\bm{\Omega},\bm{W}_0\bm{\Omega})}_{n,m}^2.
\end{equation}
The choice $\bm{Q}=\bm{I}_d$ bounds the minimum distance by $|\varepsilon|\norm{\bm{h}}_{n,m}$. For optimal $\bm{Q}_\varepsilon$, the triangle inequality gives:
\begin{equation*}
  \frac{\norm{\bm{Z}_0(\bm{I}_d-\bm{Q}_\varepsilon)}_{\fro}}{\sqrt n} \le\frac{|\varepsilon|\norm{\bm{U}}_{\fro}}{\sqrt n} +\dist(\Theta_0+\varepsilon \bm{h},\Theta_0) \le2|\varepsilon|\norm{\bm{h}}_{n,m}.
\end{equation*}
The spectral bound gives $\norm{\bm{Q}_\varepsilon-\bm{I}_d}_{\fro}\le2|\varepsilon|\norm{\bm{h}}_{n,m}/\sqrt{\kappa_1}$. Hence $\bm{\Omega}_\varepsilon=(\bm{Q}_\varepsilon-\bm{I}_d)/\varepsilon$ is bounded, and $\bm{\Omega}_\varepsilon+\bm{\Omega}_\varepsilon^\top=-\varepsilon\bm{\Omega}_\varepsilon^\top\bm{\Omega}_\varepsilon$. Every subsequential limit is skew-symmetric, so substitution into the distance gives:
\begin{equation*}
  \liminf_{\varepsilon\to0}\frac{\dist^2(\Theta_0+\varepsilon \bm{h},\Theta_0)}{\varepsilon^2}\ge\min_{\bm{\Omega}^\top=-\bm{\Omega}}\norm{\bm{h}-(\bm{0},\bm{0},\bm{0},\bm{Z}_0\bm{\Omega},\bm{W}_0\bm{\Omega})}_{n,m}^2.
\end{equation*}
Conversely, each skew-symmetric $\bm{\Omega}$ gives the rotation $\exp(\varepsilon\bm{\Omega})=\bm{I}_d+\varepsilon\bm{\Omega}+O(\varepsilon^2)$ and the matching upper bound. Full column rank of $\bm{Z}_0$ makes the quadratic coercive in $\bm{\Omega}$, so the minimum exists. For $\bm{h}$ satisfying \eqref{eq:orthogonality}, $\bm{\Omega}=\bm{0}$ attains the minimum $\norm{\bm{h}}_{n,m}^2$.

\textit{Lower bound.} Apply Lemma~\ref{lem:recovery} and \eqref{eq:offdiag} to $\Theta_0+\varepsilon\bm{h}$ and $\Theta_0$. For any fixed $\bm{h}$ and sufficiently small $|\varepsilon|$, the row bounds can be enlarged to $3R_z,3R_w$, with constants independent of $\bm{h},n,m$. The observed log-mean differences are:
\begin{equation*}
  (\bm{D}_A)_{i\ell}=\varepsilon j_{A,i\ell}(\bm{h})+\varepsilon^2\bm{u}_i^\top\bm{u}_\ell,\qquad (\bm{D}_B)_{ik}=\varepsilon j_{B,ik}(\bm{h})+\varepsilon^2\bm{u}_i^\top\bm{v}_k.
\end{equation*}
For $\bm{h}$ satisfying \eqref{eq:orthogonality}, division by $\varepsilon^2$ and \eqref{eq:metricderivative} give:
\begin{equation*}
  \begin{aligned}
    \norm{\bm{h}}_{n,m}^2&=\lim_{\varepsilon\to0}\frac{\dist^2(\Theta_0+\varepsilon\bm{h},\Theta_0)}{\varepsilon^2}\\
    &\le C\lim_{\varepsilon\to0}\left\{\frac1{n^2}\sum_{i<\ell}\left(\frac{(\bm{D}_A)_{i\ell}}\varepsilon\right)^2+\frac1{nm}\sum_{i,k}\left(\frac{(\bm{D}_B)_{ik}}\varepsilon\right)^2\right\}\\
    &=C\left\{\frac1{n^2}\sum_{i<\ell}j_{A,i\ell}(\bm{h})^2+\frac1{nm}\sum_{i,k}j_{B,ik}(\bm{h})^2\right\}.
  \end{aligned}
\end{equation*}
This proves \eqref{eq:Jlower}.
\end{proof}

As in \eqref{eq:arbitrary-noise}, the diagonal of $\bm{\Theta}_A(\Theta)$ is set to zero. At checkpoint $t$, define residuals $\bm{E}_A^{(t)}(\Theta)=\bm{\Theta}_A(\Theta)-\bm{A}^{(t)}$ and $\bm{E}_B^{(t)}(\Theta)=\bm{\Theta}_B(\Theta)-\bm{B}^{(t)}$. Differentiating the joint objective block by block gives:
\begin{equation*}
  \begin{aligned}
     (\nabla_{\bm{\alpha}}\mathcal{L}^{(t)})_i&=\sum_{\ell\ne i}(\bm{E}_A^{(t)}(\Theta))_{i\ell},& \nabla_{\bm{\beta}}\mathcal{L}^{(t)}&=\bm{E}_B^{(t)}(\Theta)\one_m,& \nabla_{\bm{\gamma}}\mathcal{L}^{(t)}&=\{\bm{E}_B^{(t)}(\Theta)\}^\top\one_n,\\
     \nabla_{\bm{Z}}\mathcal{L}^{(t)}&=\bm{E}_A^{(t)}(\Theta)\bm{Z}+\bm{E}_B^{(t)}(\Theta)\bm{W},& \nabla_{\bm{W}}\mathcal{L}^{(t)}&=\{\bm{E}_B^{(t)}(\Theta)\}^\top \bm{Z}. 
  \end{aligned}
\end{equation*}
Each $\bm{A}$ edge appears once in the joint objective, and both endpoint derivatives enter the corresponding rows. On the centered parameter space with \eqref{eq:tangentnorm}, $\mathcal{L}^{(t)}/(nt)$ has gradient:
\begin{equation}
  \label{eq:gradientmetric} \nabla_{n,m}\{\mathcal{L}^{(t)}/(nt)\} =\bm{P}\left(\frac{\nabla_{\bm{\alpha}}\mathcal{L}^{(t)}}t, \frac{\nabla_{\bm{\beta}}\mathcal{L}^{(t)}}t, \frac{m\nabla_{\bm{\gamma}}\mathcal{L}^{(t)}}{nt}, \frac{\nabla_{\bm{Z}}\mathcal{L}^{(t)}}t, \frac{m\nabla_{\bm{W}}\mathcal{L}^{(t)}}{nt}\right).
\end{equation}
The block weights make $\bm{P}$ orthogonal and give, for centered $\bm{h}$:
\begin{equation*}
  \left\langle\nabla_{n,m}\{\mathcal{L}^{(t)}/(nt)\},\bm{h}\right\rangle_{n,m}=D\mathcal{L}^{(t)}(\Theta)[\bm{h}]/(nt).
\end{equation*}
Thus \eqref{eq:algorithm} is the gradient update in this weighted norm.

Let $\dot\eta_{A,i\ell},\dot\eta_{B,ik}$ be the first-order changes in \eqref{eq:jacobian} at the current $\bm{Z},\bm{W}$. Along $\Theta+s\bm{h}$, the log-means have second derivatives $2\bm{u}_i^\top \bm{u}_\ell$ and $2\bm{u}_i^\top \bm{v}_k$. Differentiating once more yields:
\begin{equation}
  \label{eq:hessianexact} \begin{aligned}
     D^2\{\mathcal{L}^{(t)}/(nt)\}(\Theta)[\bm{h},\bm{h}] ={}&\frac1{nt}\left\{\sum_{i<\ell}\theta_{i\ell}^{A}(\Theta)\dot\eta_{A,i\ell}^2 +\sum_{i,k}\theta_{ik}^{B}(\Theta)\dot\eta_{B,ik}^2\right\}\\
     &+\frac1{nt}\operatorname{tr}(\bm{U}^\top \bm{E}_A^{(t)}(\Theta)\bm{U}) +\frac2{nt}\operatorname{tr}(\bm{U}^\top \bm{E}_B^{(t)}(\Theta)\bm{V}). 
  \end{aligned}
\end{equation}
The mean-weighted terms and residual terms jointly determine local curvature.

\begin{lemma}\label{lem:localgeometry}
Suppose $n\ge n_0$. Let $\Theta_0$ be centered with fixed latent row bounds and $\lambda_{\min}(\bm{Z}_0^\top \bm{Z}_0/n)\ge\kappa_1>0$. Suppose its means satisfy $ct/n\le\theta_{i\ell}^{A}(\Theta_0)\le Ct/n$ and $ct/m\le\theta_{ik}^{B}(\Theta_0)\le Ct/m$. There are fixed $b_0,r_0,\mu,L>0$ such that the following residual condition suffices for the local bounds below:
\begin{equation}
  \label{eq:residualcertificate} \frac{\norm{\bm{\Theta}_A(\Theta_0)-\bm{A}^{(t)}}_{\op}}{t} +\sqrt{\frac mn}\frac{\norm{\bm{\Theta}_B(\Theta_0)-\bm{B}^{(t)}}_{\op}}{t} \le b_0.
\end{equation}
For every centered $\Theta$ with $\norm{\Theta-\Theta_0}_{n,m}\le r_0/\sqrt{n+m}$ and every centered perturbation $\bm{h}$:
\begin{equation}
  \label{eq:hessupper} \left|D^2\{\mathcal{L}^{(t)}/(nt)\}(\Theta)[\bm{h},\bm{h}]\right| \le L\norm{\bm{h}}_{n,m}^2.
\end{equation}
If $\bm{h}$ also satisfies \eqref{eq:orthogonality} at $\Theta_0$, then:
\begin{equation}
  \label{eq:hesslower} D^2\{\mathcal{L}^{(t)}/(nt)\}(\Theta)[\bm{h},\bm{h}] \ge\mu\norm{\bm{h}}_{n,m}^2.
\end{equation}
All constants are independent of $n,m,t,T$.
\end{lemma}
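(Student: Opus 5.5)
We work directly from the exact Hessian formula \eqref{eq:hessianexact} and bound its three pieces separately: the mean-weighted quadratic form, and the two residual traces. Fix $\Theta$ with $\norm{\Theta-\Theta_0}_{n,m}\le r_0/\sqrt{n+m}$. The plan has three steps.

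\emph{Step 1: the means at $\Theta$ stay comparable to those at $\Theta_0$.} Write $\Theta=\Theta_0+\bm{g}$, so that $\norm{\bm{g}}_{n,m}\le r_0/\sqrt{n+m}$. Each coordinate is then controlled by the normalized norm: $|a_i|\le\sqrt n\,\norm{\bm g}_{n,m}$, $\norm{\bm u_i}_2\le\sqrt n\,\norm{\bm g}_{n,m}$, and $|c_k|,\norm{\bm v_k}_2\le\sqrt m\,\norm{\bm g}_{n,m}$. Hence every entrywise log-mean change is at most $C\,(r_0+r_0^2)$. For $r_0$ small, this gives
\begin{equation*}
\theta^A_{i\ell}(\Theta)\asymp t/n,\qquad \theta^B_{ik}(\Theta)\asymp t/m,
\end{equation*}
with at most a factor $e^{\pm1/2}$ lost relative to $\Theta_0$. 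The row norms of $\bm Z,\bm W$ stay at most $3R_z,3R_w$.

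\emph{Step 2: upper bound \eqref{eq:hessupper}.} For the first term of \eqref{eq:hessianexact}, the mean bounds give
\begin{equation*}
\frac{1}{nt}\sum_{i<\ell}\theta^A_{i\ell}\dot\eta_{A,i\ell}^2\le \frac{C}{n^2}\sum_{i<\ell}\dot\eta_{A,i\ell}^2,
\end{equation*}
and similarly the $\bm B$ part is at most $\frac{C}{nm}\sum_{i,k}\dot\eta_{B,ik}^2$. The first term is therefore at most $C\norm{\bm h}_{n,m}^2$ by \eqref{eq:Jupper}. Lemma~\ref{lem:jacobian} evaluated at the current $\Theta$ only needs the row-norm bounds, not the rank condition, for its upper part.

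For the residual traces, split $\bm E_A^{(t)}(\Theta)=\{\bm\Theta_A(\Theta)-\bm\Theta_A(\Theta_0)\}+\{\bm\Theta_A(\Theta_0)-\bm A^{(t)}\}$. The certificate \eqref{eq:residualcertificate} bounds the operator norm of the second part by $b_0t$, so
\begin{equation*}
\frac{1}{nt}\bigl|\operatorname{tr}\bigl(\bm U^\top\{\bm\Theta_A(\Theta_0)-\bm A^{(t)}\}\bm U\bigr)\bigr|\le b_0\,\norm{\bm U}_{\fro}^2/n .
\end{equation*}
In the $\bm B$ trace, the factor $\sqrt{m/n}$ in the certificate gives
\begin{equation*}
\frac{2}{nt}\,\norm{\bm U}_{\fro}\norm{\bm V}_{\fro}\,b_0t\sqrt{n/m}\le 2b_0\,\norm{\bm U}_{\fro}\norm{\bm V}_{\fro}/\sqrt{nm}\le b_0\norm{\bm h}_{n,m}^2 .
\end{equation*}
For the mean-difference parts, each entry satisfies $|\theta(\Theta)-\theta(\Theta_0)|\le C(t/n)\,|\Delta\eta|$. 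We bound the operator norm by the Frobenius norm:
\begin{equation*}
\norm{\bm\Theta_A(\Theta)-\bm\Theta_A(\Theta_0)}_{\fro}\le C(t/n)\Bigl(\textstyle\sum\Delta\eta^2\Bigr)^{1/2}\le C(t/n)\cdot n\,\norm{\bm g}_{n,m}\le Ct\,r_0 .
\end{equation*}
The middle step uses the Lemma~\ref{lem:jacobian}-type estimate plus the quadratic remainder, and the last step uses $\norm{\bm g}_{n,m}\le r_0/\sqrt{n+m}$. The $\bm B$ block likewise gives $Ct\,r_0\sqrt{n/m}$. So these parts contribute $Cr_0\norm{\bm h}_{n,m}^2$, and we obtain $L$.

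\emph{Step 3: lower bound \eqref{eq:hesslower}, the main obstacle.} Here $\dot\eta$ is evaluated at $\Theta$, while orthogonality \eqref{eq:orthogonality} holds at $\Theta_0$. We write $\dot\eta=j(\bm h)+\rho$, where $j$ is the Jacobian at $\Theta_0$ and, for example, $\rho_{A,i\ell}=\bm u_i^\top(\bm z_\ell-\bm z_{0,\ell})+(\bm z_i-\bm z_{0,i})^\top\bm u_\ell$.

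The difficulty is that a sup-norm bound on $\bm z-\bm z_0$ is only $\sqrt n\,r_0/\sqrt{n+m}$, which is not small. Instead we bound $\sum\rho^2$ through operator norms: $\sum_{i,\ell}\bigl(\bm u_i^\top\bm\Delta_\ell\bigr)^2=\norm{\bm U\bm\Delta^\top}_{\fro}^2\le\norm{\bm U}_{\fro}^2\norm{\bm\Delta}_{\fro}^2$, with $\bm\Delta=\bm Z-\bm Z_0$ and $\norm{\bm\Delta}_{\fro}^2\le n r_0^2$. After dividing by $n^2$ this is $\le r_0^2\norm{\bm h}_{n,m}^2$. Analogous cross terms $\bm U(\bm W-\bm W_0)^\top$ and $(\bm Z-\bm Z_0)\bm V^\top$ appear in the $\bm B$ block, with the normalization $1/(nm)$ matching.

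We then combine three facts: the mean lower bounds from Step 1, \eqref{eq:Jlower} at $\Theta_0$ (which uses $\kappa_1$ and $n\ge n_0$), and $(x+y)^2\ge x^2/2-y^2$. Together these give a first term of at least $(c/2-Cr_0^2)\norm{\bm h}_{n,m}^2$. The residual terms from Step 2 cost at most $C(b_0+r_0)\norm{\bm h}_{n,m}^2$. Choosing $r_0$ and $b_0$ small enough then yields $\mu>0$.
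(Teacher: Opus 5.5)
Your proof is correct and follows the paper's argument essentially step by step: uniform comparability of the log-means on the local ball, a Frobenius-product bound on the change of the Jacobian (combined with the lower bound \eqref{eq:Jlower} at $\Theta_0$) for the mean-weighted term, and the certificate plus an operator-norm bound on the mean change for the residual traces. The only differences are cosmetic; for instance, you bound the $\bm A$-block mean change in Frobenius norm via the $\ell_2$ Jacobian estimate rather than by the paper's maximum-row-sum bound, and your remark in Step 3 that the sup-norm change of $\bm z_i$ ``is not small'' is inaccurate (it is at most $r_0$, as you use in Step 1), but this is harmless since your Frobenius route works either way.
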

\begin{proof}
Write $\bm{h}_0=\Theta-\Theta_0$ within this proof. The norm definition bounds each worker scalar or latent row change by $\sqrt n\norm{\bm{h}_0}_{n,m}$, and each task-type scalar or row change by $\sqrt m\norm{\bm{h}_0}_{n,m}$. All these changes are therefore at most $r_0$. Expanding the inner products around the bounded reference positions gives:
\begin{equation}
  \label{eq:loguniformlocal} \max_{i<\ell}|\log\{\theta_{i\ell}^{A}(\Theta)/\theta_{i\ell}^{A}(\Theta_0)\}| \vee\max_{i,k}|\log\{\theta_{ik}^{B}(\Theta)/\theta_{ik}^{B}(\Theta_0)\}| \le C\sqrt{n+m}\norm{\bm{h}_0}_{n,m}\le Cr_0.
\end{equation}
For example, the $\bm{A}$ inner-product change is $(\bm{z}_i-\bm{z}_{0,i})^\top \bm{z}_\ell+\bm{z}_{0,i}^\top(\bm{z}_\ell-\bm{z}_{0,\ell})$, where the current row $\bm{z}_\ell$ is bounded by the reference radius plus $r_0$. Thus the local means remain within fixed multiples of $t/n,t/m$.

\textit{Mean-weighted terms.} The first line of \eqref{eq:hessianexact} is comparable to $n^{-2}\sum_{i<\ell}\dot\eta_{A,i\ell}^2+(nm)^{-1}\sum_{i,k}\dot\eta_{B,ik}^2$. The derivative changes from the reference obey:
\begin{equation*}
  \begin{aligned}
    \frac{\norm{\bm{U}(\bm{Z}-\bm{Z}_0)^\top+(\bm{Z}-\bm{Z}_0)\bm{U}^\top}_{\fro}}{n}&\le2\frac{\norm{\bm{U}}_{\fro}}{\sqrt n}\frac{\norm{\bm{Z}-\bm{Z}_0}_{\fro}}{\sqrt n},\\
    \frac{\norm{\bm{U}(\bm{W}-\bm{W}_0)^\top+(\bm{Z}-\bm{Z}_0)\bm{V}^\top}_{\fro}}{\sqrt{nm}}&\le\frac{\norm{\bm{U}}_{\fro}}{\sqrt n}\frac{\norm{\bm{W}-\bm{W}_0}_{\fro}}{\sqrt m}+\frac{\norm{\bm{Z}-\bm{Z}_0}_{\fro}}{\sqrt n}\frac{\norm{\bm{V}}_{\fro}}{\sqrt m}.
  \end{aligned}
\end{equation*}
Their sum is at most $C\norm{\bm{h}_0}_{n,m}\norm{\bm{h}}_{n,m}$. Lemma~\ref{lem:jacobian} and the reverse triangle inequality therefore bound the mean-weighted terms below by $c_F\norm{\bm{h}}_{n,m}^2$ for $\bm{h}$ satisfying \eqref{eq:orthogonality} when $r_0$ is small. The upper bound \eqref{eq:Jupper}, with enlarged fixed radii, gives $C_F\norm{\bm{h}}_{n,m}^2$ for every $\bm{h}$.

\textit{Residual terms.} By $|e^x-1|\le e^{|x|}|x|$ and \eqref{eq:loguniformlocal}, entrywise mean changes are at most $C(t/n)\sqrt{n+m}\norm{\bm{h}_0}_{n,m}$ in $\bm{A}$ and $C(t/m)\sqrt{n+m}\norm{\bm{h}_0}_{n,m}$ in $\bm{B}$. The operator norm of a symmetric matrix is at most its maximum absolute row sum, and the operator norm is always at most the Frobenius norm. Consequently:
\begin{equation}
  \label{eq:residualchange} \frac{\norm{\bm{\Theta}_A(\Theta)-\bm{\Theta}_A(\Theta_0)}_{\op}}t +\sqrt{\frac mn}\frac{\norm{\bm{\Theta}_B(\Theta)-\bm{\Theta}_B(\Theta_0)}_{\op}}t \le C\sqrt{n+m}\norm{\bm{h}_0}_{n,m}\le Cr_0.
\end{equation}
Adding \eqref{eq:residualcertificate} bounds the same standardized residual sum at the current parameter by $b_0+Cr_0$. The operator-norm definition gives:
\begin{equation*}
  \begin{aligned}
    \frac{|\operatorname{tr}(\bm{U}^\top \bm{E}_A^{(t)}(\Theta)\bm{U})|}{nt} &\le\frac{\norm{\bm{E}_A^{(t)}(\Theta)}_{\op}}t\frac{\norm{\bm{U}}_{\fro}^2}{n},\\
    \frac{2|\operatorname{tr}(\bm{U}^\top \bm{E}_B^{(t)}(\Theta)\bm{V})|}{nt} &\le2\sqrt{\frac mn}\frac{\norm{\bm{E}_B^{(t)}(\Theta)}_{\op}}t \frac{\norm{\bm{U}}_{\fro}}{\sqrt n}\frac{\norm{\bm{V}}_{\fro}}{\sqrt m}.
  \end{aligned}
\end{equation*}
Using $2ab\le a^2+b^2$, the absolute residual contribution is at most $C(b_0+Cr_0)\norm{\bm{h}}_{n,m}^2$. Choose $b_0$, then $r_0$, sufficiently small that the residual contribution is at most $c_F\norm{\bm{h}}_{n,m}^2/2$. The lower bound under \eqref{eq:orthogonality} holds with $\mu=c_F/2$, and the full-space absolute upper bound holds with a fixed $L\ge C_F+c_F/2$. The constants depend only on the fixed mean bounds, radii, and $\kappa_1$.
\end{proof}

Self-adjointness and \eqref{eq:hessupper} bound the Hessian operator norm by $L$. Integration along segments gives an $L$-Lipschitz gradient on the local ball. Equation~\eqref{eq:hesslower} applies to directions satisfying \eqref{eq:orthogonality} at the reference.

\subsection{Proof of Theorem~\ref{thm:stat}}\label{sec:statproof}
\begin{proof}
We first derive the order of the cumulative number $S_t$ of work records. Each work record has total contribution at least one and, by Assumption~\ref{as:model}, at most $K$. Taking conditional expectations given $\FF_{r-1}$ and summing along each path through $r\le S_t$ gives:
\begin{equation*}
  S_t\le\sum_{i<\ell}\theta_{\star,i\ell}^{A,(t)} +\sum_{i,k}\theta_{\star,ik}^{B,(t)}\le KS_t.
\end{equation*}
The total $\bm{A}$ mean is half the sum of its row means, while the total $\bm{B}$ mean is the sum of its row means. By \eqref{eq:expectedmargins}, the middle quantity lies between $(3c_0/2)nt$ and $(3C_0/2)nt$. Hence $S_t\asymp nt$, pathwise, including random $S_t$.

Work on the time-uniform event of Lemma~\ref{lem:concentration}. At any checkpoint satisfying the information threshold in Theorem~\ref{thm:stat}, \eqref{eq:obsmargins} holds and the smallest marginal sum is sufficiently large. Lemma~\ref{lem:fitmeans} gives a finite exact minimizer. Apply Lemma~\ref{lem:deterministic} with $u=t$ and $\Theta'=\Theta_{\star}^{(t)}$. The matrix in \eqref{eq:arbitrary-noise} is then $\bm{M}_{S_t}$ from \eqref{eq:H}. Write $x_t=\log\{(2n+m)/\delta\}+\log\log(e^e+t)$ within this proof. Since $t\ge C(1+m/n)x_t$, $x_t\ge1$, and $\varepsilon_0$ is fixed, \eqref{eq:matrixnoise} and \eqref{eq:deterministic-bound} give:
\begin{equation*}
  \begin{aligned}
    \dist^2(\widehat{\Theta}^{(t)},\Theta_{\star}^{(t)})&\le C\frac{\norm{\bm{M}_{S_t}}_{\op}^2}{nt^2}+C\frac{\varepsilon_0}{nt}\\
    &\le C\frac{(n+m)x_t}{nt}\left\{1+\frac{x_t}{t}+\frac{\varepsilon_0}{(n+m)x_t}\right\}\le C\frac{(n+m)x_t}{nt}.
  \end{aligned}
\end{equation*}
The same deterministic bound controls \eqref{eq:lograte}. This proves \eqref{eq:statrate} on the common concentration event, without further failure probability. The matrix bound is uniform over low-rank directions and therefore allows history-dependent true parameters.

For the result over $\lceil aT\rceil\le t\le T$, use \eqref{eq:finiteM} and the corresponding marginal event. Write $x=\log\{(2n+m)/\delta\}$. For $t\ge\lceil aT\rceil$, Lemma~\ref{lem:deterministic} gives:
\begin{equation*}
  \dist^2(\widehat{\Theta}^{(t)},\Theta_{\star}^{(t)}) \le C_a\frac{n+m}{nT}x +C_a\frac{n+m}{nT^2}x^2 +C_a\frac{\varepsilon_0}{nT}.
\end{equation*}
The condition $T\ge C_a(1+m/n)x$ ensures the marginal bounds and lets the first term absorb the other two. Taking maxima on the same event proves the claimed bounds over these checkpoints. The log-mean bound follows from the same deterministic inequality.
\end{proof}

\emph{Larger cumulative activity.} Replace Assumption~\ref{as:scale} by worker row sums of order $(n+m)t$ and task-type column sums of order $n(n+m)t/m$:
\begin{equation}
  \label{eq:densemargins} \begin{aligned}
     & c_0(n+m)t\le\sum_{\ell\ne i}\theta_{\star,i\ell}^{A,(t)}\le C_0(n+m)t,\quad c_0(n+m)t\le\sum_k\theta_{\star,ik}^{B,(t)}\le C_0(n+m)t,\\
     &c_0n(n+m)t/m\le\sum_i\theta_{\star,ik}^{B,(t)}\le C_0n(n+m)t/m. 
  \end{aligned}
\end{equation}
For $n\ge n_0$, Lemma~\ref{lem:meansfrommargins} gives:
\begin{equation*}
  c\frac{(n+m)t}{n}\le\theta_{\star,i\ell}^{A,(t)}\le C\frac{(n+m)t}{n},\quad c\frac{(n+m)t}{m}\le\theta_{\star,ik}^{B,(t)}\le C\frac{(n+m)t}{m}.
\end{equation*}
Keep the feasible parameter set, model \eqref{eq:joint}, Assumption~\ref{as:model}, and the optimization-error bound $\varepsilon_0$ unchanged. Under Assumptions~\ref{as:model}, \ref{as:geom}, and these larger activity scales, $S_t\asymp n(n+m)t$. With probability at least $1-\delta$, every estimate satisfying \eqref{eq:opt}, at every checkpoint with $nt\ge C\{\log((2n+m)/\delta)+\log\log(e^e+t)\}$, obeys:
\begin{equation}
  \label{eq:denserate} \dist^2(\widehat{\Theta}^{(t)},\Theta_{\star}^{(t)}) \le\frac C{nt}\left\{\log\frac{2n+m}{\delta}+\log\log(e^e+t)\right\}.
\end{equation}
For fixed $a\in(0,1)$ and $nT\ge C_a\log\{(2n+m)/\delta\}$, the maximum normalized squared parameter error over $\lceil aT\rceil\le t\le T$ is at most $C_a\log\{(2n+m)/\delta\}/(nT)$ on one event of probability at least $1-\delta$.

To verify these claims, summing \eqref{eq:densemargins} gives $S_t\asymp n(n+m)t$. Write $\nu=n+m$ and $\ell_t=\log\{(2n+m)/\delta\}+\log\log(e^e+t)$ within this argument. The single-increment norm in \eqref{eq:H} remains at most $C\sqrt \nu$, while the variance calculation in Lemma~\ref{lem:concentration} replaces $t$ by $\nu t$. The left and right variance bounds become $C\nu^2t$ and $Cn\nu t$, hence the total bound is $C\nu^2t$. Partitioning checkpoints into $2^j\le t<2^{j+1}$ and applying Freedman gives $\norm{\bm{M}_{S_t}}_{\op}^2\le C\{\nu^2t\ell_t+\nu\ell_t^2\}$, with failure probabilities allocated by checkpoint index.

Worker margins now have scale $\nu t$, and task-type margins have scale $n\nu t/m$. Their minimum scale $\nu t\min\{1,n/m\}$ lies between $nt$ and $2nt$: use $n\le \nu\le2n$ when $m\le n$, and $m\le \nu\le2m$ when $m\ge n$. Thus $nt\ge C\ell_t$ ensures marginal concentration and finite minimizers. Applying Lemma~\ref{lem:deterministic} with $u=\nu t$ and $\Theta'=\Theta_{\star}^{(t)}$ yields:
\begin{equation*}
  \dist^2(\widehat{\Theta}^{(t)},\Theta_{\star}^{(t)}) \le C\frac{\norm{\bm{M}_{S_t}}_{\op}^2}{n\nu^2t^2} +C\frac{\varepsilon_0}{n\nu t} \le C\frac{\ell_t}{nt}+C\frac{\ell_t^2}{n\nu t^2} +C\frac{\varepsilon_0}{n\nu t}.
\end{equation*}
The ratio of the second term in the last bound to the first is $\ell_t/(\nu t)\le n/(C\nu)$. The fixed optimization error also lets the first term absorb the third. This proves \eqref{eq:denserate}. For fixed $T$, variance budget $C\nu^2T$ gives the matrix bound $C\{\nu^2T\log((2n+m)/\delta)+\nu\log^2((2n+m)/\delta)\}$. The marginal variance budgets are $C\nu T,Cn\nu T/m$. Apply the same deterministic calculation over $t\ge aT$. Finally, with $x=\log\{(2n+m)/\delta\}\ge1$, the function $\log\log(e^e+t)/t$ is bounded for $t\ge1$, so $\sup_{t\ge1}\ell_t/t\le Cx$. Thus $n\ge Cx$ ensures the corresponding information threshold for every $t\ge1$, and the supremum of the bound in \eqref{eq:denserate} is at most $Cx/n$. The maximum error over $\lceil aT\rceil\le t\le T$ has scale $x/(nT)$, corresponding to cumulative activity of order $n(n+m)T$. An optimization analysis under these larger activity scales requires an analogue of Assumption~\ref{as:opt} and parameter-block step sizes.

\subsection{Proof of Theorem~\ref{thm:optimization}}\label{sec:optproof}
\begin{proof}
Work on the time-uniform event of Lemma~\ref{lem:concentration}. We establish local geometry at exact minimizers, bound the displacement between successive minimizers, and then prove finite-step tracking. Write $x=\log\{(2n+m)/\delta\}$ and $\ell_t=x+\log\log(e^e+t)$.

\textit{Step 1: Initial information and empirical local geometry.} For any sufficiently small fixed $\rho>0$, increasing the dimension-independent constant in the initial threshold, that is, the lower bound on $t_0$ in Theorem~\ref{thm:optimization}, ensures that, for every $t_0\le t\le T$, all feasible estimates with global optimization error at most $\varepsilon_0$ satisfy:
\begin{equation}
  \label{eq:smallaverage} \dist(\widehat{\Theta}^{(t)},\Theta_{\star}^{(t)}) \le\frac{\rho}{\sqrt{n+m}}.
\end{equation}
If $\rho\le\min\{R_z,R_w\}/4$, Assumption~\ref{as:geom} then bounds their latent row norms by $5R_z/4,5R_w/4$, leaving margins of at least $3R_z/4,3R_w/4$ inside the radius constraints. Their normalized worker Gram matrices have minimum eigenvalue at least $\kappa/4$. For every exact minimizer, the left side of \eqref{eq:residualcertificate} can also be made at most $b_0/4$.

The function $\ell_t/t$ decreases for $t\ge1$, since $t\ell_t'=t/\{(e^e+t)\log(e^e+t)\}<1/e<\ell_t$. For $b_*>0$, set $\nu=n+m$, $\tilde\nu=\nu^2/n$, and $t_*=b_*\tilde\nu x$. Here $\nu\le\tilde\nu\le \nu^2$ and $x\ge\log \nu\ge1$, so $e^e+t_*\le(e^e+b_*)e^{3x}$. Consequently:
\begin{equation*}
  \sup_{t\ge t_*}\frac{p\ell_t}{t}=\frac{\ell_{t_*}}{b_*x}\le\frac{2+\log\{\log(e^e+b_*)+3\}}{b_*}\longrightarrow0\qquad(b_*\to\infty).
\end{equation*}
Choosing a fixed large $b_*$ controls $\tilde\nu\ell_t/t$ uniformly over $t\ge t_0\ge t_*$.

This choice also ensures the information threshold of Theorem~\ref{thm:stat}. Finite exact minimizers therefore exist, and all feasible estimates with the fixed optimization error bound satisfy $\dist^2(\widehat{\Theta}^{(t)},\Theta_{\star}^{(t)})\le C\nu\ell_t/(nt)\le\rho^2/\nu$, proving \eqref{eq:smallaverage}. After optimal common alignment, each individual worker or task-type parameter row has error at most $\sqrt \nu\dist(\widehat{\Theta}^{(t)},\Theta_{\star}^{(t)})\le\rho$. The true row bounds then prove the interior margins. Taking the infimum over unit vectors in $\norm{\bm{X}\bm{v}}_2\ge\norm{\bm{Y}\bm{v}}_2-\norm{(\bm{X}-\bm{Y})\bm{v}}_2$ gives the singular-value perturbation bound:
\begin{equation*}
  \frac{\sigma_d(\widehat{\bm{Z}}^{(t)})}{\sqrt n} \ge\sqrt\kappa-\frac{\norm{\widehat{\bm{Z}}^{(t)}-\bm{Z}_{\star}^{(t)}\bm{Q}}_{\op}}{\sqrt n} \ge\sqrt\kappa-\dist(\widehat{\Theta}^{(t)},\Theta_{\star}^{(t)}).
\end{equation*}
If also $\rho\le\sqrt\kappa/2$, the right side is at least $\sqrt\kappa/2$. Squaring proves the worker Gram bound.

To bound empirical residuals, the first and second blocks of \eqref{eq:H} are the true cumulative noise scaled by $\sqrt n/2$ and $\sqrt m$. Taking a block does not increase the operator norm, so:
\begin{equation}
  \label{eq:truecertificate} \frac{\norm{\bm{A}^{(t)}-\bm{\Theta}_A(\Theta_{\star}^{(t)})}_{\op}}t +\sqrt{\frac mn}\frac{\norm{\bm{B}^{(t)}-\bm{\Theta}_B(\Theta_{\star}^{(t)})}_{\op}}t \le\frac{3\norm{\bm{M}_{S_t}}_{\op}}{\sqrt n\,t} \le C\sqrt{\frac{\nu\ell_t}{nt}}.
\end{equation}
The last inequality uses \eqref{eq:matrixnoise} and $t\ge C\ell_t$ to absorb the linear tail term. Common alignment preserves means. Expanding at each entry and using the radius constraints gives:
\begin{equation*}
  \max_{i<\ell}\left|\log\frac{\theta_{i\ell}^{A}(\widehat{\Theta}_\star^{(t)})} {\theta_{\star,i\ell}^{A,(t)}}\right| \vee \max_{i,k}\left|\log\frac{\theta_{ik}^{B}(\widehat{\Theta}_\star^{(t)})} {\theta_{\star,ik}^{B,(t)}}\right| \le C\sqrt \nu\dist(\widehat{\Theta}_\star^{(t)},\Theta_{\star}^{(t)}).
\end{equation*}
The entrywise, row-sum, and Frobenius calculations used in \eqref{eq:residualchange} now give:
\begin{equation*}
  \frac{\norm{\bm{\Theta}_A(\widehat{\Theta}_\star^{(t)})-\bm{\Theta}_A(\Theta_{\star}^{(t)})}_{\op}}t +\sqrt{\frac mn}\frac{\norm{\bm{\Theta}_B(\widehat{\Theta}_\star^{(t)})-\bm{\Theta}_B(\Theta_{\star}^{(t)})}_{\op}}t \le C\sqrt \nu\dist(\widehat{\Theta}_\star^{(t)},\Theta_{\star}^{(t)}).
\end{equation*}
Adding \eqref{eq:truecertificate} bounds the empirical residual by $C\nu\sqrt{\ell_t/(nt)}$. Its square is $Cp\ell_t/t$, which the preceding threshold calculation makes arbitrarily small. In particular, the residual bound can be made at most $b_0/4$.

\textit{Step 2: Stationarity, uniqueness, and initialization error.} Increasing the fixed threshold constant if necessary, we show that every exact minimizer for $t\ge t_0$ is unique up to a common orthogonal transformation and satisfies:
\begin{equation}
  \label{eq:stationarity} \nabla_{n,m}\{\mathcal{L}^{(t)}/(nt)\}(\widehat{\Theta}_\star^{(t)})=\bm{0}.
\end{equation}
An initial estimate with optimization error at most $\varepsilon_0$ also satisfies:
\begin{equation}
  \label{eq:initialdistance} \dist^2(\widehat{\Theta}^{(t_0)},\widehat{\Theta}_\star^{(t_0)}) \le C\varepsilon_0/(nt_0).
\end{equation}
Step 1 gives positive margins inside the radius constraints at every exact minimizer. Thus, for any centered $\bm{h}$, the perturbation $\widehat{\Theta}_\star^{(t)}+s\bm{h}$ is feasible for sufficiently small positive and negative $s$. The joint objective along this line is minimized at $s=0$, so its derivative in $s$ vanishes there. Since the derivative vanishes for every centered $\bm{h}$, \eqref{eq:stationarity} follows.

Two exact minimizers are within $2\rho/\sqrt{n+m}$ by \eqref{eq:smallaverage}. Choose $\rho<r_0/4$, optimally align them, and write $\bm{h}=\widehat{\Theta}_{\star,2}^{(t)}-\widehat{\Theta}_{\star,1}^{(t)}$. Lemma~\ref{lem:alignment} gives \eqref{eq:orthogonality} for $\bm{h}$. Step 1 verifies Lemma~\ref{lem:localgeometry} along the connecting segment. Stationarity and equal minimum values give:
\begin{equation*}
  0=\frac{\mathcal{L}^{(t)}(\widehat{\Theta}_{\star,2}^{(t)})-\mathcal{L}^{(t)}(\widehat{\Theta}_{\star,1}^{(t)})}{nt}\ge\frac\mu2\norm{\bm{h}}_{n,m}^2.
\end{equation*}
Thus $\bm{h}=\bm{0}$, proving uniqueness up to a common orthogonal transformation.

The initial estimate and its exact minimizer both satisfy \eqref{eq:smallaverage}. Their optimally aligned difference lies in the same local ball and satisfies \eqref{eq:orthogonality} at the exact minimizer. Repeating the integral lower bound proves \eqref{eq:initialdistance}:
\begin{equation*}
  \frac\mu2\dist^2(\widehat{\Theta}^{(t_0)},\widehat{\Theta}_\star^{(t_0)}) \le\frac{\mathcal{L}^{(t_0)}(\widehat{\Theta}^{(t_0)}) -\mathcal{L}^{(t_0)}(\widehat{\Theta}_\star^{(t_0)})}{nt_0} \le\frac{\varepsilon_0}{nt_0}.
\end{equation*}

\textit{Step 3: Objective increments and a common local region.} At the same parameter tuple $\Theta$, subtracting successive objectives gives:
\begin{equation}
  \label{eq:lossincrement} \mathcal{L}^{(t)}(\Theta)-\mathcal{L}^{(t-1)}(\Theta) =-\sum_{i<\ell}\Delta A_{i\ell}^{(t)}\log\theta_{i\ell}^{A}(\Theta) -\sum_{i,k}\Delta B_{ik}^{(t)}\log\theta_{ik}^{B}(\Theta).
\end{equation}
For a parameter tuple with fixed latent row bounds, the normalized gradient change satisfies:
\begin{equation}
  \label{eq:incrementgradient} \begin{aligned}
     &\left\|\nabla_{n,m}\{\mathcal{L}^{(t)}/(nt)\}(\Theta) -\frac{t-1}{t}\nabla_{n,m}\{\mathcal{L}^{(t-1)}/(n(t-1))\}(\Theta)\right\|_{n,m}^2\\
     &\le\frac C{t^2}\left[\frac1n\sum_i \left\{\sum_{\ell\ne i}\Delta A_{i\ell}^{(t)}+\sum_k\Delta B_{ik}^{(t)}\right\}^2 +\frac m{n^2}\sum_k\left\{\sum_i\Delta B_{ik}^{(t)}\right\}^2\right]. 
  \end{aligned}
\end{equation}
The exponential terms cancel in \eqref{eq:lossincrement}. Up to the minus sign, its gradient has blocks:
\begin{equation*}
  \left(\Delta\bm{A}^{(t)}\one_n,\ \Delta\bm{B}^{(t)}\one_m,\ (\Delta\bm{B}^{(t)})^\top\one_n,\ \Delta\bm{A}^{(t)}\bm{Z}+\Delta\bm{B}^{(t)}\bm{W},\ (\Delta\bm{B}^{(t)})^\top\bm{Z}\right).
\end{equation*}
Write $r_i^{(t)}=\sum_{\ell\ne i}\Delta A_{i\ell}^{(t)}+\sum_k\Delta B_{ik}^{(t)}$ and $c_k^{(t)}=\sum_i\Delta B_{ik}^{(t)}$. The scalar blocks are bounded by $r_i^{(t)}$ and $c_k^{(t)}$, and nonnegativity gives:
\begin{equation*}
  \begin{aligned}
    \left\|\bigl(\Delta\bm{A}^{(t)}\bm{Z}+\Delta\bm{B}^{(t)}\bm{W}\bigr)_{i\cdot}\right\|_2&\le\sum_{\ell\ne i}\Delta A_{i\ell}^{(t)}\norm{\bm{z}_\ell}_2+\sum_k\Delta B_{ik}^{(t)}\norm{\bm{w}_k}_2\le Cr_i^{(t)},\\
    \left\|\bigl((\Delta\bm{B}^{(t)})^\top\bm{Z}\bigr)_{k\cdot}\right\|_2&\le\sum_i\Delta B_{ik}^{(t)}\norm{\bm{z}_i}_2\le Cc_k^{(t)}.
  \end{aligned}
\end{equation*}
The weights in \eqref{eq:gradientmetric} give $(1/n)t^{-2}=1/(nt^2)$ for worker blocks and $(1/m)\{m/(nt)\}^2=m/(n^2t^2)$ for task-type blocks. Projection cannot increase the norm, proving \eqref{eq:incrementgradient}. The factor $(t-1)/t$ puts both objectives over $nt$.

For $t>t_0$, we next use \eqref{eq:batch} and the initial threshold to show $\dist(\widehat{\Theta}_\star^{(t)},\widehat{\Theta}_\star^{(t-1)})\le r_0/(4\sqrt{n+m})$. We also verify Lemma~\ref{lem:localgeometry} for the new objective $\mathcal{L}^{(t)}/(nt)$ with the old minimizer as reference. With $\nu=n+m$, \eqref{eq:batch} gives $\sum_i(r_i^{(t)})^2\le K_1^2\nu$ and $\sum_k(c_k^{(t)})^2\le K_1^2n\nu/m$. By nonnegativity:
\begin{equation*}
  \norm{\Delta \bm{A}^{(t)}}_{\fro}^2 =2\sum_{i<\ell}(\Delta A_{i\ell}^{(t)})^2 \le\sum_i (r_i^{(t)})^2,\qquad \norm{\Delta \bm{B}^{(t)}}_{\fro}^2\le\sum_k (c_k^{(t)})^2.
\end{equation*}
The operator norm is at most the Frobenius norm, so:
\begin{equation}
  \label{eq:batchmatrix} \left\|\left[\frac{\sqrt n}{2}\Delta \bm{A}^{(t)},\sqrt m\Delta \bm{B}^{(t)}\right]\right\|_{\op}^2 \le \frac n4\sum_i (r_i^{(t)})^2+m\sum_k (c_k^{(t)})^2 \le Cn(n+m).
\end{equation}
Apply Lemma~\ref{lem:deterministic} to the new observations with $u=t-1$ and reference $\Theta_{\star}^{(t-1)}$. The marginal scales are comparable because $1\le t/(t-1)\le2$, and the residual matrix is $\bm{M}_{S_{t-1}}+[\frac{\sqrt n}{2}\Delta\bm{A}^{(t)},\sqrt m\Delta\bm{B}^{(t)}]$. The exact minimizer has zero optimization error. Equations~\eqref{eq:matrixnoise} and~\eqref{eq:batchmatrix} give:
\begin{equation}
  \label{eq:newoldtruth} \begin{aligned}
    \dist^2(\widehat{\Theta}_\star^{(t)},\Theta_{\star}^{(t-1)})&\le C\frac{\norm{\bm{M}_{S_{t-1}}}_{\op}^2+n\nu}{n(t-1)^2}\\
    &\le C\left\{\frac{\nu\ell_{t-1}}{n(t-1)}+\frac{\nu\ell_{t-1}^2}{n(t-1)^2}+\frac{\nu}{(t-1)^2}\right\}\\
    &=C\frac{\nu\ell_{t-1}}{n(t-1)}\left\{1+\frac{\ell_{t-1}}{t-1}+\frac{n}{(t-1)\ell_{t-1}}\right\}\le C\frac{\nu\ell_{t-1}}{n(t-1)}.
  \end{aligned}
\end{equation}
The last step uses $t-1\ge C\max\{n,\ell_{t-1}\}$ and $\ell_{t-1}\ge1$. The old minimizer satisfies the same bound relative to $\Theta_{\star}^{(t-1)}$. The triangle inequality and Step 1 give:
\begin{equation*}
  \begin{aligned}
    \dist(\widehat{\Theta}_\star^{(t)},\widehat{\Theta}_\star^{(t-1)})&\le\dist(\widehat{\Theta}_\star^{(t)},\Theta_{\star}^{(t-1)})+\dist(\widehat{\Theta}_\star^{(t-1)},\Theta_{\star}^{(t-1)})\\
    &\le C\sqrt{\frac{\nu\ell_{t-1}}{n(t-1)}}\le\frac{r_0}{4\sqrt\nu}.
  \end{aligned}
\end{equation*}

At the old minimizer, fitted means have scales $(t-1)/n,(t-1)/m$, comparable to the new scales. Step 1 gives the radii, interior margins, and spectral bound. The new residual is the old residual minus the checkpoint increment, so the left side of \eqref{eq:residualcertificate} is at most:
\begin{equation*}
  \frac{t-1}{t}\frac{b_0}{4} +\frac{\norm{\Delta \bm{A}^{(t)}}_{\fro}}{t} +\sqrt{\frac mn}\frac{\norm{\Delta \bm{B}^{(t)}}_{\fro}}{t} \le\frac{b_0}{4}+C\frac{\sqrt{n+m}}{t}.
\end{equation*}
The initial threshold also implies $t\ge C(n+m)\ge C\sqrt{n+m}$, making this bound at most $b_0$. Choose fixed $b_0,r_0,\mu,L$ covering both sets of mean comparison constants if needed. Lemma~\ref{lem:localgeometry} now applies to the new objective at the old reference.

\textit{Step 4: Displacement of adjacent minimizers.} Optimally align the old minimizer to the new one, retaining the notation, and set $\bm{h}=\widehat{\Theta}_\star^{(t)}-\widehat{\Theta}_\star^{(t-1)}$. By Lemma~\ref{lem:alignment}, $\bm{h}$ satisfies \eqref{eq:orthogonality} at the old minimizer and its norm equals their distance. Step 3 puts the connecting segment in the new objective's local ball at the old reference. Integrating the gradient difference gives:
\begin{equation*}
  \begin{aligned}
     &\left\langle \nabla_{n,m}\{\mathcal{L}^{(t)}/(nt)\}(\widehat{\Theta}_\star^{(t)}) -\nabla_{n,m}\{\mathcal{L}^{(t)}/(nt)\}(\widehat{\Theta}_\star^{(t-1)}),\bm{h} \right\rangle_{n,m}\\
     &=\int_0^1 D^2\{\mathcal{L}^{(t)}/(nt)\} (\widehat{\Theta}_\star^{(t-1)}+s\bm{h})[\bm{h},\bm{h}]\,ds \ge\mu\norm{\bm{h}}_{n,m}^2. 
  \end{aligned}
\end{equation*}
The new and old minimizers have zero gradient for the new and old objectives, respectively. Combining the zero gradient of the old minimizer with \eqref{eq:incrementgradient}, Cauchy--Schwarz yields:
\begin{equation*}
  \mu\norm{\bm{h}}_{n,m}^2 \le\left\|\nabla_{n,m}\{\mathcal{L}^{(t)}/(nt)\} (\widehat{\Theta}_\star^{(t-1)})\right\|_{n,m}\norm{\bm{h}}_{n,m}.
\end{equation*}
The bound is immediate if $\norm{\bm{h}}_{n,m}=0$. Otherwise, divide the preceding inequality by $\norm{\bm{h}}_{n,m}$ and square. Applying \eqref{eq:incrementgradient} and then Assumption~\ref{as:opt} gives, for every $t_0<t\le T$:
\begin{equation}
  \label{eq:driftmain} 
  \medmuskip=1mu\thickmuskip=2mu\thinmuskip=2mu \dist^2\!(\widehat{\Theta}_\star^{(t)},\widehat{\Theta}_\star^{(t-1)})\le\tfrac C{t^2}\Biggl[\tfrac1n\sum_i\!\left\{\sum_{\ell\ne i}\!\Delta A_{i\ell}^{(t)}+\sum_k\!\Delta B_{ik}^{(t)}\right\}^2+\tfrac m{n^2}\sum_k\!\left\{\sum_i\!\Delta B_{ik}^{(t)}\right\}^2\Biggr]\le\tfrac{C(n+m)}{nt^2}.
\end{equation}

\textit{Step 5: Gradient contraction and the inner iteration count.} In the local region from Steps 1--2, use the current exact minimizer as reference. For a centered parameter $\Theta$ with $\dist(\Theta,\widehat{\Theta}_\star^{(t)})\le r_0/\sqrt{n+m}$, we show that one update in \eqref{eq:algorithm} contracts distance by a fixed $q_0\in(0,1)$. Choose $r_0$ below the fixed boundary margin as well.

Optimally align $\Theta$ to $\widehat{\Theta}_\star^{(t)}$, and write $\bm{h}$ for the resulting difference and $\bm{g}^{(t)}=\nabla_{n,m}\{\mathcal{L}^{(t)}/(nt)\}(\Theta)$. The difference $\bm{h}$ satisfies \eqref{eq:orthogonality} at $\widehat{\Theta}_\star^{(t)}$. A common orthogonal transformation preserves the joint objective and the local conditions. By stationarity, integration of \eqref{eq:hesslower} and \eqref{eq:hessupper} gives $\langle \bm{g}^{(t)},\bm{h}\rangle_{n,m}\ge\mu\norm{\bm{h}}_{n,m}^2$ and $\norm{\bm{g}^{(t)}}_{n,m}\le L\norm{\bm{h}}_{n,m}$. Since $\bm{P}$ is included in the gradient, the update is $\Theta-\eta \bm{g}^{(t)}$, and:
\begin{equation*}
  \begin{aligned}
    \dist^2(\Theta-\eta \bm{g}^{(t)},\widehat{\Theta}_\star^{(t)})&\le\norm{\bm{h}-\eta \bm{g}^{(t)}}_{n,m}^2\\
    &\le(1-2\eta\mu+\eta^2L^2)\norm{\bm{h}}_{n,m}^2=q_0^2\norm{\bm{h}}_{n,m}^2,
  \end{aligned}
\end{equation*}
where $\eta=\mu/(2L^2)$ and $q_0=\{1-3\mu^2/(4L^2)\}^{1/2}\in(0,1)$.

After contraction, distance remains below $r_0/\sqrt{n+m}$. Under some optimal orthogonal matrix, every latent row changes by at most $r_0$. The reference lies at least $3R_z/4,3R_w/4$ inside the two boundaries. Choosing $r_0\le\min\{R_z,R_w\}/4$ leaves margins of at least $R_z/2,R_w/2$. The update is therefore feasible. Choose a fixed integer $J$ with $q=q_0^J\le1/2$, for example $J\ge\log(1/2)/\log q_0$. All of $\mu,L,q_0,J,q$ depend only on fixed model constants.

\textit{Step 6: Remaining in the local region.} Write $e_t=\dist(\widehat{\Theta}^{(t)},\widehat{\Theta}_\star^{(t)})$ in this proof and subsequent references. Step 1 and a sufficiently large threshold constant ensure $e_{t_0}\le r_0/(4\sqrt{n+m})$. Bound~\eqref{eq:driftmain} gives adjacent-minimizer distance at most $C\sqrt{(n+m)/n}/t$. The threshold also implies $t_0\ge C(n+m)/\sqrt n$, so this distance can be made at most $r_0/(4\sqrt{n+m})$ by increasing a dimension-independent constant.

Suppose $e_{t-1}\le r_0/(4\sqrt{n+m})$. The previous output initializes the new checkpoint, and the triangle inequality places it within $r_0/(2\sqrt{n+m})$ of the new exact minimizer. Every inner step contracts by Step 5 and stays local. After $J$ steps, the factor is at most $q\le1/2$, restoring $e_t\le r_0/(4\sqrt{n+m})$. Induction proves that every checkpoint output and inner iterate stays in the local region and inside the radius constraints.

\textit{Step 7: Tracking recursion.} Iterating $e_t\le qe_{t-1}+Cq\sqrt{(n+m)/n}/t$ gives:
\begin{equation*}
  e_t\le q^{t-t_0}e_{t_0}+C\sqrt{\frac{n+m}{n}}\sum_{s=t_0+1}^t\frac{q^{t-s}}s\le q^{t-t_0}e_{t_0}+\frac{C\sqrt{(n+m)/n}}{t(1-q)^2}.
\end{equation*}
The second inequality uses $\tau=t-s$, $t/s\le \tau+1$, and $\sum_{\tau\ge0}(\tau+1)q^\tau=(1-q)^{-2}$. Squaring with $(a+b)^2\le2a^2+2b^2$ and applying \eqref{eq:initialdistance} proves \eqref{eq:optrate}.

\textit{Step 8: Subsequent global optimization errors.} Align the computed estimate to the current exact minimizer. Their segment lies in the local region, where the reference gradient is zero. Integrating the Hessian upper bound \eqref{eq:hessupper} gives:
\begin{equation}
  \label{eq:gapquadratic} 0\le\mathcal{L}^{(t)}(\widehat{\Theta}^{(t)}) -\mathcal{L}^{(t)}(\widehat{\Theta}_\star^{(t)}) \le\frac{Lnt}{2}e_t^2.
\end{equation}
Substitute Step 7 and $e_{t_0}^2\le C\varepsilon_0/(nt_0)$ to obtain the first upper bound in \eqref{eq:gapderived}. For $\tau=t-t_0$:
\begin{equation*}
  \frac t{t_0}q^{2(t-t_0)}=(1+\tau/t_0)q^{2\tau} \le(1+\tau)q^{2\tau}\le\sum_{j=0}^{\infty}(1+j)q^{2j} =\frac1{(1-q^2)^2}.
\end{equation*}
The initial threshold also gives $(n+m)/t\le(n+m)/t_0\le C$, with a dimension-independent constant. Thus the global optimization error is at most $C(1+\varepsilon_0)$ uniformly. All steps are deterministic on the same statistical event, and \eqref{eq:batch} holds almost surely by Assumption~\ref{as:opt}. No additional failure probability is incurred.
\end{proof}

\emph{A sufficient condition for Assumption~\ref{as:opt}.} Let a fixed $K_*$ bound all checkpoint increments:
\begin{equation*}
  \max_i\left\{\sum_{\ell\ne i}\Delta A_{i\ell}^{(t)} +\sum_k\Delta B_{ik}^{(t)}\right\}\le K_*, \qquad \max_k\sum_i\Delta B_{ik}^{(t)}\le K_*(1+n/m).
\end{equation*}
Then \eqref{eq:batch} holds with $K_1=\sqrt2K_*$. Using $r_i^{(t)},c_k^{(t)}$ from Step 3, nonnegativity gives $\sum_k c_k^{(t)}=\sum_{i,k}\Delta B_{ik}^{(t)}\le\sum_i r_i^{(t)}\le nK_*$. Consequently:
\begin{equation*}
  \frac1n\sum_i (r_i^{(t)})^2\le K_*^2,\qquad \frac m{n^2}\sum_k (c_k^{(t)})^2 \le\frac m{n^2}(\max_k c_k^{(t)})\sum_k c_k^{(t)} \le K_*^2(1+m/n).
\end{equation*}
The sum is at most $2K_*^2(1+m/n)$, proving \eqref{eq:batch}. The converse fails: when $m=n$, a single $\bm{A}$ edge of weight $\lfloor\sqrt n\rfloor$ gives a weighted squared sum at most two, while its maximum row increment grows with dimension.

\emph{Local stopping error.} In the local region of Step 5, let $\bm{g}^{(t)}=\nabla_{n,m}\{\mathcal{L}^{(t)}/(nt)\}(\widehat{\Theta}^{(t)})$. Then:
\begin{equation*}
  \dist(\widehat{\Theta}^{(t)},\widehat{\Theta}_\star^{(t)})\le\norm{\bm{g}^{(t)}}_{n,m}/\mu, \qquad 0\le\mathcal{L}^{(t)}(\widehat{\Theta}^{(t)})-\mathcal{L}^{(t)}(\widehat{\Theta}_\star^{(t)}) \le\frac{Lnt}{2\mu^2}\norm{\bm{g}^{(t)}}_{n,m}^2.
\end{equation*}
For the aligned difference $\bm{h}$, Step 5 and Cauchy--Schwarz give:
\begin{equation*}
  \mu\norm{\bm{h}}_{n,m}^2\le\langle\bm{g}^{(t)},\bm{h}\rangle_{n,m}\le\norm{\bm{g}^{(t)}}_{n,m}\norm{\bm{h}}_{n,m}.
\end{equation*}
For $\bm{h}\ne\bm{0}$, divide by $\norm{\bm{h}}_{n,m}$ and apply \eqref{eq:gapquadratic}. For $\bm{h}=\bm{0}$, both bounds hold directly.

For the numerical normalization, let $n_{\mathrm{pair}}=n(n-1)/2+nm$. If $\|\bm{P}\nabla\{\mathcal{L}^{(t)}/n_{\mathrm{pair}}\}(\widehat{\Theta}^{(t)})\|_\infty\le\varepsilon$, the $n(d+2)$ worker and $m(d+1)$ task-type coordinates give:
\begin{equation*}
  \norm{\bm{g}^{(t)}}_{n,m}^2\le\frac{n_{\mathrm{pair}}^2\varepsilon^2}{t^2}\left\{d+2+(d+1)\frac{m^2}{n^2}\right\}.
\end{equation*}
Together with the preceding bounds, this converts the stopping rule in Supplementary~\ref{app:fit} into local distance and optimization errors.

\subsection{Proof of Corollary~\ref{cor:computed}}\label{sec:computedproof}
\begin{proof}
Allocate failure probability $\delta/2$ to Theorem~\ref{thm:optimization} and $\delta/2$ to the event of Lemma~\ref{lem:concentration} for $\lceil aT\rceil\le t\le T$, including \eqref{eq:finiteM}. Their intersection has probability at least $1-\delta$, without an independence requirement. Replacing $\log\{(2n+m)/\delta\}$ by $\log\{2(2n+m)/\delta\}$ adds only $\log2$, absorbed into dimension-independent constants. On this intersection, all computed estimates are feasible and have global optimization error at most $C(1+\varepsilon_0)$. The optimization event also gives the marginal bounds \eqref{eq:obsmargins} for $\lceil aT\rceil\le t\le T$. Write $x=\log\{(2n+m)/\delta\}$. Applying Lemma~\ref{lem:deterministic} with $u=t$, $\Theta'=\Theta_{\star}^{(t)}$, and the fixed error bound gives, by \eqref{eq:finiteM} and $t\ge aT$:
\begin{equation*}
  \begin{aligned}
    \max_{\lceil aT\rceil\le t\le T}\dist^2(\widehat{\Theta}^{(t)},\Theta_{\star}^{(t)})&\le C_a\left\{\frac{(n+m)(Tx+x^2)}{nT^2}+\frac{1+\varepsilon_0}{nT}\right\}\\
    &\le C_a\frac{(n+m)x}{nT}.
  \end{aligned}
\end{equation*}
The initial threshold gives $T\ge Cx$, and $x\ge1$ absorbs the fixed optimization error. The same argument controls \eqref{eq:lograte}, proving \eqref{eq:computedtail}.

For consistency, write $\nu=n+m$, $s_*=2n+m$, and let $b_*$ be the fixed threshold constant. For a fixed cutoff $m_*>4$, each fixed $\delta\in(0,1)$ eventually gives, along the stated sequence:
\begin{equation*}
  \frac{\nu\log s_*}{nT}\le\frac{\log s_*}{b_*\nu\log(s_*/\delta)}\le\begin{cases}(b_*m_*)^{-1},&\nu\ge m_*,\\ \log(2m_*)/\{b_*\log(1/\delta)\},&\nu<m_*.\end{cases}
\end{equation*}
Given $\varepsilon>0$, choose fixed $m_*>4$ with $(b_*m_*)^{-1}<\varepsilon$, then fixed $\delta$ making the second bound smaller than $\varepsilon$. Hence $\nu\log s_*/(nT)\to0$ for every such sequence.

The explicit sufficient condition $\{(n+m)^2\log(2n+m)\}/(nt_0)\to0$ ensures the initial threshold for every fixed $\delta$, since $\log(s_*/\delta)\le C_\delta\log s_*$ for a fixed $C_\delta$. It also gives directly:
\begin{equation*}
  \frac{\nu\log s_*}{n T} \le\frac1{\nu}\frac{\nu^2\log s_*}{n t_0} \longrightarrow0.
\end{equation*}
Together with $t_0\le\lceil aT\rceil$, the explicit condition is the sufficient condition stated after Corollary~\ref{cor:computed}. For every fixed $\delta$, $\log\{(2n+m)/\delta\}$ and $\log(2n+m)$ differ by at most a fixed factor, so \eqref{eq:computedtail} gives uniform consistency in probability over $\lceil aT\rceil\le t\le T$. If initialization succeeds with probability only $1-\delta_{\mathrm{init}}$, add its failure probability. When $\delta_{\mathrm{init}}\to0$, allocate half of any target failure probability to concentration. The initialization failure is eventually below the other half, yielding unconditional consistency.
\end{proof}

\section{Additional Experiment Details: LongBench v2 Multi-Doc QA}
\label{app:details}
\begingroup
\renewcommand{\topfraction}{.9}
\renewcommand{\bottomfraction}{.8}
\renewcommand{\textfraction}{.1}
\renewcommand{\floatpagefraction}{.8}

This section supplements the LongBench v2 Multi-Doc QA experiment. Section~\ref{app:design} describes the paired design, shared warm-up, per-session limits, and checkpoint schedule. Section~\ref{app:prompts} reproduces the worker prompts and shared-board interface. Section~\ref{app:classification} defines the task types assigned to published results and explains how these results contribute to the collaboration and participation networks. Section~\ref{app:outcomes} reports outcomes and participation across replicates, together with workers' recorded use of the statistical summaries. Section~\ref{app:fit} specifies the penalized objective, initialization, optimization procedure, and convergence diagnostics. Section~\ref{app:resources} reports token use, elapsed time, and estimation time. Section~\ref{app:evolution} explains the construction of the figures and documents the recorded choice illustrated in the main text.

\subsection{Paired Design}
\label{app:design}

Each replicate begins with 32 warm-up sessions, which follow the History-only rules. Copies of the resulting shared board, private memories, and activation state initialize the AgentCue and History-only branches, each with 224 further sessions. A trajectory therefore contains 256 effective sessions, and a replicate uses 480 actual sessions. The AgentCue branch runs first; branches and replicates evolve independently. All sessions complete normally. Means and sample standard deviations summarize three replicates on the same 125 questions.

The MAS has 32 persistent workers, with at most one concurrent session per worker and at most 32 concurrent sessions in total. Initial activations are spaced 20 seconds apart; later sessions replace completed ones asynchronously. Each session allows 32 actions on the shared board and 1,200 seconds, with private memory limited to 8,000 characters. The question-order seed is 20260930; activation seeds are 20261006--20261008 for R1--R3. Model sampling seeds are uncontrolled. All questions are accessible from the start through different initial question orders. Sources are divided into chunks of at most 1,400 words without removing content; answer keys remain withheld from workers.

Both methods prioritize uncovered questions, answer-critical gaps, and useful existing works. Workers may perform any task type, revise their focus, and contribute to several questions per session. A first designated answer must use and explain a relevant published result of another worker; publishing an answer and designating the answer are separate actions. After all questions have answers, workers may review and correct the answers. Workers in the AgentCue branch also use the statistical summaries to narrow local choices, followed by checks of the current needs. Workers can choose otherwise with a concrete reason; no fixed number of alternatives is required.

The first checkpoint at 32 completed effective sessions, which is common to both branches, and the final checkpoint at 256 follow completion of all active sessions. Intermediate checkpoints at 64--224 include published results of sessions in progress. Estimation at each intermediate checkpoint runs serially in the background while sessions continue. A session in the AgentCue branch reads the latest statistical summary with one action when the session first requests the statistical summary, keeps the same summary version, and computes subsequent local comparisons. The MAS estimates the parameters of the latent space model only in the AgentCue branch. The MAS copies the shared board into the two branches before the MAS publishes the statistical summary at the first checkpoint, so the shared board of the History-only branch never holds a statistical summary.

\subsection{Worker Prompts}
\label{app:prompts}

The boxes reproduce the prompt components used in R1--R3. Each session prompt contains the identity and phase of the worker, the phase instruction, the common policy, the shared-board interface, the execution instruction, and the private memory of the worker. In these quotations, \texttt{history\_summary} denotes AgentCue; \texttt{common\_warmup} and \texttt{history\_only} supply no statistical summary; ``function'' and ``type'' denote a task type, ``artifact'' denotes a published result, ``submit'' denotes designate, ``complete estimate'' and ``packet'' denote the statistical summary, and the statuses \texttt{used} and \texttt{considered\_not\_used} record whether the statistical summary informed a choice. Only line wrapping and pagination change.

\noindent\textbf{AgentCue: phase instruction.}
\begingroup
\linespread{1}\selectfont
\begin{tcolorbox}[enhanced,breakable,colback=white,colframe=black!45,
  arc=0pt,boxrule=.35pt,left=7pt,right=7pt,top=5pt,bottom=5pt,
  before skip=4pt,after skip=8pt,
  fontupper=\ttfamily\fontsize{9}{11.5}\selectfont,
  before upper={\raggedright\setlength{\parindent}{0pt}\setlength{\parskip}{5pt}}]
A latest complete fitted summary is published before this branch dispatches. Read the latest published complete packet once at your first statistical read,\allowbreak{} then reuse that immutable session version/\allowbreak{}token. New checkpoints fit and publish in the background without pausing collaboration; a later global version does not invalidate your already-read packet.
\end{tcolorbox}
\endgroup

\noindent\textbf{Warm-up and History only: phase instruction.}
\begingroup
\linespread{1}\selectfont
\begin{tcolorbox}[enhanced,breakable,colback=white,colframe=black!45,
  arc=0pt,boxrule=.35pt,left=7pt,right=7pt,top=5pt,bottom=5pt,
  before skip=4pt,after skip=8pt,
  fontupper=\ttfamily\fontsize{9}{11.5}\selectfont,
  before upper={\raggedright\setlength{\parindent}{0pt}\setlength{\parskip}{5pt}}]
No statistical summary exists or is queried in this phase. Do not call summary or perform fitted-parameter calculations; use current state,\allowbreak{} experience and actual artifacts.
\end{tcolorbox}
\endgroup

\noindent\textbf{Common worker policy.}
\begingroup
\linespread{1}\selectfont
\begin{tcolorbox}[enhanced,breakable,colback=white,colframe=black!45,
  arc=0pt,boxrule=.35pt,left=7pt,right=7pt,top=5pt,bottom=5pt,
  before skip=4pt,after skip=8pt,
  fontupper=\ttfamily\fontsize{9}{11.5}\selectfont,
  before upper={\raggedright\setlength{\parindent}{0pt}\setlength{\parskip}{5pt}}]
You and 31 peers jointly advance the fixed pool of 125 Multi-Doc QA questions. Every worker may answer any question and perform any of the four functions. Choose your own work,\allowbreak{} decomposition,\allowbreak{} collaborators and handoffs. Roles are not assigned,\allowbreak{} and questions do not require a fixed sequence of functions.
\par
Apply these priorities in order when choosing and continuing work:
\par
1. **Complete the question pool within the fixed effective-session budget.** Aim to bring all 125 questions to source-grounded submitted answers. Give priority to work that moves unsubmitted questions toward actual submission: open a useful scope on a question without published results,\allowbreak{} fill an answer-critical gap,\allowbreak{} or carry existing results forward into an answer. Choose among these needs using current inputs,\allowbreak{} active claims and your ability to contribute; there is no required order of functions. When inputs are adequate,\allowbreak{} synthesize and submit promptly; peer PASS is not a prerequisite. When they are inadequate,\allowbreak{} contribute the needed evidence,\allowbreak{} comparison,\allowbreak{} reasoning or targeted check and leave a usable handoff. You need not perform every function or finish a whole question yourself. While suitable unsubmitted needs remain,\allowbreak{} defer routine polishing and repeated checks of submitted answers. A concrete error,\allowbreak{} evidence conflict or substantive quality benefit may justify submitted-question work; explain the tradeoff in the existing choice reason. After all questions are submitted,\allowbreak{} use remaining authorized sessions for useful verification and correction.\newline 2. **Use familiar work and a revisable functional focus to serve that goal.** Among useful contributions to the first priority,\allowbreak{} prefer continuing work whose source familiarity,\allowbreak{} reusable results or current function reduces rereading,\allowbreak{} switching and repeated reasoning. Continue when relevant work is ready. Hand off or switch when an important uncovered need fits you better,\allowbreak{} the current input is unavailable,\allowbreak{} or a peer is better placed to continue. Familiarity and focus help choose how to contribute; they do not justify neglecting suitable unsubmitted needs. No worker has an exclusive or permanent function.\newline 3. **In history\_\allowbreak{}summary,\allowbreak{} use the current statistical summary first to narrow choices.** Within the first two priorities,\allowbreak{} use the fitted worker-function relations to identify suitable functions among actual pending tasks,\allowbreak{} and worker-worker relations to identify plausible collaborators. Once the summary and current need identify a suitable candidate,\allowbreak{} inspect that question/\allowbreak{}task and the relevant member\textquotesingle{}s current workload,\allowbreak{} then proceed using the required artifacts and sources. Avoid browsing unrelated histories,\allowbreak{} questions or member profiles,\allowbreak{} or expanding comparisons merely to fill a quota. If support is weak,\allowbreak{} candidates are tied,\allowbreak{} the task is new,\allowbreak{} or relevant state/\allowbreak{}experience contradicts the suggestion,\allowbreak{} inspect only the additional history needed and choose with a concrete reason. The summary guides your choice; it does not assign work or rank question difficulty. This third priority applies only in history\_\allowbreak{}summary.
\par
Use the supplied non-login shell; omit login or set login=false in exec\_\allowbreak{}command. Do not request login=true,\allowbreak{} start a login-shell wrapper,\allowbreak{} or change shell settings. Run only the two quoted-heredoc interfaces below.
\par
Avoid preventable board refusals. Copy exact question,\allowbreak{} task,\allowbreak{} attempt and chunk IDs from actual responses; chunk numbering need not start at 0000. Before proposing first\_\allowbreak{}answer,\allowbreak{} check current question\_\allowbreak{}status and prefer an unsubmitted question without an active answer integration. Reuse an existing suitable open task. Refresh question\_\allowbreak{}status immediately before CLAIM and pass that revision; concurrent changes may still require one targeted refresh. If another answer integration is active,\allowbreak{} choose useful complementary work or another question instead of repeatedly proposing/\allowbreak{}claiming competing syntheses. A concrete submitted-question review/\allowbreak{}correction remains allowed.
\par
Before finish,\allowbreak{} read every listed input artifact in this call,\allowbreak{} including your own earlier-session work; an overview,\allowbreak{} title or memory does not count. Copy short contiguous source quotations from the exact returned chunk and preserve punctuation and PDF hyphens. Whitespace and the seven Latin PDF ligatures (such as ﬁ/\allowbreak{}fi) are equivalent; wording,\allowbreak{} punctuation,\allowbreak{} case,\allowbreak{} hyphens and source identity remain exact. If a result is supported by an actually read peer artifact,\allowbreak{} cite that A-... ID in evidence\_\allowbreak{}refs and explain its contribution; do not copy its source quotes as if you had read those sources yourself. source\_\allowbreak{}quotes may be empty when the actual artifact inputs supply the evidence. If you directly quote a source,\allowbreak{} read/\allowbreak{}search that source in this call first.
\par
In history\_\allowbreak{}summary,\allowbreak{} prefer summary\_\allowbreak{}use with just status,\allowbreak{} version and a concrete interpretation. Put genuine numerical comparisons and fallback reasoning in interpretation; optional alternatives are descriptive text or valid worker/\allowbreak{}function or worker/\allowbreak{}task objects,\allowbreak{} not a required comparison quota. After a documented helper rejects input or a local calculation fails,\allowbreak{} use its error/\allowbreak{}recovery to correct the specified issue in this same session and original budget. This includes malformed JSON,\allowbreak{} wrong fields/\allowbreak{}identities,\allowbreak{} self-comparison,\allowbreak{} missing cache tokens and rejected board actions. Do not repeat the same rejected payload unchanged or claim that rejected work was published/\allowbreak{}submitted. Empty,\allowbreak{} incomplete or missing-ok tool displays also permit continuing; inspect current state before retrying an uncertain write.
\par
Types describe independently usable delivered results:
\par
- **extract:** a source-located evidence package or useful search result,\allowbreak{} with objects,\allowbreak{} locations,\allowbreak{} dates,\allowbreak{} units and applicable conditions.\newline - **compare:** an explicit alignment of objects,\allowbreak{} sources or periods on shared dimensions,\allowbreak{} with supported similarities,\allowbreak{} differences,\allowbreak{} conflicts,\allowbreak{} non-comparability or missing information.\newline - **inference:** a new calculated result,\allowbreak{} deduction,\allowbreak{} candidate conclusion or final answer. State inputs or premises,\allowbreak{} the calculation or reasoning,\allowbreak{} the conclusion and its conditions. For a final answer,\allowbreak{} explain how the inputs support the option criteria. This function includes substantive reasoning and answer synthesis.\newline - **review:** an independent check of another worker\textquotesingle{}s identified artifact and version,\allowbreak{} with checks,\allowbreak{} findings,\allowbreak{} evidence and a PASS/\allowbreak{}FAIL verdict. Evidence,\allowbreak{} comparisons,\allowbreak{} calculations and candidate conclusions may be checked before a question has a submitted answer.
\par
A work may deliver several independently usable results with distinct acceptance criteria and therefore have multiple types. Finding quotations for a comparison,\allowbreak{} citing facts for a deduction,\allowbreak{} recalculating during verification and routine self-checks are supporting steps unless they produce additional independent results. Multiple calculations or answer deductions within one work count once toward inference. Copying or renaming the same content does not create a new contribution.
\par
Start with the compact overview of ready work,\allowbreak{} active claims and published results. Its coverage\_\allowbreak{}opportunities shows separate factual examples of unsubmitted questions with and without published artifacts,\allowbreak{} including recent artifact IDs and active-work counts. Check both unmet-work and continuation needs; artifact presence alone does not establish sufficient evidence. In history\_\allowbreak{}summary,\allowbreak{} read the current complete summary once next,\allowbreak{} before selecting work or collaborators. Then use targeted question/\allowbreak{}task/\allowbreak{}member reads for the candidate you choose; a broad snapshot is optional. Before you have relevant experience,\allowbreak{} explore a useful available scope and,\allowbreak{} where relevant,\allowbreak{} use the summary\textquotesingle{}s activity support. You may choose randomly among similarly suitable alternatives. Register a specific scope and expected result before extensive source reading,\allowbreak{} and recheck the latest claims.
\par
Retain reusable search strategies,\allowbreak{} comparison frameworks,\allowbreak{} calculation methods,\allowbreak{} option criteria and verification checklists in private memory. Update your working focus from actual experience under the priorities above. Consider peers\textquotesingle{} published results and workload when choosing who should continue; do not wait for unavailable inputs or duplicate an active claim. Multiple genuine functional results may occur in one work or session.
\par
For an unsubmitted question with published artifacts,\allowbreak{} inspect the relevant exact versions and decide whether to fill a specific gap or synthesize the answer; do not repeat completed work. A targeted pre-submission check may resolve an answer-critical risk. Review the exact version and state which result it enables. Do not repeat checks just to obtain peer PASS or balance type frequencies.
\par
Publishing a synthesis with finish does not submit the answer. After publishing an eligible answer for an unsubmitted question,\allowbreak{} complete designate\_\allowbreak{}final before moving on and check its designated=true response. That successful response confirms submission; a further state read is needed only if the result is unclear or a conflict requires it. You may also designate an eligible peer synthesis after actually reading its exact version. Do not wait for a peer review when the submission requirements are already met. If validation or a concurrent change prevents submission,\allowbreak{} inspect the relevant response/\allowbreak{}state and fix the specific issue within budget,\allowbreak{} or leave the exact answer/\allowbreak{}artifact IDs,\allowbreak{} blocking issue and next required action for continuation; do not claim it was submitted.
\par
Publish useful intermediate results with source locations,\allowbreak{} actual input artifact IDs and versions,\allowbreak{} boundaries and limitations. In the same result,\allowbreak{} identify the question,\allowbreak{} what the result enables,\allowbreak{} the specific remaining gap and next work with its required input IDs. This lets another worker continue without reconstructing your search. Keep a handoff concise; a separate invitation or duplicated task is not required for every result. One work may use results from several peers. An invitation,\allowbreak{} a read or a decorative citation is not completed cooperation.
\par
During common\_\allowbreak{}warmup and history\_\allowbreak{}only,\allowbreak{} do not query statistical summaries or perform fitted calculations. During history\_\allowbreak{}summary,\allowbreak{} a current complete estimate is always published before your session starts; read its complete representation once for this session/\allowbreak{}version,\allowbreak{} then compare useful functions or possible collaborators locally. It contains all 32 worker positions Z,\allowbreak{} four function positions W,\allowbreak{} node effects alpha/\allowbreak{}beta/\allowbreak{}gamma,\allowbreak{} checkpoint metadata and activity support. Reuse it for changed candidates,\allowbreak{} new tasks and refreshed question state. After common c32,\allowbreak{} collaboration continues while statistical snapshots are fitted in the background. Your first full summary read receives the latest published complete version; cache that immutable version for this whole session. A newer global publication does not invalidate your cached version or its CLAIM/\allowbreak{}INVITE reasons. Refresh current task,\allowbreak{} question and artifact state as needed. Zero activity is missing record support,\allowbreak{} not lack of ability. The statistical service supplies numerical representation; you choose work and collaborators.
\par
Use the cached parameters for internal comparison or pass the returned local\_\allowbreak{}cache token to the pure local\_\allowbreak{}compute.py interface,\allowbreak{} which costs zero board actions. Its only cache read is this call\textquotesingle{}s already obtained immutable parameters; no need to copy 140 numbers or fetch live state. For worker/\allowbreak{}function i,\allowbreak{}k compute z\_\allowbreak{}i dot w\_\allowbreak{}k and exp(beta\_\allowbreak{}i+gamma\_\allowbreak{}k+z\_\allowbreak{}i dot w\_\allowbreak{}k); for different workers i,\allowbreak{}j compute z\_\allowbreak{}i dot z\_\allowbreak{}j and exp(alpha\_\allowbreak{}i+alpha\_\allowbreak{}j+z\_\allowbreak{}i dot z\_\allowbreak{}j). These are fitted cumulative counts,\allowbreak{} not success probabilities or question difficulty. You do not need a candidate\_\allowbreak{}summary request or returned task/\allowbreak{}worker alternatives to claim or invite. After statistical comparison,\allowbreak{} inspect current question/\allowbreak{}task state and member workload as needed. Before CLAIM refresh question\_\allowbreak{}status; current revision and conflicts still apply. Record used/\allowbreak{}considered\_\allowbreak{}not\_\allowbreak{}used/\allowbreak{}single\_\allowbreak{}option with the complete version and concrete reasoning. Weak support,\allowbreak{} ties,\allowbreak{} source experience or a new task can justify considered\_\allowbreak{}not\_\allowbreak{}used; unavailable means no fitted summary exists in this condition. Alternatives and adopted values are optional,\allowbreak{} not a comparison quota.
\par
Every response shows remaining session actions and the team session budget. Choose a scope you can publish within that budget and reserve actions for actual input reads,\allowbreak{} finish and,\allowbreak{} for an eligible answer,\allowbreak{} designate\_\allowbreak{}final. As the team budget shrinks,\allowbreak{} favor feasible completion of pending answers and precise answer-critical gaps; avoid broad work whose useful result cannot be delivered in the remaining budget. If this session cannot finish an answer,\allowbreak{} publish actual reusable partial findings and an explicit handoff before ending. Do not spend the end of a session repeatedly comparing candidates. Submission priority does not permit fabricated evidence,\allowbreak{} unsupported guesses or relaxing publication validation.
\par
In common\_\allowbreak{}warmup and history\_\allowbreak{}only,\allowbreak{} choose using current work state,\allowbreak{} accumulated experience and relevant history. In history\_\allowbreak{}summary,\allowbreak{} missing or invalid current parameters are a fault to report,\allowbreak{} not permission to continue without them. Workers who have not activated or delivered results lack statistical support; zero activity does not imply lack of ability. Retain autonomous exploration and targeted history fallback for weakly supported candidates. Expected cumulative participation and cooperation counts describe work history and relationships,\allowbreak{} not success rates. Statistical selection never replaces reading the chosen question,\allowbreak{} needed sources and actual input artifacts.
\par
A first final answer must use at least one relevant published result from another worker and explain its substantive contribution. Peer PASS is not required before submission. All 125 questions and source collections are accessible from the start. Ground answers in the provided collections. Do not use outside knowledge as a substitute for evidence. Only the frozen board interface and supplied-data pure calculation interface are permitted. Local calculation cannot fetch new state or replace source/\allowbreak{}artifact reads.\newline Construct each request as a JavaScript object in the code-mode host and serialize it with \textasciigrave{}JSON.stringify(payload)\textasciigrave{} when building the documented quoted heredoc command. Source quotes,\allowbreak{} newlines and quotation marks must be serialized rather than manually escaped. A malformed JSON/\allowbreak{}object/\allowbreak{}action request at the fixed board entry is an ordinary rejection and consumes one of your 32 board actions. Correct it in this same session using the returned reason; it does not publish an artifact. At the action limit,\allowbreak{} finish without further board calls. At local\_\allowbreak{}compute.py,\allowbreak{} rejected inputs cost zero board actions and return an ordinary error plus recovery instructions. A cooperation comparison must have different worker and peer IDs: explicitly remove your own worker ID from candidate peers. Use B for your own worker/\allowbreak{}function comparisons. Correct a rejected comparison locally and continue useful work in the original session.
\end{tcolorbox}
\endgroup

\noindent\textbf{Shared-board interface.}
\begingroup
\linespread{1}\selectfont
\begin{tcolorbox}[enhanced,breakable,colback=white,colframe=black!45,
  arc=0pt,boxrule=.35pt,left=7pt,right=7pt,top=5pt,bottom=5pt,
  before skip=4pt,after skip=8pt,
  fontupper=\ttfamily\fontsize{9}{11.5}\selectfont,
  before upper={\raggedright\setlength{\parindent}{0pt}\setlength{\parskip}{5pt}}]
\#\# Board interface
\par
Shared-board requests must have exactly this quoted-heredoc form:
\par
\textasciigrave{}\textasciigrave{}\textasciigrave{}sh\newline python3 board.py \textless{}\textless{}\textquotesingle{}JSON\textquotesingle{}\newline \{"action":"snapshot"\}\newline JSON\newline \textasciigrave{}\textasciigrave{}\textasciigrave{}
\par
The only other allowed command is the pure local\_\allowbreak{}compute.py quoted-JSON interface described below. Do not inspect files,\allowbreak{} execute arbitrary code,\allowbreak{} use external sources,\allowbreak{} network tools or agents. Your session identity is bound by the controller.\newline Each session permits 32 board actions including invalid actions,\allowbreak{} and 1200 seconds.\newline The transport uses non-login shells. Omit login or set login=false; do not set login=true or wrap a command in a login shell. Failed requests still cost their actual board action. Ordinary rejections return failure\_\allowbreak{}class="ordinary" and,\allowbreak{} where applicable,\allowbreak{} error\_\allowbreak{}code/\allowbreak{}recovery with the exact input/\allowbreak{}chunk or next read to correct. They do not automatically execute that read,\allowbreak{} fix a payload or select work. Technical/\allowbreak{}protocol failures stop new dispatch.\newline Every response includes the remaining team session budget. No new slot available\newline means existing sessions may finish; do not assume another session will follow.\newline shared\_\allowbreak{}context.session\_\allowbreak{}action\_\allowbreak{}budget also gives this session\textquotesingle{}s maximum,\allowbreak{} used and\newline remaining actions. Keep actions for actual input reads,\allowbreak{} finish and designate\_\allowbreak{}final.
\par
Start with overview; in history\_\allowbreak{}summary read summary once next,\allowbreak{} then request the relevant question\_\allowbreak{}status/\allowbreak{}task\_\allowbreak{}status/\allowbreak{}member\_\allowbreak{}profile and actual inputs for the selected work. A broad snapshot is optional. If the summary and current state identify a suitable candidate,\allowbreak{} proceed with targeted reads instead of browsing unrelated board history. In common\_\allowbreak{}warmup and history\_\allowbreak{}only,\allowbreak{} skip summary and fitted calculations. All 125 questions are open from the beginning. The initial\newline question page is rotated by worker ID,\allowbreak{} not assigned. Use questions with offset\newline and next\_\allowbreak{}offset to browse all items; optional query/\allowbreak{}sub\_\allowbreak{}domain/\allowbreak{}status filters.\newline Choose any question. Each question has its own source collection.\newline The common prompt prioritizes advancing unsubmitted questions to submission. Default question pages and functional ready-work examples sort unsubmitted questions first,\allowbreak{} without hiding submitted work. The overview also exposes submitted ready/\allowbreak{}review examples. You may select concrete submitted-question review/\allowbreak{}correction with a reason explaining its value relative to uncovered needs. No worker,\allowbreak{} function or question is assigned; artifact presence does not certify sufficient evidence.
\par
Read actions:\newline - overview: compact ready-work counts/\allowbreak{}examples,\allowbreak{} active claims,\allowbreak{} unreviewed peer results and revisable working focuses. coverage\_\allowbreak{}opportunities separately lists unsubmitted questions with and without published artifacts (counts and at most six rotated examples each),\allowbreak{} with recent artifact IDs/\allowbreak{}versions/\allowbreak{}types,\allowbreak{} published-synthesis counts and active-work/\allowbreak{}integration counts. Published results remain visible even after their task closes or is reviewed. This is included in the same overview action. Examples are factual leads,\allowbreak{} not judgments that an answer is ready or work assignments. Read question\_\allowbreak{}status for complete exact versions and current claims; use targeted questions when the question text is needed.\newline - snapshot: optional question/\allowbreak{}project/\allowbreak{}offset.\newline - summary: one action returns the entire available 32-worker/\allowbreak{}four-function,\allowbreak{} two-dimensional representation: parameters \{version,\allowbreak{}dimension,\allowbreak{}worker\_\allowbreak{}ids,\allowbreak{}type\_\allowbreak{}keys,\allowbreak{}alpha,\allowbreak{}beta,\allowbreak{}gamma,\allowbreak{}Z,\allowbreak{}W\},\allowbreak{} 140 numbers and 36 latent vectors,\allowbreak{} plus checkpoint/\allowbreak{}event cutoff,\allowbreak{} cumulative record interval,\allowbreak{} worker support and an own-session immutable local\_\allowbreak{}cache token. No derived full matrix is sent. During common\_\allowbreak{}warmup and history\_\allowbreak{}only,\allowbreak{} do not call summary: these phases expose no fitted parameters or token and summary queries are rejected. In history\_\allowbreak{}summary joint-c32 is published before the branch starts; later complete estimates publish asynchronously while collaboration continues. A missing,\allowbreak{} corrupt or inconsistent published packet is a stop condition,\allowbreak{} not unavailable fallback. A newer checkpoint still being fitted is normal; the latest published complete packet remains available. No automatic parameter delivery occurs in other responses.\newline   Cache the complete packet in this session. Same-version repeat requests still cost one board action and return only a cache reminder; changing candidate,\allowbreak{} task or question revision never requires another statistical read. The first actual full read pins the latest published version for this whole session. Background publication can advance latest\_\allowbreak{}summary\_\allowbreak{}version; session\_\allowbreak{}summary\_\allowbreak{}version identifies your already read packet. CLAIM/\allowbreak{}INVITE accept that exact actually-read version,\allowbreak{} and local\_\allowbreak{}compute reuses its immutable token. A new session must perform its own first full read even if private memory contains an old packet.\newline - question\_\allowbreak{}status: question. Returns exact final,\allowbreak{} artifacts,\allowbreak{} reviews,\allowbreak{} active work,\allowbreak{} stage and revision; one state-read action,\allowbreak{} no statistical re-fetch.\newline   active\_\allowbreak{}answer\_\allowbreak{}integrations and first\_\allowbreak{}answer\_\allowbreak{}claim\_\allowbreak{}possible\_\allowbreak{}now report the current atomic conflict conditions,\allowbreak{} not evidence sufficiency. Overview ready-work examples also include claimable\_\allowbreak{}now and blocking\_\allowbreak{}answer\_\allowbreak{}integration\_\allowbreak{}ids,\allowbreak{} with blocked work still visible. These facts may change concurrently. Before proposing first\_\allowbreak{}answer,\allowbreak{} inspect this question status; before CLAIM refresh its revision. If integration is active,\allowbreak{} use complementary useful work or another current need instead of duplicate synthesis claims.\newline - task\_\allowbreak{}status: task\_\allowbreak{}id. Current task specification/\allowbreak{}state/\allowbreak{}owner; no statistics.\newline - member\_\allowbreak{}profile: worker. Current member workload/\allowbreak{}history and focus; no fitted parameters.\newline - questions: optional offset/\allowbreak{}question/\allowbreak{}query/\allowbreak{}sub\_\allowbreak{}domain/\allowbreak{}status; status is\newline   unsubmitted,\allowbreak{} awaiting\_\allowbreak{}PASS or peer\_\allowbreak{}PASS.\newline - sources: question,\allowbreak{} optional offset. Returns real chunk IDs.\newline - search: project="MultiDoc",\allowbreak{} question,\allowbreak{} pattern. Returns source excerpts.\newline - read\_\allowbreak{}source: project="MultiDoc",\allowbreak{} question,\allowbreak{} chunks=[one to three real IDs].\newline - read\_\allowbreak{}artifact: attempt\_\allowbreak{}id. Read each actual input\textquotesingle{}s exact version before use.
\par
Pure local calculation (zero board actions):\newline Use the exact summary.local\_\allowbreak{}cache object returned by your complete summary read:\newline \{"cache":\{"version":"the actual returned version",\allowbreak{}"parameter\_\allowbreak{}sha256":"the actual returned hash"\},\allowbreak{}"comparisons":[\{"kind":"B",\allowbreak{}"worker":1,\allowbreak{}"function":"MultiDoc/\allowbreak{}extract"\},\allowbreak{}\{"kind":"A",\allowbreak{}"worker":1,\allowbreak{}"peer":2\}]\}\newline Send that actual JSON (replace the two token placeholders) with exactly:\newline \textasciigrave{}\textasciigrave{}\textasciigrave{}sh\newline python3 local\_\allowbreak{}compute.py \textless{}\textless{}\textquotesingle{}JSON\textquotesingle{}\newline ...the actual JSON request described above...\newline JSON\newline \textasciigrave{}\textasciigrave{}\textasciigrave{}\newline The tool reads only the immutable parameter snapshot already obtained by this same bound call,\allowbreak{} then computes over it. You do not need to copy 140 numbers into each command. No caller-supplied path,\allowbreak{} other session,\allowbreak{} live board database/\allowbreak{}state,\allowbreak{} answer file,\allowbreak{} network,\allowbreak{} arbitrary code or board write is permitted. Alternatively,\allowbreak{} explicit summary.parameters plus comparisons may be supplied as numerical data; the trace checks the served version and hash. If local display or audit records are missing,\allowbreak{} continue your work in this original session; do not echo all parameters or repeat actions to supply audit evidence. Batch genuine comparisons (up to 4096 per request) and reuse the same token. Results give inner products and exp(beta\_\allowbreak{}i+gamma\_\allowbreak{}k+z\_\allowbreak{}i dot w\_\allowbreak{}k) or exp(alpha\_\allowbreak{}i+alpha\_\allowbreak{}j+z\_\allowbreak{}i dot z\_\allowbreak{}j),\allowbreak{} excluding self-cooperation. Numeric comparison/\allowbreak{}choice remains yours; the tool makes no work or role assignment. Pure calculations are recorded from command trace,\allowbreak{} consume session time/\allowbreak{}model usage and create no participation/\allowbreak{}cooperation. History-only has the identical tool but receives no fitted packet or token; do not fabricate them.
\par
Propose:\newline \{"action":"propose",\allowbreak{}"project":"MultiDoc",\allowbreak{}"question":"exact ID",\allowbreak{}\newline  "expected\_\allowbreak{}types":["extract",\allowbreak{}"compare"],\allowbreak{}"description":"specific gap",\allowbreak{}\newline  "acceptance":"inspectable result",\allowbreak{}"inputs":[],\allowbreak{}"parent":null,\allowbreak{}"tags":[],\allowbreak{}"purpose":"evidence",\allowbreak{}\newline  "target\_\allowbreak{}attempt":null,\allowbreak{}"follow\_\allowbreak{}up\_\allowbreak{}reason":null\}\newline purpose is evidence,\allowbreak{} first\_\allowbreak{}answer,\allowbreak{} review or correction. Final-answer synthesis must use first\_\allowbreak{}answer or correction. review is a focused task with review in expected\_\allowbreak{}types and target\_\allowbreak{}attempt set to its exact input artifact. correction requires follow\_\allowbreak{}up\_\allowbreak{}reason identifying the concrete issue.\newline Expected types describe intentions only. Use a separate work/\allowbreak{}task per question\newline when batching. Use existing work rather than reposting it as a new task.
\par
Claim:\newline \{"action":"claim",\allowbreak{}"task\_\allowbreak{}id":"T-...",\allowbreak{}"reason":"why this work now",\allowbreak{}\newline  "partner":null,\allowbreak{}"question\_\allowbreak{}revision":"from question\_\allowbreak{}status",\allowbreak{}"follow\_\allowbreak{}up\_\allowbreak{}reason":null,\allowbreak{}"summary\_\allowbreak{}use":\{"status":"unavailable",\allowbreak{}"version":null,\allowbreak{}\newline  "alternatives":[],\allowbreak{}"interpretation":"concrete history or availability reason"\}\}\newline Submitted-question work is permitted before full coverage; claiming it requires a concrete follow\_\allowbreak{}up\_\allowbreak{}reason for an error,\allowbreak{} conflict or quality benefit relative to uncovered work. Already-reviewed exact targets also need a new substantive reason. First-answer,\allowbreak{} version and duplicate-claim checks remain.\newline summary\_\allowbreak{}use for an available model is,\allowbreak{} for example:\newline \{"status":"used",\allowbreak{}"version":"joint-c32",\allowbreak{}"interpretation":"actual numerical basis and current work need"\}\newline Allowed statuses are used,\allowbreak{} considered\_\allowbreak{}not\_\allowbreak{}used,\allowbreak{} single\_\allowbreak{}option and unavailable. Available-model choices require one complete read of this version in this session and legal selected task/\allowbreak{}worker identities. The actual CLAIM/\allowbreak{}INVITE records the selected object; no previously returned candidate pair or competitor list is required. Optional alternatives/\allowbreak{}adopted values are descriptive. unavailable requires no fitted model and null version. Always read actual input artifacts/\allowbreak{}exact versions and needed source text before reuse; statistics do not replace evidence.
\par
Prefer omitting alternatives and adopted values: the three-field status/\allowbreak{}version/\allowbreak{}interpretation object above is sufficient. If supplied,\allowbreak{} alternatives is a list of nonempty descriptive strings or objects \{worker:1,\allowbreak{}function:"MultiDoc/\allowbreak{}extract"\} /\allowbreak{} \{worker:1,\allowbreak{}task\_\allowbreak{}id:"actual T-..."\}; structured identities are validated. Text is saved as an unverified declaration,\allowbreak{} not proof of a calculation or successful use. Put the actual comparison and reason in interpretation.
\par
Finish:\newline \{"action":"finish",\allowbreak{}"attempt\_\allowbreak{}id":"A-...",\allowbreak{}"lease":"returned lease",\allowbreak{}\newline  "artifact":\{"text":"actual result summary",\allowbreak{}"inputs":[],\allowbreak{}"input\_\allowbreak{}uses":\{\},\allowbreak{}\newline  "source\_\allowbreak{}quotes":[\{"project":"MultiDoc",\allowbreak{}"chunk":"real chunk ID",\allowbreak{}"quote":"literal text"\}],\allowbreak{}\newline  "final\_\allowbreak{}choice":null,\allowbreak{}"results":[\{"id":"e1",\allowbreak{}"kind":"evidence",\allowbreak{}\newline  "scope":"standalone",\allowbreak{}"evidence\_\allowbreak{}refs":["Q0"],\allowbreak{}"data":\{"facts":["source-located fact"]\}\}]\}\}
\par
There are four task types: extract,\allowbreak{} compare,\allowbreak{} inference,\allowbreak{} review.\newline results[].kind uses the six delivery structures below; scope is standalone or supporting.\newline Only standalone results add task types. All delivery data fields are required and must\newline be nonempty. evidence\_\allowbreak{}refs may be empty only for review; artifact.inputs and input\_\allowbreak{}uses are empty when no published input is used. Use explicit "not applicable" or uncertainty where appropriate;\newline do not fabricate findings to fill fields. Lists below contain strings unless\newline otherwise stated. evidence\_\allowbreak{}refs lists Q0/\allowbreak{}Q1/\allowbreak{}... for source\_\allowbreak{}quotes or actual input\newline A-... IDs. To ground an unsuccessful search,\allowbreak{} include artifact.search\_\allowbreak{}records as\newline [\{"question":"exact ID",\allowbreak{}"pattern":"actual searched pattern"\}] and cite S0/\allowbreak{}S1/\allowbreak{}...;\newline the board checks that you performed that search in this session. Describe the\newline search scope and uncertainty; no hit from one query is not proof of absence. Every non-review result needs evidence\_\allowbreak{}refs. Each artifact input must\newline be used by a result or be its review target,\allowbreak{} and input\_\allowbreak{}uses maps every input ID to\newline an explanation of its actual contribution. Supporting results do not add types.
\par
- evidence -\textgreater{} extract: data.facts (list).\newline   source\_\allowbreak{}quotes use short contiguous text from the exact cited chunk. Only whitespace and seven Latin PDF ligatures (ﬀ/\allowbreak{}ff,\allowbreak{} ﬁ/\allowbreak{}fi,\allowbreak{} ﬂ/\allowbreak{}fl,\allowbreak{} ﬃ/\allowbreak{}ffi,\allowbreak{} ﬄ/\allowbreak{}ffl,\allowbreak{} ﬅ/\allowbreak{}st,\allowbreak{} ﬆ/\allowbreak{}st) are equivalent. No punctuation,\allowbreak{} case,\allowbreak{} hyphen,\allowbreak{} wording,\allowbreak{} cross-chunk or fuzzy substitution is accepted. The rule is recorded in each published source\_\allowbreak{}quote\_\allowbreak{}checks. Quote failures identify quote\_\allowbreak{}index/\allowbreak{}chunk; unread-source/\allowbreak{}input failures name the precise read needed. If a read peer artifact is the evidence,\allowbreak{} cite its A-... ID with input\_\allowbreak{}uses rather than copying its original-source quotes without a source read in this call.\newline - comparison -\textgreater{} compare: data.objects (at least two distinct names),\allowbreak{} dimensions\newline   (list),\allowbreak{} findings (list).\newline - calculation -\textgreater{} inference: data.inputs (list),\allowbreak{} procedure (string),\allowbreak{} result\newline   (string),\allowbreak{} units (string),\allowbreak{} time\_\allowbreak{}basis (string).\newline - inference -\textgreater{} inference: data.premises (list),\allowbreak{} reasoning (list),\allowbreak{} conclusion (string),\allowbreak{}\newline   conditions (string).\newline - synthesis -\textgreater{} inference: data.choice (A/\allowbreak{}B/\allowbreak{}C/\allowbreak{}D),\allowbreak{} input\_\allowbreak{}uses (list of actual input\newline   IDs),\allowbreak{} rationale (string),\allowbreak{} remaining\_\allowbreak{}uncertainty (string). Set artifact.final\_\allowbreak{}choice\newline   to the same choice. Standalone only. Use at least one relevant other-worker\newline   artifact and explain its contribution; it cannot be a decorative citation.\newline - review -\textgreater{} review: data.target (input A-... ID),\allowbreak{} version (integer),\allowbreak{} checks (list),\allowbreak{}\newline   findings (string),\allowbreak{} verdict (PASS/\allowbreak{}FAIL). Standalone only; target another worker\textquotesingle{}s\newline   exact version. This automatically publishes the peer review and records its\newline   participation. There is no separate review action. Do not create a duplicate\newline   review summary task. Supporting recalculations do not add inference.
\par
calculation,\allowbreak{} inference and synthesis are three delivery formats of the single inference\newline task type. A work containing both contributes only one inference participation.\newline Use inference in expected\_\allowbreak{}types for any of these formats.
\par
After publication,\allowbreak{} results are immutable. An identical finish retry is idempotent;\newline a changed retry is rejected. Reopen is for genuine new corrective work,\allowbreak{} not for\newline reposting a result. Publish actual partial findings and state remaining gaps;\newline completion or correctness is not needed for a useful participation record.\newline In artifact.text,\allowbreak{} make unfinished work easy to continue: identify what this result enables,\allowbreak{} the remaining answer-critical gap and next work with actual required input IDs. Do not create a redundant task or invitation just to record a handoff.
\par
Other actions:\newline - focus: function (extract/\allowbreak{}compare/\allowbreak{}inference/\allowbreak{}review or null),\allowbreak{} reason. Revisable nonexclusive working focus; it creates no participation count.\newline - designate\_\allowbreak{}final: question,\allowbreak{} attempt\_\allowbreak{}id of a published synthesis; peer review is\newline   not required first. Only relevant other-worker inputs satisfy collaboration.\newline   finish alone does not submit an answer. After publishing an eligible answer for an unsubmitted question,\allowbreak{} reserve and perform this action before moving on; result.designated=true confirms submission without an additional polling action. An eligible peer synthesis may also be designated after actually reading its exact version. If a rejection or concurrent change blocks submission,\allowbreak{} address the specific issue within budget or leave the artifact ID,\allowbreak{} blocking reason and next required action in the handoff. Report submission only after confirmation.\newline - reopen: task\_\allowbreak{}id,\allowbreak{} reason (specific supported correction).\newline - invite: worker,\allowbreak{} task\_\allowbreak{}id,\allowbreak{} reason,\allowbreak{} summary\_\allowbreak{}use in the same format as claim.\newline   Invitations are nonexclusive; other workers may accept the work.
\par
Return final JSON with exactly summary (string),\allowbreak{} published\_\allowbreak{}attempts (list of your\newline own actual A-... IDs),\allowbreak{} memory (string),\allowbreak{} useful\_\allowbreak{}work\_\allowbreak{}remaining (boolean). Preserve\newline useful context in memory. Invalid/\allowbreak{}no-progress completions still consume a session.\newline Request construction: in the code-mode host,\allowbreak{} use \textasciigrave{}const payload = \{...\}; const cmd = "python3 board.py \textless{}\textless{}\textquotesingle{}JSON\textquotesingle{}\textbackslash{}n" + JSON.stringify(payload) + "\textbackslash{}nJSON"; text(await tools.exec\_\allowbreak{}command(\{cmd,\allowbreak{} shell:"/\allowbreak{}bin/\allowbreak{}zsh",\allowbreak{} login:false\}));\textasciigrave{}. The command must still contain exactly one documented helper and one quoted heredoc. If a value contains a line exactly \textasciigrave{}JSON\textasciigrave{},\allowbreak{} use its JSON-escaped string representation; never place a literal delimiter line in the body. A malformed board request returns \textasciigrave{}failure\_\allowbreak{}class:ordinary\textasciigrave{},\allowbreak{} \textasciigrave{}error\_\allowbreak{}code:INVALID\_\allowbreak{}BOARD\_\allowbreak{}REQUEST\textasciigrave{},\allowbreak{} a raw-input SHA-256,\allowbreak{} recovery instruction and action-budget header; it consumes one action without publishing or executing the requested action. Correct the payload in the same session. After 32 actions,\allowbreak{} \textasciigrave{}BOARD\_\allowbreak{}ACTION\_\allowbreak{}LIMIT\textasciigrave{} rejects without further execution. A malformed local calculation returns \textasciigrave{}failure\_\allowbreak{}class:ordinary\textasciigrave{},\allowbreak{} \textasciigrave{}error\_\allowbreak{}code:INVALID\_\allowbreak{}LOCAL\_\allowbreak{}REQUEST\textasciigrave{},\allowbreak{} the raw-input SHA-256,\allowbreak{} zero board actions and a correction instruction before any cache read or arithmetic. Local cache/\allowbreak{}calculation errors and uncertain displays are warnings; correct them within this same session. Exclude the acting worker from A comparison peers; use B for worker/\allowbreak{}function comparisons. A rejected query in a condition without a summary exposes no parameters and permits continuing with state/\allowbreak{}history. Missing delivery records are warnings and do not require replay or extra evidence from the worker. Actual forbidden commands,\allowbreak{} confirmed conflicting parameter versions/\allowbreak{}hashes,\allowbreak{} damaged execution provenance and a missing published summary remain integrity failures.
\end{tcolorbox}
\endgroup

\noindent\textbf{Execution instruction.}
\begingroup
\linespread{1}\selectfont
\begin{tcolorbox}[enhanced,breakable,colback=white,colframe=black!45,
  arc=0pt,boxrule=.35pt,left=7pt,right=7pt,top=5pt,bottom=5pt,
  before skip=4pt,after skip=8pt,
  fontupper=\ttfamily\fontsize{9}{11.5}\selectfont,
  before upper={\raggedright\setlength{\parindent}{0pt}\setlength{\parskip}{5pt}}]
EXECUTION PARAMETERS: Use tools.exec\_\allowbreak{}command with shell="/\allowbreak{}bin/\allowbreak{}zsh" and login=false. Omit workdir so it inherits this session workspace; do not invent a directory or shell path. Keep cmd exactly one documented quoted-JSON board.py or local\_\allowbreak{}compute.py invocation. Use the returned recovery instruction to correct ordinary board rejections; do not repeat an unchanged rejected request. If tool output is empty,\allowbreak{} incomplete,\allowbreak{} or lacks ok,\allowbreak{} continue in this same session. For an uncertain write,\allowbreak{} read current state to check whether it already succeeded before any correction; do not blindly repeat writes. Correct local input errors within the existing time/\allowbreak{}action budget.
\end{tcolorbox}
\endgroup

\subsection{Published Results and Count Construction}
\label{app:classification}

Each work record stores the worker who produced the work, the source references, and the exact versions of the published results that the work uses or reviews. A fixed program checks required fields, quotations, and input references, and classifies independently reusable results as in Table~\ref{tab:types}. The classification ignores benchmark correctness.

\begin{table}[!htbp]
\centering\small
\caption{Task types defined by independently reusable results.}
\label{tab:types}
\begin{tabularx}{\textwidth}{@{}lX@{}}
\toprule
Task type & Required independently reusable result \\
\midrule
Extract & Source-located facts, values, conditions, or a documented search result.\\
Compare & Named objects, shared dimensions, and evidence-supported similarities, differences, or conflicts.\\
Inference & Inputs or premises, reasoning, conclusion, and applicable conditions. Answer synthesis also gives an option and explains its use of published results of other workers.\\
Review & The exact version of another worker's published result, checks, findings, evidence, and a PASS or FAIL verdict.\\
\bottomrule
\end{tabularx}
\end{table}

Each work contributes at most once to each task type in $\bm B^{(t)}$. Supporting quotations, review calculations, and self-checks add no separate task types. Several deductions or answer syntheses in one work count once as inference. Identical publication retries add no counts. Using or reviewing published results of other workers contributes once per work and worker--worker pair to $\bm A^{(t)}$; invitations and reads alone add nothing. A work may contribute to multiple task types and pairs.

\subsection{Outcomes and Summary Use}
\label{app:outcomes}

Table~\ref{tab:per-execution} reports replicate outcomes, and Tables~\ref{tab:participation}--\ref{tab:participation-per-execution} summarize participation. All six trajectories designate answers for all 125 questions. All 32 workers perform extract and inference; compare involves 31--32 workers and review 25--31. PASS concerns a review of the designated answer version and differs from benchmark correctness.

\begin{table}[!htbp]
\centering\small
\setlength{\tabcolsep}{4pt}
\caption{Final results by replicate. Difference is in percentage points; PASS columns count designated answers with a PASS verdict (AgentCue/History only).}
\label{tab:per-execution}
\begin{tabular}{@{}crrrrrrr@{}}
\toprule
Replicate & \multicolumn{2}{c}{AgentCue} & \multicolumn{2}{c}{History only} & Difference & PASS & Correct + PASS\\
 & Correct & Accuracy (\%) & Correct & Accuracy (\%) & & & \\
\midrule
R1 & 76 & 60.8 & 71 & 56.8 & 4.0 & 36/36 & 22/20 \\
R2 & 77 & 61.6 & 72 & 57.6 & 4.0 & 33/41 & 18/23 \\
R3 & 75 & 60.0 & 73 & 58.4 & 1.6 & 92/47 & 55/24 \\
\bottomrule
\end{tabular}
\end{table}

\begin{table}[!htbp]
\centering\small
\caption{Participation and collaboration (mean $\pm$ sample standard deviation). Task-type counts are column sums of $\bm B^{(T)}$; collaboration is $\sum_{i<\ell}A_{i\ell}^{(T)}$.}
\label{tab:participation}
\begin{tabular}{@{}lrr@{}}
\toprule
Measure & AgentCue & History only \\
\midrule
Published works & $359.7 \pm 26.8$ & $348.0 \pm 2.6$ \\
Extract & $160.7 \pm 3.2$ & $162.3 \pm 11.2$ \\
Compare & $140.0 \pm 18.4$ & $153.3 \pm 6.4$ \\
Inference & $246.0 \pm 56.0$ & $250.7 \pm 51.4$ \\
Review & $71.3 \pm 31.0$ & $57.0 \pm 5.3$ \\
Collaboration count & $284.7 \pm 46.9$ & $248.7 \pm 3.2$ \\
Distinct collaborating pairs & $209.3 \pm 26.0$ & $190.3 \pm 1.5$ \\
\bottomrule
\end{tabular}
\end{table}

\begin{table}[!htbp]
\centering\small
\caption{Participation by replicate. E/C/I/R denote extract/compare/inference/review; the worker column counts contributors to each task type.}
\label{tab:participation-per-execution}
\begin{tabular}{@{}clrrrrrrr@{}}
\toprule
Replicate & Method & Works & E & C & I & R & Workers E/C/I/R & Collaboration \\
\midrule
R1 & AgentCue & 350 & 163 & 119 & 290 & 56 & 32/32/32/28 & 274 \\
R1 & History only & 349 & 175 & 156 & 288 & 51 & 32/32/32/25 & 250 \\
R2 & AgentCue & 339 & 162 & 153 & 183 & 51 & 32/32/32/27 & 244 \\
R2 & History only & 345 & 154 & 146 & 192 & 59 & 32/32/32/29 & 251 \\
R3 & AgentCue & 390 & 157 & 148 & 265 & 107 & 32/31/32/31 & 336 \\
R3 & History only & 350 & 158 & 158 & 272 & 61 & 32/32/32/30 & 245 \\
\bottomrule
\end{tabular}
\end{table}

Each AgentCue branch provides the statistical summary to all 224 new sessions. R1--R3 record 285, 288, and 299 requests for local comparisons, which contain 1,687, 1,559, and 1,784 local comparisons. Recorded choices use the statistical summary 237, 197, and 214 times and consider but decline the statistical summary 58, 84, and 114 times; a session can make several choices. Workers read the first seven summary versions. The final statistical summary follows all worker activity.

\subsection{Numerical Estimation}
\label{app:fit}

Each estimation uses the latent space model \eqref{eq:joint} with $d=2$ on all 32 workers and four task types, including zero-count pairs. Let $\bm x$ stack the entries of $\Theta^{(t)}$. The penalized objective is
\begin{equation}
    f^{(t)}(\bm x)=\frac{\mathcal L^{(t)}(\bm x)}{624}+\frac{0.01}{32}\norm{\bm Z}_{\fro}^2+\frac{0.01}{4}\norm{\bm W}_{\fro}^2,\qquad 624=\binom{32}{2}+32\times4.
\label{eq:empirical-objective}
\end{equation}
Penalty coefficients are fixed throughout. Let $\bm P$ subtract column means from $\bm Z$ and $\bm W$ and the mean from $\bm\beta$, leaving $\bm\alpha$ and $\bm\gamma$ unchanged. Projected updates are
\begin{equation}
 \bm x_{s+1}=\bm P\!\left\{\bm x_s-\eta_s\bm P\nabla f^{(t)}(\bm x_s)\right\}.
\label{eq:projected-update}
\end{equation}
The step size $\eta_s$ in \eqref{eq:projected-update} results from Armijo backtracking, which halves a trial step size until the sufficient-decrease condition with coefficient $10^{-4}$ holds. The first trial step size is $1$. Write $\Delta\bm x_s=\bm x_{s+1}-\bm x_s$ and $\Delta\bm g_s=\bm P\nabla f^{(t)}(\bm x_{s+1})-\bm P\nabla f^{(t)}(\bm x_s)$. Subsequent trial step sizes alternate across iterations between the two Barzilai--Borwein step sizes $\Delta\bm x_s^\top\Delta\bm x_s/(\Delta\bm x_s^\top\Delta\bm g_s)$ and $\Delta\bm x_s^\top\Delta\bm g_s/(\Delta\bm g_s^\top\Delta\bm g_s)$, restricted to $[10^{-6},10^4]$; when $\Delta\bm x_s^\top\Delta\bm g_s\le10^{-20}$, the trial step size is $1.5\eta_s$, capped at $10^4$.

The first checkpoint uses five random starts, with seeds 42--46, centered $N(\bm0,\bm I_2)$ latent positions, and node effects initialized from the observed network counts. The smallest finite penalized objective selects the estimate. Later checkpoints start from all parameter blocks of the previous estimate. Each start allows 20,000 iterations, stopping at $\|\bm P\nabla f^{(t)}(\bm x_s)\|_\infty\leq10^{-5}$. All 24 estimates, eight checkpoints in each of the three AgentCue trajectories, satisfy this stopping rule (Table~\ref{tab:diagnostics}). The local stopping error in Supplementary~\ref{sec:optproof} converts this stopping rule into bounds on the distance to the exact minimizer and on optimization error.

\begin{table}[!htbp]
\centering\footnotesize
\setlength{\tabcolsep}{3pt}
\caption{Estimation diagnostics over eight checkpoints. Centering columns give maximum absolute sums of latent-position columns or node effects.}
\label{tab:diagnostics}
\begin{tabular}{@{}lrrrrrr@{}}
\toprule
Method & Estimates & Iterations & Gradient norm & $\bm Z$ & $\bm W$ & $\bm\beta$ \\
\midrule
AgentCue & 24 & 111--469 & $2.33\times10^{-6}$--$9.95\times10^{-6}$ & $1.33\times10^{-15}$ & $1.11\times10^{-16}$ & $6.22\times10^{-15}$ \\
\bottomrule
\end{tabular}
\end{table}

\subsection{Computational Resources}
\label{app:resources}

Table~\ref{tab:resources} reports the final token use and elapsed time of each 256-session trajectory. Cached input and reasoning output are subsets of input and output; cache-write input is zero. Worker spans, that is, the elapsed times of worker activity, cover the 224 new branch sessions and exclude the warm-up, the intervals between branches, and the estimation after completion. Warm-up spans are 11.57, 11.56, and 9.60 minutes. The seven additional estimations in the AgentCue branch take 0.307, 0.297, and 0.337 seconds per replicate, and estimation at intermediate checkpoints overlaps worker activity. AgentCue uses more tokens and elapsed time and improves accuracy under the common effective-session budget.

\begin{table}[!htbp]
\centering\small
\caption{Computational resources (mean $\pm$ sample standard deviation). Tokens include warm-up; worker spans exclude it. Additional estimation time covers checkpoints 64--256; History only involves no estimation.}
\label{tab:resources}
\begin{tabular}{@{}lrr@{}}
\toprule
Measure & AgentCue & History only \\
\midrule
Input tokens (millions) & $415.97 \pm 2.04$ & $390.06 \pm 2.93$ \\
Cached input (millions) & $392.85 \pm 1.91$ & $367.76 \pm 2.92$ \\
Output tokens (millions) & $2.05 \pm 0.04$ & $1.94 \pm 0.04$ \\
Reasoning output (millions) & $0.60 \pm 0.01$ & $0.56 \pm 0.02$ \\
Post-warm-up worker span (minutes) & $29.81 \pm 0.70$ & $24.33 \pm 3.35$ \\
Additional estimation time (seconds) & $0.314 \pm 0.021$ & -- \\
\bottomrule
\end{tabular}
\end{table}

\subsection{Figure Construction and Choice Record}
\label{app:evolution}

Figure~\ref{fig:progress}(a,b) reports means and sample standard deviations across R1--R3. Domain accuracies use their question counts; checkpoint accuracies use all 125 questions. Figure~\ref{fig:workflow} jointly aligns R1's worker and task-type latent positions to the statistical summary at the final checkpoint by orthogonal Procrustes, preserving inner products. Linear axes and insets retain the original scale. Session opening and closing times map to the cumulative number of completed effective sessions. Equal-height stripes mark task types of published work; gray marks no publication. The eight statistical summaries are published at their checkpoints.

Figure~\ref{fig:progress}(c) follows one recorded choice in R1. W26 publishes evidence version 1 at 22 completed effective sessions. W15 reads the evidence and the statistical summary at the first checkpoint, computes the four fitted participation means shown, and identifies inference as the task type with the largest fitted participation mean of W15. The recorded reason of W15 also cites adequate evidence published by other workers and the absence of an answer synthesis in progress. W15 uses version 1 of the result of W26 to designate the correct answer B while new sessions remain active, at 32 completed effective sessions. The linked read of the statistical summary, local comparison, recorded choice, and answer illustrate how the statistical summary enters a choice.

\FloatBarrier
\endgroup

\end{document}